\newtheorem{theorem}{Theorem}
\newtheorem{lemma}[theorem]{Lemma}
\newtheorem{proposition}[theorem]{Proposition}
\newtheorem{corollary}[theorem]{Corollary}
\newenvironment{remark}[1][Remark]{\begin{trivlist}
\item[\hskip \labelsep {\bfseries #1}]}{\end{trivlist}}
\newenvironment{definition}[1][Definition.]{\begin{trivlist}
\item[\hskip \labelsep {\bfseries #1}]}{\end{trivlist}}
\newenvironment{fact}[1][{\bf Fact}.]{\begin{trivlist}
\item[\hskip \labelsep {\bfseries #1}]}{\end{trivlist}}
\newcommand{\remove}[1]{}
\newcommand{\eqdef}{~\stackrel{\mbox{\tiny \textnormal{def}}}{=}~}
\newcommand{\tup}[1]{\left\langle #1 \right\rangle} 
\newcommand{\pr}[2][{}]{\mathrm{Pr}_{#1}\!\left[ #2 \right]}
\newcommand{\expect}[2][{}]{\mathbb{E}_{#1}\!\left[ #2 \right]}
\newcommand{\var}[2][{}]{\mathrm{Var}_{#1}\!\left[ #2 \right]}
\newcommand{\set}[1]{\left\{#1\right\}}
\newcommand{\ket}[1]{\left| #1 \right\rangle} 
\newcommand{\bra}[1]{\left\langle #1 \right|} 
\newcommand{\ints}{\mathbb{Z}}  
\newcommand{\coms}{\mathbb{C}}  
\newcommand{\aut}{\mathrm{Aut}} 
\newcommand{\tr}{\mathrm{Tr}} 
\newcommand{\Ind}{\mathrm{Ind}} 
\newcommand{\GL}{\mathsf{GL}} 
\newcommand{\PGL}{\mathsf{PGL}} 
\newcommand{\FF}{\mathbb{F}}
\title{The McEliece Cryptosystem\\ Resists Quantum Fourier Sampling Attacks}
\author{Hang Dinh\\ 
Indiana University South Bend\\
\textsf{hdinh@cs.iusb.edu} 
\and
Cristopher Moore\\
University of New Mexico\\
and Santa Fe Institute\\
\textsf{moore@cs.unm.edu} 
\and  
Alexander Russell\\
University of Connecticut\\
\textsf{acr@cse.uconn.edu}
}
\begin{document}

\maketitle              
\thispagestyle{empty}

\begin{abstract}
Quantum computers can break the RSA and El Gamal public-key cryptosystems, since they can factor integers and extract discrete logarithms.  If we believe that quantum computers will someday become a reality, we would like to have \emph{post-quantum} cryptosystems which can be implemented today with classical computers, but which will remain secure even in the presence of quantum attacks.  

In this article we show that the McEliece cryptosystem over \emph{well-permuted, well-scrambled} linear codes resists precisely the attacks to which the RSA and El Gamal cryptosystems are vulnerable---namely, those based on generating and measuring coset states.  This eliminates the approach of strong Fourier sampling on which almost all known exponential speedups by quantum algorithms are based.  Specifically, we show that the natural case of the Hidden Subgroup Problem to which the McEliece cryptosystem reduces cannot be solved by strong Fourier sampling, or by any measurement of a coset state.  We start with recent negative results on quantum algorithms for Graph Isomorphism, which are based on particular subgroups of size two, and extend them to subgroups of arbitrary structure, including the automorphism groups of linear codes.  
This allows us to obtain the first rigorous results on the security of the McEliece cryptosystem in the face of quantum adversaries, strengthening its candidacy for post-quantum cryptography.  
\end{abstract}
\newpage
\pagenumbering{arabic}
\section{Introduction}

Considering that common public-key cryptosystems such as RSA and El Gamal are insecure against quantum attacks, the susceptibility of other well-studied public-key systems to such attacks is naturally of fundamental interest. In this article we present evidence for the strength of the McEliece cryptosystem against quantum attacks, demonstrating that the quantum Fourier sampling attacks that cripple RSA and El Gamal do not apply to the McEliece system coupled with \emph{well-permuted}, \emph{well-scrambled} linear codes. While our results do not rule out other quantum (or classical) attacks, they do demonstrate security against the hidden subgroup methods that have proven so powerful for computational number theory.  Additionally, we partially extend results of~\citet{Ref_Kempe07permutation} concerning the subgroups of $S_n$ reconstructible by quantum Fourier sampling. 

\paragraph{The McEliece cryptosystem.} This public-key cryptosystem was proposed by McEliece in 1978 \citep{Ref_McEliece78public}, and is typically built over Goppa codes. 
There are two basic types of attacks known against the McEliece cryptosystem:  ciphertext only attacks, and attacks on the private key. The former is unlikely to work because it relies on solving the general decoding problem, which is NP-hard. The latter can be successful on certain classes of  linear codes, and is our focus. 
In the McEliece cryptosystem, the private key of a user Alice consists of three matrices:  a $k\times n$ generator matrix $M$ of a hidden $q$-ary $[n,k]$-linear code, an invertible $k\times k$ matrix $A$ over the finite field $\FF_q$, and  an $n\times n$ permutation matrix $P$. Both matrices $A$ and $P$ are selected randomly. Alice's public key includes the matrix $M^*=AMP$, which is a generator matrix of a linear code equivalent to the secret code. An adversary  may attack  the private key by first computing the secret generator matrix $M$, and then computing\footnote{Recovering the secret scrambler and the secret permutation is different from the Code Equivalence problem. The former finds a transformation between two equivalent codes, while the latter decides whether two linear codes are equivalent.} the secret row ``scrambler'' $A$ and the secret permutation $P$. 

There have been some successful attacks on McEliece-type public-key systems.
A notable one is Sidelnokov and Shestakov's attack \cite{Ref_Sidelnikov92insecurity}, which efficiently computes the matrices $A$ and $MP$ from the public matrix $AMP$, in the case that the secret code is a generalized Reed-Solomon (GRS) code. Note that this attack does not reveal the secret permutation.  An attack in which the secret permutation is revealed was proposed by Loidreau and Sendrier \cite{Ref_Loidreau01weak}. However, this attack  only works with a very limited subclass of classical binary Goppa codes, namely those with a binary generator polynomial.

Although the McEliece cryptosystem is efficient and still considered (classically) secure \cite{Ref_Engelbert07summary}, it is rarely used in practice because of the comparatively large public key (see remark 8.33 in~\citep{Ref_Menezes96handbook}). The discovery of successful quantum attacks on RSA and El Gamal, however, have changed the landscape: as suggested by~\citet{Ref_Ryan07excluding} and Bernstein et al. \cite{Ref_Bernstein08attacking}, the McEliece cryptosystem could become a ``post-quantum'' alternative to RSA.

\remove{ 
There have been a few successful attacks on McEliece-type public-key systems.
A notable one is Sidelnokov and Shestakov's attack \cite{Ref_Sidelnikov92insecurity} on the Niederreiter's public-key cryptosystem that uses generalized Reed-Solomon (GRS) codes. In the Niederreiter's cryptosystem, the private key of Alice is similar to that in the McEliece's, except that the generator matrix $M$ is replaced with an $(n-k)\times n$ check matrix $H$ of the code $C$, and the row scrambler $A_0$ is $(n-k)\times (n-k)$ instead of  $k\times k$.  Sidelnokov and Shestakov's attack on this system computes the matrices $A_0$ and $HP_0$ from the public matrix $A_0HP_0$. Since the generator matrix $M$ of the code $C$ is a check matrix of the dual code of $C$, Sidelnokov and Shestakov's attack is equivalent to the attack on the McEliece cryptosystem above (with $C$ being the dual code of a GRS code) that computes the matrices $A_0$ and $MP_0$ from the public matrix $A_0MP_0$. Note that this attack does not solve the \emph{scrambler-permutation} problem, because it does not reveal the secret permutation.

Following \cite{Ref_Sidelnikov92insecurity} , Faure and Minder \cite{Ref_Faure08Cryptanalysis} proposed a similar attack on the McEliece cryptosystem with $C$ being a geometry Goppa code over hyperelliptic curves of genus 2. In this attack, they assume that the matrix $A_0M$ (which is a random generator matrix for $C$) is known, and recover the structure of the Goppa code $C$, including the generator curve, the list of supporting points, and the divisor that were used to define $C$. Like the attack of \cite{Ref_Sidelnikov92insecurity}, Faure and Minder's attack does not solve the \emph{scrambler-permutation} problem.

An attack in which the secret permutation is revealed was proposed by Loidreau and Sendrier \cite{Ref_Loidreau01weak}. However, their attack  only works in the case the code $C$ is a classical Goppa code defined by a binary polynomial of known degree $t$ and a fixed support set $L$. This attack exhaustively searches for a polynomial $g\in \FF_2[X]$ of degree $t$ such that the classical Goppa code $\Gamma(L, g)$ generated by $g$ and $L$ is equivalent to the code $C^*$ generated by public matrix $M^*$, and then recovers a permutation between $\Gamma(L, g)$ and the public code $C^*$ using the Support Splitting Algorithm.
}

\paragraph{Quantum Fourier sampling.} Quantum Fourier Sampling (QFS) is a key ingredient in most efficient algebraic quantum algorithms, including Shor's algorithms for factorization and discrete logarithm~\citep{Ref_Shor97polynomial} and Simon's algorithm~\citep{Ref_Simon97power}. In particular, Shor's algorithm relies on quantum Fourier sampling over the cyclic group $\ints_N$, while Simon's algorithm uses quantum Fourier sampling over $\ints_2^n$.  In general, these algorithms solve instances of the \emph{Hidden Subgroup Problem} (HSP) over a finite group $G$.  Given a function $f$ on $G$ whose level sets are left cosets of some unknown subgroup $H<G$, i.e., such that $f$ is constant on each left coset of $H$ and distinct on different left cosets, they find a set of generators for the subgroup $H$. 

The standard approach to this problem treats $f$ as a black box and applies $f$ to a uniform superposition over $G$, producing the coset state $\ket{cH}=(\sfrac{1}{\sqrt{|H|}})\sum_{h\in H}\ket{ch}$ for a random $c$.  We then measure $\ket{cH}$ in a Fourier basis $\set{\ket{\rho, i,j}}$ for the space $\coms[G]$, where $\rho$ is an irrep\footnote{Throughout the paper, we write ``irrep'' as short for ``irreducible representation''.} of $G$ and $i,j$ are row and column indices of a matrix $\rho(g)$.  In the \emph{weak} form of Fourier sampling, only the representation name $\rho$ is measured, while in the \emph{strong} form, both the representation name and the matrix indices are measured. This produces probability distributions from which classical information can be extracted to recover the subgroup $H$.  
Moreover, since $\ket{cH}$ is block-diagonal in the Fourier basis, the optimal measurement of the coset state can always be described in terms of strong Fourier sampling.

Understanding the power of Fourier sampling in nonabelian contexts has been an ongoing project, and a sequence of negative results \citep{Ref_Grigni04quantum,Ref_Moore08symmetric,Ref_Hallgren06limitations} have suggested that the approach is inherently limited when the underlying groups are rich enough. In particular, Moore, Russell, and Schulman~\citep{Ref_Moore08symmetric} showed that over the symmetric group, even the strong form of Fourier sampling cannot efficiently distinguish the conjugates of most order-2 subgroups from each other or from the trivial subgroup. That is, 
for any $\sigma \in S_n$ with large support, and most $\pi \in S_n$, 
if $H = \{ 1, \pi^{-1} \sigma \pi \}$ then strong Fourier sampling, and therefore any measurement we can perform on the coset state, 
yields a distribution which is exponentially close to the distribution corresponding to $H=\{1\}$. This result implies that the \textsc{Graph Isomorphism} cannot be solved by the naive reduction to strong Fourier sampling. \citet{Ref_Hallgren06limitations} strengthened these results, demonstrating that even entangled measurements on $o(\log n!)$ coset states result in essentially information-free outcome distributions.  \citet{Ref_Kempe05hidden}  
showed that weak Fourier sampling single coset states in $S_n$ cannot 
distinguish the trivial subgroup from larger subgroups $H$ with polynomial size and non-constant minimal degree.\footnote{The minimal degree of a permutation group $H$ is the minimal number of points moved by a non-identity element of $H$.} 
They conjectured, conversely, that if a subgroup $H<S_n$ can be distinguished from the trivial subgroup by weak Fourier sampling, then the minimal degree of $H$ must be constant.  Their conjecture was later proved by Kempe, Pyber, and Shalev \citep{Ref_Kempe07permutation}.

\subsection{Our contributions} To state our results, we say that a subgroup $H<G$ is \emph{indistinguishable by strong Fourier sampling} 
if the conjugate subgroups $g^{-1}Hg$ cannot be distinguished from each other or from the trivial subgroup by measuring the coset state in an arbitrary Fourier basis.  A precise definition is presented in Section~\ref{Sec:General}.  
Since the optimal measurement of a coset state can always be expressed as an instance of strong Fourier sampling, these results imply that no measurement of a single coset state yields any useful information about $H$.
Based on the strategy of Moore, Russell, and Schulman~\citep{Ref_Moore08symmetric}, we first develop a general framework, formalized in Theorem~\ref{Thm:general}, to determine indistinguishability of a subgroup by strong Fourier sampling.  We emphasize that their results cover the case where the subgroup has order two.  Our principal contribution is to show how to extend their methods to more general subgroups.  

We then apply this general framework to a class of semi-direct products $(\GL_k(\FF_q)\times S_n)\wr \ints_2$, bounding  the distinguishability \remove{by strong Fourier sampling} for the HSP corresponding to the private-key attack on the McEliece cryptosystem, i.e., the problem of determining $A$ and $P$ from $M^*$ and $M$. Our bound, given in Corrolary \ref{Cor:McEliece} of Theorem \ref{Thm:McEliece}, depends on  the minimal degree and the size of the automorphism group of the secret code,  
as well as on  the column rank of the secret generator matrix. In particular, the rational Goppa codes have good values for these quantities, i.e., they have small automorphism groups with large minimal degree, and have generator matrices of full rank.
In general, our result indicates that the McEliece cryptosystem resists all known attacks based on strong Fourier sampling if its secret $q$-ary $[n,k]$-code \textit{(i)} is \emph{well-permuted}, i.e., its automorphism group has minimal degree $\Omega(n)$ and size $e^{o(n)}$, and \textit{(ii)} is \emph{well-scrambled}, i.e., it has a generator matrix of rank at least $k-o(\sqrt{n})$. Here, we assume $q^{k^2}\leq n^{0.2 n}$,  
which implies $\log |\GL_k(\FF_q)|=O(n\log n)$, so that Alice only needs to flip $O(n\log n )$ bits to pick a random matrix $A$ from $\GL_k(\FF_q)$. Thus she needs only $O(n\log n)$ coin flips overall to generate her private key. 

While our main application is the security of the McEliece cryptosystem, we show in addition that our general framework is applicable to other classes of groups with simpler structure, including the symmetric group and the finite general linear group $\GL_2(\mathbb{F}_q)$.  
For the symmetric group, we extend the results of \citep{Ref_Moore08symmetric} to larger subgroups of $S_n$. Specifically, we show that any subgroup $H<S_n$ with minimal degree  $m\geq \Theta(\log |H|) +\omega(\log n)$ is indistinguishable by strong Fourier sampling over $S_n$. 
In other words, if the conjugates of $H$ can be distinguished from each other---or from the trivial subgroup---by strong Fourier sampling, then the minimum degree of $H$ must be $O(\log|H|) + O(\log n)$.  This partially extends the results of \citet{Ref_Kempe07permutation}, which apply only to weak Fourier sampling.

We go on to demonstrate another  application of our general framework for the general linear group $\GL_2(\mathbb{F}_q)$, giving the first negative result regarding the power of strong Fourier sampling  
over $\GL_2(\mathbb{F}_q)$. We show that any subgroup $H< \GL_2(\mathbb{F}_q)$ that does not contain non-identity scalar matrices and has order $|H| \leq q^{\delta}$ for some  $\delta<1/2$ is indistinguishable by strong Fourier sampling. 
Examples of such subgroups are those generated by a constant number of triangular unipotent matrices.

\begin{remark}
Our results show that the natural reduction of McEliece to a hidden subgroup problem yields negligible information about the secret key.  Thus they rule out the direct analogue of the quantum attack that breaks, for example, RSA.  Our results are quite flexible in this hidden-subgroup context: they apply equally well to any HSP reduction resulting in a rich subgroup of $\GL_2(\mathbb{F}_q)$, which seems to be the natural arena for the McEliece system.  

Of course, our results do not rule out other quantum (or classical) attacks. Neither do they establish that a quantum algorithm for the McEliece cryptosystem would violate a natural hardness assumption, as do recent lattice cryptosystem constructions whose hardness is based on the Learning With Errors problem (e.g. \citet{regev-jacm}).  Nevertheless, they indicate that any such algorithm would have to use rather different ideas than those that have been proposed so far.
%
\end{remark}

\subsection{Summary of technical ideas} 
Let $G$ be a finite group.  We wish to establish general criteria for indistinguishability of subgroups $H < G$ by strong Fourier sampling. 
We begin with the general strategy, developed in~\cite{Ref_Moore08symmetric}, that controls the resulting probability distributions in terms of the representation-theoretic properties of $G$. In order to handle richer subgroups, however, we have to overcome some technical difficulties.  Our principal contribution here is a ``decoupling'' lemma that allows us to handle the cross terms arising from pairs of nontrivial group elements.

Roughly, the approach (presented in Section \ref{Sec:General})  identifies two disjoint subsets, \textsc{Small} and \textsc{Large}, of irreps of $G$. The set \textsc{Large} consists of all irreps whose dimensions are no smaller than a certain threshold $D$. While $D$ should be as large as possible, we also need to choose $D$ small enough so that the set \textsc{Large} is large. In contrast, the representations in \textsc{Small} must have small dimension (much smaller than $\sqrt{D}$), and the set \textsc{Small} should be small or contain few irreps that appear in the decomposition of the tensor product representation $\rho\otimes\rho^*$ for any $\rho\in\textsc{Large}$. In addition, any irrep $\rho$ outside $\textsc{Small}$ must have small normalized character $|\chi_{\rho}(h)|/d_{\rho}$ for any nontrivial element $h\in H$.
If there are such two sets \textsc{Small} and \textsc{Large}, and if the order of $H$ is sufficiently small, then $H$ is indistinguishable by strong Fourier sampling over $G$.

In the case $G=\GL_2(\mathbb{F}_q)$, for instance, we choose \textsc{Small} as the set of all linear representations and set the threshold $D=q-1$. The key lemma we need to prove is then that for any nonlinear irrep $\rho$ of $\GL_2(\mathbb{F}_q)$, the decomposition of $\rho\otimes\rho^*$ contains at most two inequivalent linear representations. (Lemma \ref{Lemma:LinearRep_0}).  
In the case $G=S_n$, we choose \textsc{Small} as the set $\Lambda_c$ of all Young diagrams with at least $(1-c)n$ rows or at least $(1-c)n$ columns, and set  $D=n^{dn}$, for reasonable constants $0<c,d<1$. For this case, we use the same techniques as in~\citep{Ref_Moore08symmetric}.

For the case $G=(\GL_k(\FF_q)\times S_n)\wr \ints_2$ corresponding to the McEliece cryptosystem,  the normalized characters on the hidden subgroup $K$ depend on the minimal degree of  the automorphism group $\aut(C)$, where $C$ is the secret code. Moreover,  $|K|$ depends on  $|\aut(C)|$ and the column rank of the secret generator matrix. Now we can choose  \textsc{Small} as the set of all irreps constructed from tensor product representations $\tau\times \lambda$ of  $\GL_k(\FF_q)\times S_n$ with $\lambda\in\Lambda_c$. Then the ``small'' features of $\Lambda_c$ will induce the ``small'' features of this set \textsc{Small}.    To show that any irrep outside \textsc{Small} has small normalized characters on $K$, we show that any Young diagram $\lambda$ outside $\Lambda_c$ has large dimension (Lemma \ref{Lemma:LargeDimS_n}).

\section{Hidden Subgroup Attack Against McEliece Cryptosystems}\label{ApxSec:AttackingMcEliece}
\remove{
The McEliece public-key encryption scheme works by first selecting a particular linear code for which an efficient decoding algorithm is known, and then camouflaging the code as a general linear code. The idea is based on the \textsf{NP}-hardness of decoding an arbitrary linear code. Therefore, a description of the original code can serve as the private key, while a description of the transformed code serves as the public key \citep{Ref_Menezes96handbook}. 

Specifically, in a McEliece cryptosystem with common system parameters given by fixed positive integers $n, k, t$, the private key of an entity Alice is a triple $(A_0,M,P_0)$, where 
\begin{itemize}
\item $M$ is a $k\times n$ generator matrix for a  $q$-ary $[n,k]$-linear code $C$ which can correct $t$ errors, and for which an efficient decoding algorithm is known, 
\item $A_0$ is a $k\times k$ invertible matrix chosen randomly from $\GL_k(\mathbb{F}_q)$, and 
\item $P_0$ is an $n\times n$ permutation matrix corresponding to a permutation chosen randomly from $S_n$. 
\end{itemize}
The public key of Alice then consists of the parameter $t$ and the $k\times n$ matrix ${M}^*=A_0MP_0$, which is a generator matrix for a (general) linear code equivalent to $C$. 
}
\subsection{An attack via the hidden shift problem}


As mentioned in the Introduction, we consider the attack that involves finding the secret scrambler and permutation in a McEliece private key. 
\begin{definition}[Scrambler-Permutation Problem]
Given two $k\times n$  generator matrices $M$ and $M^*$ of two equivalent linear codes over $\FF_q$, the task is to find   a matrix $A\in \GL_k(\FF_q)$ and an $n\times n$ permutation $P$ matrix such that $M^*= AMP$.
\end{definition}

The decision version of this problem, known as \textsc{Code Equivalence} problem, is not easier than  \textsc{Graph Isomorphism}, although it is unlikely to be NP-complete \cite{Ref_Petrank97code}.
The only known way to solve the Scrambler-Permutation problem using quantum Fourier sampling is to reduce it to a Hidden Shift Problem, which in turn can be reduced to a Hidden Subgroup Problem over a wreath product.

\begin{definition}[Hidden Shift Problem]  Let $G$ be a finite group and $\Sigma$ be some finite set. Given two functions $f_0: G\to \Sigma$ and $f_1: G\to \Sigma$ on $G$, we call  an element $s\in G$ a \emph{left shift} from $f_0$ to $f_1$ (or simply, a \emph{shift}) if $f_0(  sx)=f_1(x)$ for all $x\in G$. We are promised that there is such a shift. Find a shift.
\end{definition}

The Scrambler-Permutation Problem is reduced to the Hidden Shift Problem over group $G=\GL_k(\FF_q)\times S_n$ by defining functions $f_0$ and $f_1$ on $\GL_k(\mathbb{F}_q)\times S_n$ as follows: for all $(A,P)\in \GL_k(\FF_q)\times S_n$,
\begin{equation}\label{Eq:McEliece_f_0}
f_0(A,P) = A^{-1}MP\,,
\quad\quad
f_1(A,P) = A^{-1}{M}^*P\,.
\end{equation}
Here and from now on, we identify each $n\times n$ permutation matrix as its corresponding permutation in $S_n$.
Apparently, $AMP=M^*$ if and only if $(A^{-1},P)$ is a shift from $f_0$ to $f_1$. 

\subsection{Reduction from the hidden shift problem to the hidden subgroup problem}

We present how to reduce the Hidden Shift Problem over group $G$ to the HSP on the wreath product $G\wr \ints_2$, which can also be written as a semi-direct product $G^2\rtimes \ints_2$ associated with the action of $\ints_2$ on $G^2$ in which the non-identity element of $\ints_2$ acts on $G^2$ by swapping, i.e.,  $1\cdot (x,y)=(y,x)$. 
\remove{
Hence, the multiplication on $G^2\rtimes \ints_2$ is defined as follows:
\[
((x_1,y_1), 0)((x_2, y_2), b) = ((x_1x_2,y_1y_2), b)
\]
\[
((x_1,y_1), 1)((x_2, y_2), b) = ((x_1y_2,y_1x_2), b+1)\,.
\]
The inversion is given by $((x,y),b)^{-1} = (b\cdot (x^{-1}, y^{-1}),b)$. Hence,
\[
((x,y),0)^{-1} = ((x^{-1}, y^{-1}),0)
\]
\[
((x,y),1)^{-1} = ((y^{-1}, x^{-1}),1)
\]
}

Given two input functions $f_0$ and $f_1$ for a Hidden Shift Problem on $G$, we define the function $f:G^2\rtimes \ints_2\to \Sigma\times \Sigma$ as follows: for $(x,y)\in G^2, b\in\ints_2$,

\begin{equation}\label{Eq:function_f}
f((x,y),b) \eqdef 
\begin{cases}
 (f_0(x),f_1(y)) & \text{ if } b=0\\
 (f_1(y),f_0(x)) & \text{ if } b=1\\
\end{cases}
\end{equation}

\remove{
where $\alpha$ is any hidden shift. Another option for function $f$ is
\begin{equation}\label{Eq:function_f_new}
f((x,y),b) \eqdef 
\begin{cases}
 (f_0(x),f_1(y)) & \text{ if } b=0\\
 (f_0(y),f_1(x)) & \text{ if } b=1\\
\end{cases}
\end{equation}
}

We want to determine the subgroup whose cosets are distinguished by $f$. Recall that a function $f$ on a group $G$ \emph{distinguishes the right cosets} of a subgroup $H<G$ if for all $x,y\in G$,
\[
f(x)=f(y)\iff yx^{-1}\in H \,.
\]

\begin{definition}
Let $f$ be a function on a group $G$. We say that the function $f$ is \emph{injective under right multiplication} if for all $x,y\in G$,
\[
f(x)=f(y)\iff f(yx^{-1})=f(1) \,.
\]
Define the subset $G|_f$ of the group $G$: $$G|_f\eqdef  \set{g\in G\mid f(g)=f(1)}\,.$$
\end{definition}
\remove{
\begin{fact}
Let $f$ be a function on a group $G$. If $f$ is injective under right multiplication then 
\begin{enumerate}
\item The subset $G|_f\eqdef  \set{g\in G\mid f(g)=f(1)}$ is a subgroup of $G$. 
\item $f(x)=f(y) \Rightarrow f(xg)=f(yg) ~\forall g\in G$.
\item $f(x)=f(y) \Leftarrow f(xg)=f(yg)$ for some  $g\in G$.
\end{enumerate}
\end{fact}
}
\begin{proposition}
Let $f$ be a function on a group $G$. If $f$ distinguishes the right cosets of a subgroup $H<G$, then $f$ must be injective under right multiplication and $G|_f=H$.
Conversely, if $f$ is injective under right multiplication, then $G|_f$ is a subgroup and $f$ distinguishes the right cosets of the subgroup $G|_f$.
\end{proposition}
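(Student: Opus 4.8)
The plan is purely definitional: unwind the three notions involved (distinguishing right cosets, injectivity under right multiplication, and the set $G|_f$) and verify each implication by direct substitution. No representation-theoretic machinery from Section~\ref{Sec:General} is needed.

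For the forward direction, I would assume $f$ distinguishes the right cosets of $H$, i.e.\ $f(x)=f(y)\iff yx^{-1}\in H$ for all $x,y\in G$. Instantiating this with $x=1$ immediately gives $f(g)=f(1)\iff g\in H$, so $G|_f=H$. For injectivity under right multiplication, fix $x,y\in G$ and apply the hypothesis twice: once with the pair $(x,y)$ and once with the pair $(1,\,yx^{-1})$. This shows that both $f(x)=f(y)$ and $f(yx^{-1})=f(1)$ are equivalent to the single membership condition $yx^{-1}\in H$, hence equivalent to each other, which is exactly injectivity under right multiplication.

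For the converse, I would assume $f$ is injective under right multiplication and set $H\eqdef G|_f=\set{g\in G\mid f(g)=f(1)}$, then verify $H$ is a subgroup. Clearly $1\in H$. If $g\in H$, instantiating the injectivity equivalence with the pair $(g,1)$ yields $f(g)=f(1)\iff f(g^{-1})=f(1)$, so $g^{-1}\in H$. If $g,h\in H$, then $g^{-1}\in H$ by the previous step, so $f(g^{-1})=f(1)=f(h)$; instantiating injectivity with the pair $(g^{-1},h)$ gives $f(hg)=f(1)$, i.e.\ $hg\in H$, and swapping the roles of $g$ and $h$ gives $gh\in H$. Thus $H$ is a subgroup. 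Finally, for all $x,y\in G$ we have $f(x)=f(y)\iff f(yx^{-1})=f(1)\iff yx^{-1}\in H$, the first equivalence being injectivity under right multiplication and the second the definition of $G|_f$; hence $f$ distinguishes the right cosets of $H=G|_f$.

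The only step that calls for a little care is closure of $G|_f$ under products in the converse: to use the injectivity equivalence one must first rewrite $f(g)=f(h)$ (with $g,h\in H$) as $f(g^{-1})=f(h)$, so that $h(g^{-1})^{-1}=hg$ is the product one wants, which is why one routes through the already-established fact that inverses lie in $G|_f$. This is a one-line bookkeeping maneuver rather than a genuine obstacle; the proposition is essentially a translation between the "coset" and "injective under right multiplication" viewpoints.
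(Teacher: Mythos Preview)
Your proof is correct. The paper states this proposition without proof, treating it as a routine definitional observation; your argument is exactly the natural unwinding of the definitions that the authors leave implicit, and every step checks out.
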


Hence, the function $f$ defined in \eqref{Eq:function_f} can distinguish the right cosets of some subgroup if and only if it is injective under right multiplication. 

\begin{lemma}
The function $f$ defined in \eqref{Eq:function_f} is injective under right multiplication if and only if $f_0$ is injective under right multiplication.
\end{lemma}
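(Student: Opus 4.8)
The statement is an equivalence whose two directions have quite different characters, and the plan is to dispatch ``only if'' by a trivial restriction and to spend the real effort on ``if''. For ``only if'', suppose $f$ is injective under right multiplication and restrict attention to the subgroup $\set{((x,1),0):x\in G}$ of $G^2\rtimes\ints_2$, which is a copy of $G$. For $u=((x,1),0)$ and $v=((x',1),0)$ one has $f(u)=(f_0(x),f_1(1))$, $f(v)=(f_0(x'),f_1(1))$, $f(1)=(f_0(1),f_1(1))$, and $vu^{-1}=((x'x^{-1},1),0)$; so the defining equivalence for $f$, specialized to these elements, reads exactly $f_0(x)=f_0(x')\iff f_0(x'x^{-1})=f_0(1)$, i.e.\ $f_0$ is injective under right multiplication. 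Note this half uses nothing about the shift.

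For ``if'', assume $f_0$ is injective under right multiplication. By the preceding Proposition, $H_0:=G|_{f_0}$ is a subgroup and $f_0$ distinguishes the right cosets of $H_0$, i.e.\ $f_0(a)=f_0(b)\iff ba^{-1}\in H_0$. Fix any shift $s$, so that $f_1(x)=f_0(sx)$ for all $x\in G$ and in particular $f_1(1)=f_0(s)$. The plan is then to verify the equivalence $f(u)=f(v)\iff f(vu^{-1})=f(1)$ for arbitrary $u=((x,y),b)$ and $v=((x',y'),b')$ by a four-way case split on $(b,b')\in\ints_2^2$. In each case: compute $vu^{-1}$ from the multiplication and inversion rules of the wreath product; substitute $f_1(x)=f_0(sx)$ throughout; and use that $f_0$ distinguishes the right cosets of $H_0$ to rewrite \emph{both} $f(u)=f(v)$ and $f(vu^{-1})=f(1)$ as a conjunction of two membership statements in $H_0$. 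It then remains to observe that these two conjunctions coincide.

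The cases $b=b'$ are essentially coordinatewise (the $\ints_2$-component of $vu^{-1}$ is $0$, so $f(vu^{-1})$ is read off the ``unswapped'' branch of \eqref{Eq:function_f}) and follow because $f_0$ and its shifted copy $f_1$ each distinguish their respective cosets. The only delicate part --- and the place where the wreath-product structure actually intervenes --- is the mixed case $b\ne b'$: here the $\ints_2$-component of $vu^{-1}$ equals $1$, so $f(vu^{-1})$ comes from the swapped branch of \eqref{Eq:function_f}, and the $\ints_2$-twist permutes which of $x,y,x',y'$ lands in each coordinate. The bookkeeping one has to get right is that the pair of $H_0$-membership conditions extracted from $f(u)=f(v)$ equals the pair extracted from $f(vu^{-1})=f(1)$; matching them uses the shift identities $f_1(x)=f_0(sx)$ and $f_1(1)=f_0(s)$ together with closure of $H_0$ under inversion (to identify a condition of the form $g\in H_0$ appearing on one side with $g^{-1}\in H_0$ appearing on the other). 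I expect this mixed case to be the main obstacle; everything else is mechanical. A noteworthy feature is that one never needs $f_1$ to be injective under right multiplication on its own --- the single hypothesis on $f_0$ is enough to propagate through the $\ints_2$-extension, which is precisely what makes this reduction usable.
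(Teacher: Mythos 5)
Your proposal is correct and matches the approach the paper intends (the paper omits the proof as ``straightforward on a case-by-case basis,'' and a commented-out version in the source carries out exactly the four-way split on $(b,b')$ using the shift identity $f_1(x)=f_0(sx)$ and injectivity of $f_0$ under right multiplication, with the same one-line restriction argument for the converse). Your observation that the mixed case $b\ne b'$ hinges on identifying $g\in H_0$ with $g^{-1}\in H_0$ is precisely the point that makes that case work.
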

The proof for this lemma is straightforward on the case by case basis, so we omit it here.

\remove{
\begin{proof} Fix a left shift $\alpha$ for which $f_1(x)=f_0(\alpha x)$ for all $x\in G$. 
Assume $f_0$ is injective under right multiplication.
Consider arbitrary elements $A=((x,y),b)$ and $B=((x',y'),b')$ of the group $G^2\rtimes \ints_2$. The goal is to show 
\[
f(A)=f(B)\iff f(BA^{-1}) = f(1)\,. 
\]

\textbf{Case $b'=b$:} We have 
\[
BA^{-1} = 
\begin{cases}
((x',y'),0)((x^{-1}, y^{-1}),0) = ((x'x^{-1}, y'y^{-1}), 0)  & \text{if}~ b'=b=0\\
((x',y'),1)((y^{-1}, x^{-1}),1) = ((x'x^{-1}, y'y^{-1}), 0) & \text{if}~ b'=b=1\,.
\end{cases}
\]
Hence, 
\[
f(BA^{-1})=(f_0(x'x^{-1}), f_1(y'y^{-1}))=(f_0(x'x^{-1}), f_0(\alpha y'y^{-1}) )\,.
\]
Since $f_0$ is injective under right multiplication,
\[
\begin{split}
f(BA^{-1})=f(1) 
&\iff f_0(x'x^{-1}) = f_0(1) \text{~and~} f_0(\alpha y'y^{-1}) =f_0(\alpha)\\
&\iff f_0(x'x^{-1}) = f_0(1) \text{~and~} f_0(\alpha y') =f_0(\alpha y)\\
\end{split}
\]
On the other hand, 
\[
\begin{split}
f((x,y),b)= f((x',y'),b) 
&\iff f_0(x)=f_0(x') \text{~and~} f_0(\alpha y )=f_0(\alpha y' )\\
\end{split}
\]
So, in the case $b=b'$, we have $f(A^{-1}B)=f(1)\iff f(A)=f(B)$. 

\textbf{Case $b'\neq b$:} We have
\[
BA^{-1} = 
\begin{cases}
((x',y'),1)((x^{-1},y^{-1}),0) = ((x'y^{-1}, y'x^{-1}),1) & \text{if}~ b=0,~ b'=1\\
((x',y'),0)((y^{-1}, x^{-1}),1) = ((x'y^{-1}, y'x^{-1}),1) & \text{if}~ b=1,~ b;=0\,.
\end{cases}
\]
Thus
\[
f(BA^{-1}) = (f_0(\alpha y'x^{-1}), f_0(x'y^{-1}))\,.
\]
Since $f_0$ is injective under right multiplication,
\[
\begin{split}
f(BA^{-1})=f(1) 
&\iff f_0(\alpha y'x^{-1})=f_0(1)\text{~and~} f_0(x'y^{-1})=f_0(\alpha)\\
&\iff f_0(\alpha y')=f_0(x)\text{~and~} f_0(x')=f_0(\alpha y)\,.\\
\end{split}
\]
On the other hand, WLOG, assume $b=0, b'=1$, we have
\[
\begin{split}
f((x,y),0)= f((x',y'),1) 
&\iff f_0(x)=f_0(\alpha y' )\text{~and~} f_0(\alpha y) = f_0(x' )\,.\\
\end{split}
\]
So,  $f(A)= f(B)\iff f(BA^{-1})=1$. This completes the proof that $f$ is injective under right multiplication.

For the converse direction, assume $f$ is injective under right multiplication, and consider arbitrary $x,y\in G$.
Let $A=((x,1),0)$ and $B=((y,1), 0)$. Then
\[
BA^{-1} = ((y,1_G), 0)((x^{-1},1_G),0) = ((yx^{-1},1_G),0)\,.
\] 
Hence,
\[
f(BA^{-1}) = (f_0(yx^{-1}), f_1(1_G))\,.
\]
Hence,
\[
f_0(x) = f_0(y) \iff f(A) = f(B)\iff f(BA^{-1}) = f(1) \iff f_0(yx^{-1})=f_0(1_G)\,.
\]
It follows that $f_0$ is injective under right multiplication.
\end{proof}
}

\begin{proposition}
Assume $f_0$ is injective under right multiplication. Let $H_0=G|_{f_0}$ and $s$ be a shift. Then the function $f$ defined in \eqref{Eq:function_f} distinguishes right cosets of the following subgroup of $G^2\rtimes \ints_2$:
\[
G^2\rtimes \ints_2|_f = ((H_0, s^{-1}H_0 s ),0) \cup ((H_0 s, s^{-1}H_0  ),1)
\]
which has size $2 |H_0|^2$. Recall that the set of shifts is $H_0s$.
\end{proposition}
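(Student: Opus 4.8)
The plan is to verify the claimed description of $G^2 \rtimes \ints_2|_f$ by a direct computation, leaning on the previous proposition's characterization: since $f$ is injective under right multiplication (by the preceding lemma, as $f_0$ is), $G^2\rtimes\ints_2|_f$ is exactly the set of group elements $g$ with $f(g) = f(1_{G^2\rtimes\ints_2})$, and this set is automatically a subgroup. So the whole task reduces to computing $f(1)$ and then solving the equation $f((x,y),b) = f(1)$ for the two cases $b=0$ and $b=1$.

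First I would record $f(1) = f((1_G,1_G),0) = (f_0(1_G), f_1(1_G))$. Using the hidden-shift relation $f_1(x) = f_0(sx)$ for all $x$ (equivalently $f_0(sx) = f_1(x)$; note $s$ here plays the role of a left shift, and the set of shifts is $H_0 s$ since $f_0(s'x) = f_0(sx)$ for all $x$ iff $s' s^{-1}\in H_0$, using injectivity under right multiplication), I get $f_1(1_G) = f_0(s)$. For the $b=0$ branch, $f((x,y),0) = (f_0(x), f_1(y)) = (f_0(x), f_0(sy))$, and setting this equal to $(f_0(1_G), f_0(s))$ gives, by injectivity under right multiplication, $f_0(x \cdot 1_G^{-1}) = f_0(1_G)$ and $f_0((sy)s^{-1}) = f_0(1_G)$, i.e. $x \in H_0$ and $s y s^{-1} \in H_0$, i.e. $x\in H_0$ and $y \in s^{-1} H_0 s$. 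That yields the component $((H_0, s^{-1}H_0 s), 0)$. For the $b=1$ branch, $f((x,y),1) = (f_1(y), f_0(x)) = (f_0(sy), f_0(x))$, and equating with $(f_0(1_G), f_0(s))$ forces $f_0(sy) = f_0(1_G)$ and $f_0(x) = f_0(s)$, hence $sy\in H_0$ and $x s^{-1}\in H_0$, i.e. $y \in s^{-1}H_0$ and $x \in H_0 s$, giving the component $((H_0 s, s^{-1}H_0), 1)$.

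Finally I would compute the size: each of the two pieces is a product of a left coset and a right coset of $H_0$ (together with a fixed value of $b$), hence has cardinality $|H_0|^2$, and the two pieces are disjoint because they carry different values of $b\in\ints_2$; so $|G^2\rtimes\ints_2|_f| = 2|H_0|^2$. It is worth double-checking consistency with the subgroup structure — e.g. that $((H_0 s, s^{-1}H_0),1)$ is closed under the twisted multiplication and inversion of $G^2\rtimes\ints_2$ — but this is forced once we know the set equals $G^2\rtimes\ints_2|_f$, which the previous proposition guarantees is a subgroup.

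**Main obstacle.** The computation itself is routine; the only real care needed is bookkeeping the direction conventions — left shift versus right shift, and "injective under right multiplication" meaning $f(x)=f(y)\iff f(yx^{-1})=f(1)$ rather than $f(x^{-1}y)$ — and making sure the twisted wreath-product multiplication is applied correctly when simplifying expressions like $((x,y),1)((1,1),0)^{-1}$. Getting these conventions straight is what determines whether the answer comes out as $s^{-1}H_0 s$ versus $s H_0 s^{-1}$ and as $H_0 s$ versus $s H_0$, so I would fix all conventions explicitly at the start and then let the algebra run.
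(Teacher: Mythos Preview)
Your proposal is correct. The paper itself states this proposition without proof, and your argument is precisely the computation the paper's setup invites: invoke the preceding lemma to know $f$ is injective under right multiplication, invoke the preceding proposition to reduce to computing $G^2\rtimes\ints_2|_f=\{g:f(g)=f(1)\}$, and then solve $f((x,y),b)=f(1)$ case by case using the shift relation $f_1(x)=f_0(sx)$ and the definition of $H_0$.
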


To find a hidden shift from the hidden subgroup $K=G^2\rtimes \ints_2|_f$, just select an element of the form $((g_1,g_2),1)$ from $K$, then $g_1$ must belong to $H_0s$, which is the set of all shifts.

\paragraph{In the case of Scrambler-Permutation problem.} Back to the Hidden Shift Problem over $G=\GL_k(\FF_q)\times S_n$ reduced from the Scrambler-Permutation problem, it is clear that the input function $f_0$  defined in \eqref{Eq:McEliece_f_0} is injective under right multiplication and that
\[
H_0 = \GL_k(\mathbb{F}_2)\times S_n|_{f_0} = \set{(A,P)\in \GL_k(\mathbb{F}_2)\times S_n: A^{-1}MP=M}\,.
\]


\section{Quantum Fourier sampling (QFS)}\label{SubSec:QFS}
\subsection{Preliminaries and Notation}

Fix a finite group G, abelian or non-abelian, and let $\widehat{G}$ denote the set of (complex) irreducible representations, or ``irreps'' for short, of $G$.  For each irrep $\rho\in\widehat{G}$, let $V_{\rho}$ denote a vector space over $\coms$ on which $\rho$ acts so that $\rho$ is a group homomorphism from $G$ to the general linear group over $V_{\rho}$, and let $d_{\rho}$ denote the dimension of $V_{\rho}$.  
For each $\rho$, we fix an orthonormal basis $B_{\rho}=\set{\mathbf{b}_1,\ldots,\mathbf{b}_{d_{\rho}}}$ for $V_{\rho}$, in which we can represent each $\rho(g)$ as a $d_{\rho}\times d_{\rho}$ unitary matrix whose $j^{\rm th}$ column is the vector $\rho(g)\mathbf{b}_j$.

Viewing the vector space $\coms[G]$ as the regular representation of $G$, we can decompose $\coms[G]$ into irreps as the direct sum $\bigoplus_{\rho\in\widehat{G}}V_{\rho}^{\oplus d_{\rho}}$.  This has a basis
\(
\{\ket{\rho,i,j}: \rho\in\widehat{G}, 1\leq i,j\leq d_{\rho}\}
\),
where $\{\ket{\rho,i,j}\mid 1\leq i\leq d_{\rho}\}$ is a  basis for the $j^{\rm th}$ copy of $V_{\rho}$ in the decomposition of $\coms[G]$.

\begin{definition}The \emph{Quantum Fourier transform} over $G$ is the unitary operator, denoted $F_G$, that transforms a vector in $\coms[G]$ from the point-mass basis  $\set{\ket{g}\mid g\in G}$ into the basis given by the decomposition of $\coms[G]$.  For all $g\in G$,
\[
F_G\ket{g} = \sum_{\rho, i,j}\sqrt{\frac{d_{\rho}}{|G|}}\rho(g)_{i,j} \,\ket{\rho,i,j}\,,
\]
where $\rho(g)_{ij} $ is the $(i,j)$-entry of the matrix $\rho(g)$. Alternatively, we can view $F_G\ket{g}$ as a block diagonal matrix consisting of the block $\sqrt{{d_{\rho}}/{|G|}} \,\rho(g)$ for each $\rho\in\widehat{G}$.
\end{definition}

\paragraph{Notations.} For each subset $X\subset G$, define $\ket{X}=(\sfrac{1}{\sqrt{|X|}})\sum_{x\in X}\ket{x}$, which is the state of uniformly random element of $X$ in the point-mass basis. For each $X\subset G$ and $\rho\in\widehat{G}$, define the operator
\[
\Pi^{\rho}_X \eqdef \frac{1}{|X|}\sum_{x\in X}\rho(x)\,,
\]
and let $\widehat{X}(\rho)$ denote the $d_{\rho}\times d_{\rho}$ matrix block at $\rho$ in the quantum Fourier transform of $\ket{X}$, i.e.,
\[
\widehat{X}(\rho) \eqdef \sqrt{\frac{d_{\rho}}{|G||X|}}\sum_{x\in X}\rho(x)=\sqrt{\frac{d_{\rho}|X|}{|G|}}\Pi^{\rho}_X\,.
\]
\begin{fact}
If $X$ is a subgroup of $G$, then $\Pi^{\rho}_X$ is 
a projection operator.  That is, $(\Pi^{\rho}_X)^{\dagger}=\Pi^{\rho}_X$ and $(\Pi^{\rho}_X)^2=\Pi^{\rho}_X$.
\end{fact}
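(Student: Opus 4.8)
The plan is to verify the two defining properties of an orthogonal projection directly from the definition $\Pi^{\rho}_X = \frac{1}{|X|}\sum_{x\in X}\rho(x)$, using only two ingredients: that $\rho$ is a \emph{unitary} representation in the fixed orthonormal basis $B_{\rho}$, and that for a fixed group element, left multiplication (and likewise inversion) permutes the elements of the subgroup $X$.

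First I would check idempotence. Expanding the square and using that $\rho$ is a homomorphism gives $(\Pi^{\rho}_X)^2 = |X|^{-2}\sum_{x,y\in X}\rho(x)\rho(y) = |X|^{-2}\sum_{x,y\in X}\rho(xy)$. For each fixed $x\in X$, the map $y\mapsto xy$ is a bijection of $X$ onto itself because $X$ is a subgroup, so the inner sum over $y$ equals $\sum_{z\in X}\rho(z) = |X|\,\Pi^{\rho}_X$. Summing over the $|X|$ choices of $x$ then yields $(\Pi^{\rho}_X)^2 = |X|^{-2}\cdot|X|\cdot|X|\,\Pi^{\rho}_X = \Pi^{\rho}_X$.

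Next I would check self-adjointness. Since each matrix $\rho(g)$ is unitary with respect to $B_{\rho}$, we have $\rho(x)^{\dagger} = \rho(x)^{-1} = \rho(x^{-1})$. Hence $(\Pi^{\rho}_X)^{\dagger} = |X|^{-1}\sum_{x\in X}\rho(x)^{\dagger} = |X|^{-1}\sum_{x\in X}\rho(x^{-1})$, and because inversion $x\mapsto x^{-1}$ is a bijection of the subgroup $X$, this sum again equals $\sum_{z\in X}\rho(z)$, so $(\Pi^{\rho}_X)^{\dagger} = \Pi^{\rho}_X$. A self-adjoint idempotent operator is by definition an orthogonal projection, which completes the argument.

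I do not expect any genuine obstacle here; the only point worth flagging is that both manipulations use the subgroup structure of $X$ essentially (closure under product for idempotence, closure under inverse for self-adjointness), so the claim fails for arbitrary subsets. An alternative, slightly more conceptual route would be to identify the range of $\Pi^{\rho}_X$ with the space $V_{\rho}^X$ of $X$-invariant vectors and observe that $\Pi^{\rho}_X$ is the averaging (Reynolds) operator onto it; but the direct computation above is the most economical, so that is the one I would write up.
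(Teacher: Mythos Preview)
Your argument is correct and is the standard verification; the paper itself states this result as a \textbf{Fact} without proof, so there is nothing to compare against beyond noting that your direct computation is exactly the routine check the authors are implicitly invoking.
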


Quantum Fourier Sampling (QFS) is a standard procedure based on the Quantum Fourier Transform to solve the Hidden Subgroup Problem (HSP) (see \citep{Ref_Lomont04hidden} for a survey).  An instance of the HSP over $G$ consists of a black-box function $f:G\to\set{0,1}^*$ such that $f(x)=f(y)$ if and only if $x$ and $y$ belong to the same left coset of $H$ in $G$, for some subgroup $H\leq G$.  The problem is to recover $H$ using the oracle $O_f: \ket{x,y}\mapsto \ket{x,y\oplus f(x)}$. The general QFS procedure for this is the following:
  
\begin{enumerate}
\item Prepare a 2-register quantum state, the first in a uniform superposition of the group elements and the second with the value zero:
\(\ket{\psi_1} = (\sfrac{1}{\sqrt{|G|}})\sum_{g\in G}\ket{g}\ket{0}\,.
\)
\item Query $f$, i.e., apply the oracle $O_f$, resulting in the state
\[
\ket{\psi_2} = O_f\ket{\psi_1}=\frac{1}{\sqrt{|G|}}\sum_{g\in G}\ket{g}\ket{f(g)}
=\frac{1}{\sqrt{|T|}}\sum_{\alpha\in T}\ket{\alpha H}\ket{f(\alpha)}
\] 
where $T$ is a transversal of $H$ in $G$. 

\item\label{QFS_Step_Coset} Measure the second register of  $\ket{\psi_2}$, resulting in the state $\ket{\alpha H}\ket{f(\alpha)}$ with probability $1/|T|$ for each $\alpha\in T$.  The first register of the resulting state is then $\ket{\alpha H}$ for some uniformly random $\alpha\in G$.

\remove{
Alternatively, we can view the first register of the resulting state as being in a mixed state with density matrix:
\[
\mathcal{M}_H = \frac{1}{|T|}\sum_{\alpha\in T} \ket{\alpha H}\bra{\alpha H}=\frac{1}{|G|}\sum_{g\in G}\ket{g H}\bra{g H}\,.
\]
}
\item\label{QFS_Step_Fourier} Apply the quantum Fourier transform over $G$ to the coset state $\ket{\alpha H}$ observed at step \ref{QFS_Step_Coset}:
\[
F_G\ket{{\alpha H}}  =
\sum_{\rho\in\widehat{G}, 1\leq i,j\leq d_{\rho}}\widehat{\alpha H}(\rho)_{i,j}\ket{\rho, i,j}\,.
\]
\item\label{QFS_Step_Measure} (Weak) Observe the representation name $\rho$.  (Strong) Observe $\rho$ and matrix indices $i,j$.
\item Classically process the information observed from the previous step to determine the subgroup $H$.
\end{enumerate}
\remove{
As discussed in \citep{Ref_Lomont04hidden}, there are several potential difficulties in making this algorithm efficient,  including how to efficiently compute the quantum Fourier transform on $G$, how to choose a basis for each irrep $\rho\in\widehat{G}$, and 
how to efficiently reconstruct the subgroup generators for $H$ from the irreps returned.
Here, we shall focus on the probability distributions resulted by quantum Fourier sampling in both weak and strong forms.
}
\paragraph{Probability distributions produced by QFS.}
For a particular coset $\alpha H$, the probability of measuring the representation $\rho$ in the state $F_G\ket{{\alpha H}}$ is
\[
P_{\alpha H}(\rho) = \|\widehat{\alpha H}(\rho)\|_F^2 
=\frac{d_{\rho}|H|}{|G|} {\tr\left((\Pi^{\rho}_{\alpha H})^{\dagger}\Pi^{\rho}_{\alpha H}\right)}
=\frac{d_{\rho}|H|}{|G|} {\tr\left(\Pi^{\rho}_{H}\right)}
\]
where $\tr(A)$ denotes the trace of a matrix $A$, and $\|A\|_F:=\sqrt{\tr(A^{\dagger}A)}$ is the Frobenius norm of $A$. The last equality is due to the fact that $\Pi^{\rho}_{\alpha H}=\rho(\alpha)\Pi^{\rho}_{H}$ and that $\Pi^{\rho}_{H}$ is an orthogonal projector.

Since there is no point in measuring the rows \citep{Ref_Grigni04quantum}, we are only concerned with measuring the columns. As pointed out in \citep{Ref_Moore08symmetric}, 
the optimal von Neumann measurement on a coset state can always be expressed in this form for some basis $B_{\rho}$.  
Conditioned on observing $\rho$ in the state $F_G\ket{\alpha H}$, the probability of measuring a given $\mathbf{b}\in B_{\rho}$ is $\|\widehat{\alpha H}(\rho)\mathbf{b}\|^2$. Hence the conditional probability that we observe the vector $\mathbf{b}$, given that we observe the representation $\rho$, is then
\[
P_{\alpha H}(\mathbf{b}\mid \rho) = \frac{\|\widehat{\alpha H}(\rho)\mathbf{b}\|^2}{P_{\alpha H}(\rho)}
=\frac{\|\Pi^{\rho}_{\alpha H}\mathbf{b}\|^2}{\tr\left(\Pi^{\rho}_{H}\right)}
=\frac{\|\Pi^{\rho}_{H}\mathbf{b}\|^2}{\tr\left(\Pi^{\rho}_{H}\right)}
\]
where in the last equality, we use the fact that as $\rho(\alpha)$ is unitary, it preserves the norm of the vector $\Pi^{\rho}_{H}\mathbf{b}$. 

The coset representative $\alpha$ is unknown and is uniformly distributed in $T$.  However, both distributions $P_{\alpha H}(\rho)$ and $P_{\alpha H}(\mathbf{b}\mid\rho)$ are independent of $\alpha$ and are the same as those for the state $F_G\ket{H}$. Thus, in Step \ref{QFS_Step_Measure} of the QFS procedure above, we observe $\rho\in\widehat{G}$ with probability
\(
P_H(\rho) 
\), and conditioned on this event, we observe $\mathbf{b}\in B_{\rho}$ with probability $P_{H}(\mathbf{b}\mid\rho)$.


If the hidden subgroup is trivial, $H=\{1\}$, the conditional probability distribution on $B_{\rho}$ is uniform, 
\[
P_{\set{1}}(\mathbf{b}\mid\rho) = \frac{\|\Pi_{\set{1}}^{\rho}\mathbf{b}\|^2}{\tr\left(\Pi_{\set{1}}^{\rho}\right)} = \frac{\|\mathbf{b}\|^2}{d_{\rho}} = \frac{1}{d_{\rho}} \, .
\]



\subsection{Distinguishability by QFS}\label{Sec:General}

We fix a finite group $G$ and consider quantum Fourier sampling over $G$ in the basis given by $\{ B_{\rho} \}$. For a subgroup $H<G$ and for $g\in G$, let $H^g$ denote the conjugate subgroup $g^{-1}Hg$.  
Since $\tr\left(\Pi^{\rho}_{H}\right)=\tr\left(\Pi^{\rho}_{H^g}\right)$, the probability distributions obtained by QFS for recovering the hidden subgroup $H^g$ are  
\[
P_{H^g}(\rho) = \frac{d_{\rho}|H|}{|G|} {\tr\left(\Pi^{\rho}_{H}\right)} = P_H(\rho)
\quad\text{and}\quad
P_{H^g}(\mathbf{b}\mid\rho) = \frac{\|\Pi^{\rho}_{H^g}\mathbf{b}\|^2}{\tr\left(\Pi^{\rho}_{H}\right)}\,.
\]

As $P_{H^g}(\rho)$ does not depend on $g$, weak Fourier sampling can not distinguish conjugate subgroups. Our goal is to point out that for certain nontrivial subgroup $H< G$, strong Fourier sampling  can not efficiently distinguish the conjugates of $H$ from each other or from the trivial one. 
Recall that the distribution $P_{\set{1}}(\cdot\mid\rho)$ obtained by performing strong Fourier sampling  on the trivial hidden subgroup is the same as the uniform distribution $U_{B_{\rho}}$ on the basis $B_{\rho}$. Thus, our goal can be boiled down to showing that the probability distribution $P_{H^g}(\cdot\mid\rho)$ is likely to be  close to the uniform distribution $U_{B_{\rho}}$ in total variation, for a random $g\in G$ and an irrep $\rho\in\widehat{G}$ obtained by weak Fourier sampling. 

\begin{definition}
We define the \emph{distinguishability} of a subgroup $H$ (using strong Fourier sampling over $G$), denoted $\mathcal{D}_H$, to be the expectation of the squared $L_1$-distance between $P_{H^g}(\cdot\mid\rho)$ and $U_{B_{\rho}}$:
\[
\mathcal{D}_H \eqdef \expect[\rho,g]{\|P_{H^g}(\cdot\mid\rho)-U_{B_{\rho}}\|_{1}^2} \,,
\]
where $\rho$ is drawn from $\widehat{G}$ according to the distribution $P_H(\rho)$, and $g$ is chosen from  $G$ uniformly at random. We say that the subgroup $H$ is \emph{indistinguishable} if $\mathcal{D}_H\leq \log^{-\omega(1)} |G|$.  
\end{definition}

Note that if $\mathcal{D}_H$ is small, then the total variation distance between $P_{H^g}(\cdot\mid\rho)$ and $U_{B_{\rho}}$ is small with high probability due to Markov's inequality: for all $\epsilon>0$,
\[
\pr[g]{\|P_{H^g}(\cdot\mid\rho)-U_{B_{\rho}}\|_{t.v.}\geq \epsilon/2} 
=\pr[g]{\|P_{H^g}(\cdot\mid\rho)-U_{B_{\rho}}\|_{1}^2\geq \epsilon^2} 
\leq \mathcal{D}_H/\epsilon^2\,. 
\]
In particular, if the subgroup $H$ is indistinguishable by strong Fourier sampling, then for all constant $c>0$,
\[
\|P_{H^g}(\cdot\mid\rho)-U_{B_{\rho}}\|_{t.v.}< \log^{-c} |G| 
\]
with probability at least $1-\log^{-c} |G|$ in both $g$ and $\rho$. Indeed, our notion of indistinguishability is inspired by that of Kempe and Shalev~\citep{Ref_Kempe05hidden}.  Focusing on weak Fourier sampling, they say that $H$ is indistinguishable if $\|P_H(\cdot)-P_{\set{1}}(\cdot)\|_{t.v.}<\log^{-\omega(1)} |G|$.

\remove{
\begin{remark}
For any distributions $P$ and $Q$, the total variance $\|P-Q\|_{t.v.}$ equals the maximal difference $|P(A)-Q(A)|$ for all event $A$, which implies $\|P-Q\|_{1}=2\|P-Q\|_{t.v.}\leq 2$. Therefore, $\mathcal{D}_H\leq 4$ for every subgroup  $H$. 
\end{remark}
}

Our main theorem below will serve as a general guideline for bounding the distinguishability of $H$.  For this bound, we define, for each $\sigma\in\widehat{G}$, the \emph{maximal normalized character of $\sigma$ on $H$} as 
\[
\overline{\chi}_{\sigma}(H) \eqdef \max_{h\in H\setminus\set{1}} \frac{|\chi_{\sigma}(h)|}{d_{\sigma}}\,.
\] 
For each subset $S\subset \widehat{G}$, let
\[
\overline{\chi}_{\overline{S}}(H) = \max_{\sigma \in \widehat{G}\setminus S}\overline{\chi}_{\sigma}(H) 
\quad\text{and}\quad d_{S}=\max_{\sigma\in S} d_{\sigma}\,.
\]
In addition, for each reducible representation $\rho$ of $G$, we let $I(\rho)$ denote the set of irreps of $G$ that appear in the decomposition of $\rho$ into irreps. 

\begin{theorem}\label{Thm:general}(\textsc{Main Theorem})
Suppose $S$ is a subset of $\widehat{G}$. 
Let $D>d_S^2$ and  $L=L_D\subset\widehat{G}$ be the set of all irreps of  dimension at least $D$. Let
\begin{equation}\label{Eq:general_delta}
\Delta = \Delta_{S,L}=\max_{\rho\in L}{\bigl|S\cap I(\rho\otimes \rho^*)\bigr|}\,.
\end{equation} 
Then the distinguishability of $H$ is bounded by
\[
\mathcal{D}_H\leq  4|H|^2\left(\overline{\chi}_{\overline{S}}(H)+\Delta\frac{d_S^2}{D}+\frac{|\overline{L}|D^2}{|G|}\right)\,.
\]

\end{theorem}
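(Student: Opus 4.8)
The plan is to bound $\mathcal{D}_H = \expect[\rho,g]{\|P_{H^g}(\cdot\mid\rho)-U_{B_\rho}\|_1^2}$ by splitting the expectation over $\rho$ according to whether $\rho$ lands in the "large" set $L$ or its complement $\overline{L}$, and, within the $L$ case, by a Fourier-analytic expansion of the $L_1$-distance that isolates contributions from irreps in $S$ versus those outside $S$. First I would record the exact formula for the conditional distribution from Section~\ref{SubSec:QFS}: $P_{H^g}(\mathbf{b}\mid\rho) = \|\Pi^\rho_{H^g}\mathbf{b}\|^2/\tr(\Pi^\rho_H)$, together with the observation that $\tr(\Pi^\rho_H) = \sum_{\mathbf{b}\in B_\rho}\|\Pi^\rho_H\mathbf{b}\|^2$, so that $P_{H^g}(\cdot\mid\rho)$ is genuinely a probability distribution on $B_\rho$. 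Writing $\Pi^\rho_{H^g} = \frac{1}{|H|}\sum_{h\in H}\rho(g)^{-1}\rho(h)\rho(g) = \frac{1}{|H|}(I + \sum_{h\neq 1}\rho(h)^g)$, the "$1$" term contributes the uniform distribution and the cross terms carry the deviation; the squared $L_1$-distance, after applying Cauchy--Schwarz to pass from $L_1$ on a $d_\rho$-point space to $L_2$, is $\leq d_\rho \sum_{\mathbf{b}}\bigl(P_{H^g}(\mathbf{b}\mid\rho)-\tfrac{1}{d_\rho}\bigr)^2$, which expands into a sum of inner products $\langle \rho(h_1)^g \mathbf{b}, \mathbf{b}\rangle\overline{\langle \rho(h_2)^g\mathbf{b},\mathbf{b}\rangle}$ over pairs $h_1,h_2\in H\setminus\{1\}$ (plus manageable diagonal-in-$h$ terms). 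This is where the "decoupling lemma" the introduction advertises enters: I expect we need to bound the expectation over $g$ of these quartic-in-$\rho(g)$ cross terms, reducing them via Schur orthogonality to quantities governed by how $\rho\otimes\rho^*$ decomposes.

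Next I would take the expectation over $g$ (uniform in $G$) of each cross term. By Schur orthogonality / the formula $\expect[g]{\rho(g)_{ab}\overline{\rho(g)_{cd}}} = \frac{1}{d_\rho}\delta_{ac}\delta_{bd}$, averaging $|\langle\rho(h)^g\mathbf{b},\mathbf{b}\rangle|^2$-type expressions over $g$ produces traces of $\rho(h)$ and of products $\rho(h_1)\rho(h_2^{-1})$ weighted by $1/d_\rho$ or $1/d_\rho^2$; more precisely, the cross terms get reexpressed in terms of the projection of $\mathbf{b}\otimes\bar{\mathbf{b}}$ onto isotypic components of $\rho\otimes\rho^*$, and the component-by-component bound is controlled by the normalized characters $\chi_\sigma(h)/d_\sigma$ for $\sigma\in I(\rho\otimes\rho^*)$. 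Splitting $I(\rho\otimes\rho^*)$ into $\sigma\in S$ and $\sigma\notin S$: the $\sigma\notin S$ part is bounded by $\overline{\chi}_{\overline{S}}(H)$ uniformly (this handles "generic" $\sigma$), while the $\sigma\in S$ part is bounded by brute force — there are at most $\Delta$ such $\sigma$ by definition of $\Delta_{S,L}$, each contributes a normalized character at most $1$ but a dimension factor $d_\sigma^2 \leq d_S^2$ against the $1/d_\rho \leq 1/D$ saving, giving the $\Delta d_S^2/D$ term. Summing over the (at most $|H|^2$) pairs $h_1,h_2$ and the (at most $|H|$) diagonal terms yields the $|H|^2\bigl(\overline{\chi}_{\overline{S}}(H)+\Delta d_S^2/D\bigr)$ portion of the bound, with the constant $4$ absorbing the Cauchy--Schwarz and the $\leq$-vs-$<$ slack.

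Finally, for $\rho\in\overline{L}$, i.e. $d_\rho < D$, I would not try to show the distribution is close to uniform — it need not be. Instead I bound the contribution crudely: $\|P_{H^g}(\cdot\mid\rho)-U_{B_\rho}\|_1 \leq 2$ always, so $\|\cdot\|_1^2\leq 4$, and the probability under weak Fourier sampling of drawing such a $\rho$ is $\sum_{\rho\in\overline{L}}P_H(\rho) = \sum_{\rho\in\overline{L}}\frac{d_\rho|H|}{|G|}\tr(\Pi^\rho_H) \leq \sum_{\rho\in\overline{L}}\frac{d_\rho^2|H|}{|G|} \leq |H|\,|\overline{L}|\,D^2/|G|$, using $\tr(\Pi^\rho_H)\leq d_\rho$ and $d_\rho<D$. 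Multiplying by $4$ — and noting $|H|\leq |H|^2$ — gives the $4|H|^2|\overline{L}|D^2/|G|$ term. Adding the two regimes gives the claimed bound. The main obstacle I anticipate is the decoupling step: controlling the genuinely bilinear cross terms $\sum_{h_1\neq h_2}$ rather than just the diagonal $\sum_{h}$ terms — in the order-two case of \citep{Ref_Moore08symmetric} there is effectively only one nontrivial element and no cross terms arise, so the new content is precisely showing that after the $g$-average these off-diagonal contributions are still governed by the same character/dimension data (this is what forces the clean hypothesis "$\sigma\notin S \Rightarrow$ small normalized character" to be stated for all $\sigma$, not just those appearing diagonally), and keeping the combinatorial factor at $|H|^2$ rather than something larger.
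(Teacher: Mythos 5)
Your overall strategy matches the paper's: Cauchy--Schwarz to pass from $L_1$ to $L_2$, average over $g$ using the decomposition of $\rho\otimes\rho^*$, split $I(\rho\otimes\rho^*)$ into $S$ versus $\overline{S}$, and handle the small-dimensional $\rho\notin L$ by bounding $\pr[\rho]{\rho\notin L}$ via $\tr(\Pi_H^\rho)\leq d_\rho$. The $\overline{L}$ term and the $S$-versus-$\overline{S}$ split are handled exactly as you describe.

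However, your account of the ``decoupling'' step is where the argument would break. You propose to compute the $g$-average of the off-diagonal cross terms $\expect[g]{\langle\mathbf{b},g^{-1}h_1 g\,\mathbf{b}\rangle\overline{\langle\mathbf{b},g^{-1}h_2 g\,\mathbf{b}\rangle}}$ directly and to claim they are controlled by $\chi_\sigma(h)/d_\sigma$ for $\sigma\in I(\rho\otimes\rho^*)$. That reduction only holds on the diagonal $h_1=h_2$. For $h_1\neq h_2$ the relevant operator is $\rho(h_1)\otimes\rho^*(h_2)$, which is \emph{not} $(\rho\otimes\rho^*)(h)$ for any single group element $h$; after the Schur average it contributes block traces of $\rho(h_1)\otimes\rho^*(h_2)$ over $\sigma$-isotypic components, and these are not characters of $G$ evaluated on $H$. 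You would be stuck trying to bound them in terms of $\overline{\chi}_\sigma(H)$. The paper sidesteps this entirely: writing $\|\Pi_{H^g}^\rho\mathbf{b}\|^2 = \frac{1}{|H|}(\|\mathbf{b}\|^2 + S_g)$ with $S_g=\sum_{h\neq 1}\langle\mathbf{b},g^{-1}hg\,\mathbf{b}\rangle$, it first shows $\var[g]{\|\Pi_{H^g}^\rho\mathbf{b}\|^2}\leq\frac{1}{|H|^2}\expect[g]{S_g^2}$, and then ``decouples'' by the elementary Cauchy--Schwarz bound $S_g^2\leq(|H|-1)\sum_{h\neq 1}|\langle\mathbf{b},g^{-1}hg\,\mathbf{b}\rangle|^2$. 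This reduces everything to the diagonal second moment already computed in Lemma 4.2 of \citep{Ref_Moore08symmetric}, and the $|H|$ and $(|H|-1)$ factors essentially cancel, so the variance bound carries no $|H|^2$ at all.

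This also means your accounting of the $4|H|^2$ prefactor is off: it does not come from summing over $|H|^2$ pairs $(h_1,h_2)$. It comes from the normalization $E_H(\rho)\leq\bigl(d_\rho/\tr(\Pi_H^\rho)\bigr)^2\bigl(\overline{\chi}_{\overline{S}}(H)+|S\cap I(\rho\otimes\rho^*)|\,d_S^2/d_\rho\bigr)$, where one then bounds $d_\rho/\tr(\Pi_H^\rho)\leq 2|H|$. That last bound is not automatic; it uses the WLOG observation that since $E_H(\rho)\leq 4$ always, one may assume $\overline{\chi}_{\overline{S}}(H)\leq 1/|H|^2\leq 1/(2|H|)$, whence $\tr(\Pi_H^\rho)/d_\rho\geq \frac{1}{|H|}-\overline{\chi}_\rho(H)\geq\frac{1}{2|H|}$. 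Your proposal omits this reduction, and without it the factor of $4|H|^2$ cannot be extracted cleanly.
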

Intuitively, the set $S$ consists of irreps of small dimension, and $L$ consists of irreps of large dimension. Moreover, we wish to have that the size of $S$ is small while the size of $L$ is large, so that most irreps are likely in $L$. In the cases where there are relatively few irreps, i.e. $|S|\ll D$ and $|\widehat{G}|\ll|G|$, we can simply upper bound $\Delta$ by $|S|$ and upper bound $|\overline{L}|$ by $|\widehat{G}|$.

We discuss the proof of this theorem in Section \ref{Sec:MainTheoremProof}.  
\section{Applications}
In this section, we point out some applications of Theorem \ref{Thm:general} to analyze strong Fourier sampling over certain non-abelian groups.

\subsection{Strong Fourier sampling over $S_n$}\label{SubSec:Symmetric}

In this part, we consider the case where $G$ is the symmetric group $S_n$. 
Recall that each irrep of $S_n$ is one-to-one corresponding to an integer partition $\lambda=(\lambda_1,\lambda_2,\ldots,\lambda_{t})$ of $n$, which is associated with the Young diagram of $t$ rows in which the $i^{\rm th}$ row contains $\lambda_i$ columns. The conjugate representation of $\lambda$ is the irrep corresponding to the partition $\lambda'=(\lambda'_1,\lambda'_2,\ldots,\lambda'_{t'})$, which is obtained by flipping the Young diagram $\lambda$ about the diagonal. In particular, $\lambda'_1=t$ and $t'=\lambda_1$.

As in \citep{Ref_Moore08symmetric}, we use Roichman's upper bound \citep{Ref_Roichman96upper} on normalized characters.

\begin{theorem}[Roichman's Theorem \citep{Ref_Roichman96upper}] There exist constant $b>0$ and $0< q <1$ so that for $n>4$, for every $\pi\in S_n$, and for every irrep $\lambda$ of $S_n$,
\[
\left|\frac{\chi_{\lambda}(\pi)}{d_{\lambda}}\right| \leq \left(\max\left(q, \frac{\lambda_1}{n},\frac{\lambda'_1}{n}\right)\right)^{b\cdot \mathrm{supp}(\pi)}
\]
where $\mathrm{supp}(\pi)=\#\set{k\in[n]\mid \pi(k)\neq k}$ is the support of $\pi$.
\end{theorem}

This bound works well for unbalanced Young diagrams. In particular, 
for a constant $0<c<1/4$, let $\Lambda_c$ denote the collection of partitions $\lambda$ of $n$ with the property that either $\frac{\lambda_1}{n}\geq 1-c$ or $\frac{\lambda'_1}{n}\geq 1-c$, i.e., the Young diagram $\lambda$ contains at least $(1-c)n$ rows or contains at least $(1-c)n$ columns. Then, Roichman's upper bound implies that for every $\pi\in S_n$ and $\lambda\not\in \Lambda_c$, and a universal constant $\alpha > 0$, 
\begin{equation}\label{Eq:RoichmanBound_1}
\left|\frac{\chi_{\lambda}(\pi)}{d_{\lambda}}\right| 
\leq e^{-\alpha \cdot\mathrm{supp}(\pi)}\,.
\end{equation}
On the other hand, both $|\Lambda_c|$ and the maximal dimension of representations in $\Lambda_c$ are small, as shown in the following Lemma of \citep{Ref_Moore08symmetric}.

\begin{lemma}[Lemma 6.2 in \citep{Ref_Moore08symmetric}]\label{Lemma:SmallDimS_n} Let $p(n)$ denote the number of integer partitions of $n$. Then $|\Lambda_c|\leq 2cn\cdot p(cn)$, and $d_{\mu}<n^{cn}$ for any $\mu\in \Lambda_c$.
\end{lemma}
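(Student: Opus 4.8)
\textbf{Proof plan for Lemma~\ref{Lemma:SmallDimS_n}.} The statement has two parts: a bound on $|\Lambda_c|$ and a bound on the dimension of any $\mu \in \Lambda_c$. The first part is essentially a counting exercise. A partition $\lambda$ lies in $\Lambda_c$ precisely when $\lambda_1 \geq (1-c)n$ (at least $(1-c)n$ columns in the first row) or $\lambda'_1 \geq (1-c)n$ (at least $(1-c)n$ rows). By conjugation symmetry it suffices to count those with $\lambda_1 \geq (1-c)n$ and double. If $\lambda_1 = n - j$ for some $0 \leq j \leq cn$, then the remaining cells form a partition of $j \leq cn$ whose largest part is at most $n-j$ (a vacuous constraint here since $j \le cn \le n-j$); hence the number of such $\lambda$ with a given value of $j$ is at most $p(j) \leq p(cn)$. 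Summing over the at most $cn+1 \le 2cn$ (for $n$ large) possible values of $j$ and doubling for the conjugate case gives $|\Lambda_c| \leq 2 \cdot (cn) \cdot p(cn)$, which is the claimed bound (possibly one needs $2(cn+1) \le 4cn$, but the paper's constant can absorb this; I would just state $|\Lambda_c| \le 2cn\cdot p(cn)$ following their convention, being slightly careful with the off-by-one).

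For the dimension bound, the plan is to use the hook-length formula together with the observation that $d_\mu$ for a partition $\mu$ of $n$ is at most the number of standard Young tableaux, which is at most the number of ways to fill the diagram, and more usefully that $d_\mu \le \binom{n}{\mu_1}\, d_{\tilde\mu}$ where $\tilde\mu$ is the partition of $n-\mu_1$ obtained by deleting the first row; equivalently one bounds $d_\mu$ by a multinomial coefficient counting the number of ways to place the rows. Concretely, if $\mu$ has rows of sizes $\mu_1 \ge \mu_2 \ge \cdots$, then $d_\mu \le \binom{n}{\mu_2, \mu_3, \ldots}$ (the number of ways to distribute the $n - \mu_1$ cells outside the first row, times the ways the first row sits), and when $\mu_1 \ge (1-c)n$ the complement has size at most $cn$, so a crude bound gives $d_\mu \le (n-\mu_1+1)^{\,n-\mu_1}\cdot\binom{n}{n-\mu_1} \le$ something like $n^{cn}$. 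The cleanest route: the number of standard Young tableaux of shape $\mu$ with $\mu_1 \geq (1-c)n$ is at most $n^{n-\mu_1} \leq n^{cn}$, because such a tableau is determined by choosing, for each of the $n - \mu_1 \le cn$ cells outside the first row, which of the $n$ values it carries — an overcount, but sufficient. For the conjugate case $\mu'_1 \ge (1-c)n$ one uses $d_{\mu} = d_{\mu'}$ and the same argument. I would write this as: $d_\mu \le \binom{n}{n - \min(\mu_1,\mu'_1)} (n)^{\,n-\min(\mu_1,\mu'_1)} < n^{cn}$ after simplification, or cite the bound directly from~\citep{Ref_Moore08symmetric} since it is Lemma~6.2 there.

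The main obstacle — really the only subtlety — is making the dimension bound both correct and clean enough to state as $d_\mu < n^{cn}$ strictly, since naive overcounts tend to produce factors like $\binom{n}{cn}$ or $(cn)!$ that one must check are absorbed by $n^{cn}$. Since $\binom{n}{j} \le n^j$ and $j! \le n^j$ for $j \le cn \le n$, any such factor with exponent $\le cn$ is dominated by $n^{cn}$, so the inequality does go through, but one should be slightly careful that the total exponent of $n$ in the overcount is exactly $n - \mu_1 \le cn$ and not, say, $2(n-\mu_1)$. The honest move here, since the paper explicitly attributes this to Lemma~6.2 of~\citep{Ref_Moore08symmetric}, is to present the counting argument for $|\Lambda_c|$ in full and simply invoke the cited source for the dimension estimate, or reproduce its short hook-length/multinomial argument verbatim.
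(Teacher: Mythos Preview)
The paper does not prove this lemma at all: it is stated as Lemma~6.2 of \citep{Ref_Moore08symmetric} and simply cited, with no argument given in the present paper. You already recognize this in your final paragraph, and that is indeed the ``honest move'' the authors take.

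As for the content of your sketch, it is correct and is essentially the standard argument. For the count, partitions with $\lambda_1 = n-j$ for $0 \le j \le \lfloor cn\rfloor$ are in bijection with partitions of $j$ (the constraint $\lambda_2 \le \lambda_1$ is vacuous since $j \le cn < (1-c)n \le \lambda_1$ for $c<1/2$), giving at most $\sum_{j \le cn} p(j) \le (cn+1)p(cn)$, and doubling handles the conjugate case; the off-by-one you flag is harmless at the level of the later $e^{O(\sqrt{n})}$ estimates. For the dimension, the cleanest way to make your overcount rigorous is to note that a standard Young tableau of shape $\mu$ is determined by the placement of the $n-\mu_1$ entries outside the first row (the first row is then forced), so $d_\mu \le \binom{n}{n-\mu_1}(n-\mu_1)! = n!/\mu_1! = n(n-1)\cdots(\mu_1+1) < n^{\,n-\mu_1} \le n^{cn}$, with the conjugate case handled by $d_\mu = d_{\mu'}$. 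This keeps the exponent at exactly $n-\mu_1$, resolving the concern you raise about accidentally doubling it.
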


To give a more concrete bound for the size of $\Lambda_c$, we record the asymptotic formula for the partition function $p(n)$ \citep[pg. 45]{Ref_Fulton91representation}: 
\(
p(n)\approx {e^{\pi\sqrt{2n/3}}}/{(4\sqrt{3}n)} = e^{O(\sqrt{n})}n^{-1} \text{ as $n\to\infty$}\,.
\)

\remove{
It is natural to ask if there exist irreps of $S_n$ with large dimensions so that the above lemma is applicable. Fortunately, it can be shown that most irreps of $S_n$ possess such a property. In particular, if we choose $\lambda$ according to the distribution given in \eqref{Eq:P_rho}, which is obtained by weak Fourier sampling the left coset $\ket{H}$, then $d_{\lambda}\geq n^{dn}$ with high probability, so long as $d<1/2$. The details of this statement is given in the following lemma.

\begin{lemma}\label{Lemma:P_SmallDim}
Assume $|H|\leq e^{\alpha n}$ and 
let $d<1/2$ be a constant. Let $\lambda$ be drawn from the set of irreps of $S_n$ according to the distribution $P(\cdot)$ of \eqref{Eq:P_rho}. Then for any constant $0<\gamma<1-2d$,  
\[
\pr[\lambda]{d_{\lambda}\leq n^{dn}}\leq n^{- \gamma n}
\]
for sufficiently large $n$.
\end{lemma}
\begin{proof}
As $\tr(\Pi_{H}^{\lambda})\leq d_{\lambda}$ and $|H|\leq e^{\alpha n}$, we have
\[
P(\lambda) = \frac{d_{\lambda}|H|}{|S_n|}\tr(\Pi_H^{\lambda}) \leq  \frac{d_{\lambda}^2e^{\alpha n}}{n!}\leq \frac{d_{\lambda}^2e^{\alpha n}}{n^n e^{-n}}\,.
\]
where the last inequality is obtained by applying Stirling's approximation $n! > n^n e^{-n}$.

Hence,
\[
\begin{split}
\pr[\lambda]{d_{\lambda}\leq n^{dn}}
&=\sum_{\lambda: d_{\lambda}\leq n^{dn}} P(\lambda)\\
&\leq \sum_{\lambda: d_{\lambda}\leq n^{dn}} \frac{d_{\lambda}^2e^{\alpha n}}{n^n e^{-n}}\\
&\leq p(n)n^{2dn}n^{-n}e^{\alpha n}e^{n}\\
&\leq n^{-(1-2d)n}e^{O(n)}\leq n^{-\gamma n}\\
\end{split}
\]
for sufficiently large $n$, so long as $\gamma<1-2d$.
\end{proof}
}
Now we are ready to prove the main result of this section, which is another application of Theorem~\ref{Thm:general}.

\begin{theorem}\label{Theorem:S_n} Let $H$ be a nontrivial subgroup of $S_n$ with minimal degree $m$, i.e., $m=\min_{\pi\in H\setminus\set{1}}\mathrm{supp}(\pi)$.  Then for sufficiently large $n$, $\mathcal{D}_H\leq O(|H|^2e^{-\alpha m})$.
\end{theorem}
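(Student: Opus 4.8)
The plan is to apply the Main Theorem (Theorem~\ref{Thm:general}) to $G = S_n$ with the ``small'' set taken to be $S = \Lambda_c$ for a fixed constant $0 < c < 1/4$ and the dimension threshold taken to be $D = n^{dn}$ for a fixed constant $d$ satisfying $2c < d < 1/2$ (such a $d$ exists precisely because $c < 1/4$). First I would collect the quantities that Theorem~\ref{Thm:general} requires. By Lemma~\ref{Lemma:SmallDimS_n} we have $d_S = \max_{\mu\in\Lambda_c} d_\mu < n^{cn}$, so the hypothesis $D > d_S^2 = n^{2cn}$ holds; moreover $|S| = |\Lambda_c| \le 2cn\cdot p(cn) = e^{O(\sqrt n)}$. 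Since $S_n$ has $p(n) = e^{O(\sqrt n)}$ irreps in total, we also get $|\overline L| \le p(n) = e^{O(\sqrt n)}$. Finally, Roichman's estimate in the form~\eqref{Eq:RoichmanBound_1} says that for every $\lambda\notin\Lambda_c$ and every $\pi\in S_n$ one has $|\chi_\lambda(\pi)|/d_\lambda \le e^{-\alpha\,\mathrm{supp}(\pi)}$; applying this to the non-identity elements of $H$, each of which moves at least $m$ points, gives $\overline{\chi}_{\overline S}(H) \le e^{-\alpha m}$.

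Next I would bound the cross term $\Delta$ and the large-dimension tail crudely, as suggested in the remark following Theorem~\ref{Thm:general}. For $\Delta = \max_{\rho\in L}\bigl|S\cap I(\rho\otimes\rho^*)\bigr|$ I would simply use $\Delta \le |S| = |\Lambda_c| = e^{O(\sqrt n)}$, which is harmless because $D$ is super-exponentially large: indeed
\[
\Delta\,\frac{d_S^2}{D} \le e^{O(\sqrt n)}\cdot\frac{n^{2cn}}{n^{dn}} = e^{O(\sqrt n)}\,n^{-(d-2c)n} = n^{-\Omega(n)},
\]
using $d > 2c$. Similarly, using Stirling in the form $n! > n^n e^{-n}$,
\[
\frac{|\overline L|\,D^2}{|S_n|} \le e^{O(\sqrt n)}\cdot\frac{n^{2dn}}{n^n e^{-n}} = e^{O(n)}\,n^{-(1-2d)n} = n^{-\Omega(n)},
\]
using $d < 1/2$.

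Assembling these into Theorem~\ref{Thm:general} then yields
\[
\mathcal{D}_H \le 4|H|^2\Bigl(e^{-\alpha m} + n^{-\Omega(n)} + n^{-\Omega(n)}\Bigr).
\]
Since $m \le n$ we have $e^{-\alpha m} \ge e^{-\alpha n}$, which for large $n$ dominates $n^{-\Omega(n)} = e^{-\Omega(n\log n)}$, so the first term absorbs the other two and $\mathcal{D}_H \le O(|H|^2 e^{-\alpha m})$. This is valid once $n$ is large enough that $n > 4$ (for Roichman), the constant interval $(2c, 1/2)$ for $d$ is nonempty and used, and the displayed asymptotic estimates hold.

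I do not expect a genuine obstacle here: the analytic heart is already packaged in Theorem~\ref{Thm:general} and in Roichman's character bound, and what remains is choosing compatible constants and checking that the two error terms are negligible. The only point worth a second look is whether bounding $\Delta$ by the full size $|\Lambda_c|$ is too lossy; it is not, since $|\Lambda_c| = e^{O(\sqrt n)}$ is subexponential while $d_S^2/D = n^{-\Omega(n)}$, so even the lossy bound keeps that term super-polynomially small.
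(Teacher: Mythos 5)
Your proof is correct and follows essentially the same route as the paper's: set $S = \Lambda_c$ and $D = n^{dn}$ with $2c < d < 1/2$, invoke Lemma~\ref{Lemma:SmallDimS_n} and Roichman's bound~\eqref{Eq:RoichmanBound_1}, bound $\Delta$ crudely by $|\Lambda_c|$, and check the two error terms are $n^{-\Omega(n)}$. The only (welcome) addition is that you make explicit the final absorption step, noting $m\le n$ so $e^{-\alpha m}$ dominates $n^{-\Omega(n)}$, which the paper leaves implicit after arriving at $\mathcal{D}_H\le 4|H|^2(e^{-\alpha m}+n^{-\gamma n})$.
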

\begin{proof}
Let $2c<d<1/2$ be constants. 
We will apply Theorem \ref{Thm:general} by setting $S=\Lambda_c$ and $D=n^{dn}$. The condition $2c<d$ guarantees that $D> d^2_S$, since $d_{S}\leq n^{cn}$ by Lemma \ref{Lemma:SmallDimS_n}.

First, we need to bound the maximal normalized character $\overline{\chi}_{\overline{S}}(H)$. By \eqref{Eq:RoichmanBound_1}, we have $\overline{\chi}_{\mu}(H)\leq e^{-\alpha m}$ for all  $\mu\in \widehat{S_n}\setminus S$. Hence, $\overline{\chi}_{\overline{S}}(H)\leq e^{-\alpha m}$. 

To bound the second term in the upper bound of Theorem \ref{Thm:general}, as $\Delta\leq |S|$, it suffices to bound:
\begin{align*}
|S|\cdot\frac{d_{S}^2}{D} 
&\leq 2cn\cdot p(cn)\cdot \frac{n^{2cn}}{n^{dn}} &\text{( by Lemma \ref{Lemma:SmallDimS_n})}\\
&\leq e^{O(\sqrt{n})}\cdot n^{(2c-d)n}  &\text{( since $cn\cdot p(cn)=e^{O(\sqrt{n})}$ )}\\
&\leq n^{-\gamma n}/2 &\text{for sufficiently large $n$, so long as $\gamma < d-2c$.}\\
\end{align*}
Now bounding the last term in the upper bound of Theorem \ref{Thm:general}:
\begin{align*}
\frac{|\overline{L_D}| D^2}{|S_n|}
&\leq \frac{p(n)n^{2dn}}{n! } & \text{(since $|\overline{L_D}|\leq |\widehat{S_n}|=p(n)$)}\\ 
&\leq \frac{e^{O(\sqrt{n})}n^{2dn}}{n^ne^{-n} } & \text{($n!> n^ne^{-n}$ by Stirling's approximation)}\\
&\leq e^{O(n)}n^{(2d-1)n}\\
&\leq n^{-\gamma n}/2 &\text{for sufficiently large $n$, so long as $\gamma < 1-2d$.}\\
\end{align*}
By Theorem \ref{Thm:general}, 
\(
\mathcal{D}_H\leq 4|H|^2(e^{-\alpha m}+n^{-\gamma n}) \,.
\)

\end{proof}

Theorem~\ref{Theorem:S_n} generalizes Moore, Russell, and Schulman's result \citep{Ref_Moore08symmetric} on strong Fourier sampling over $S_n$, which only applied in the case $|H|=2$. To relate our result to the results of \citet{Ref_Kempe07permutation} observe that, since $\log|S_n|=\Theta(n\log n)$, the subgroup $H$ is indistinguishable by strong Fourier sampling if $|H|^2e^{-\alpha m}\leq (n\log n)^{-\omega(1)}$, or equivalently, if $m\geq (2/\alpha)\log|H|+\omega(\log n)$.
\subsection{Strong Fourier sampling and the McEliece cryptosystem}
Our main application of Theorem \ref{Thm:general} is to show the limitations of strong Fourier sampling in attacking the McEliece cryptosystem. 
Throughout this section, we fix system parameters $n, k, q$ of the McEliece cryptosystem, and fix a $k\times n$  generator matrix $M$ in a private-key of the system.  
Recall that the known possible quantum attack against this McEliece cryptosystem involves solving the HSP over the wreath product group $(\GL_k(\FF_q)\times S_n)\wr \ints_2$ with the hidden subgroup being
\begin{equation}\label{Eq:SubgroupK_McEliece}
K=((H_0, s^{-1}H_0 s ),0) \cup ((H_0 s, s^{-1}H_0  ),1) 
\end{equation}
for some hidden element $s\in \GL_k(\FF_q)\times S_n$. 
Here, $H_0$ is a subgroup of $\GL_k(\FF_q)\times S_n$ given by
\begin{equation}\label{Eq:SubgroupH_0_McEliece}
H_0 = \set{(A,P)\in \GL_k(\mathbb{F}_2)\times S_n: A^{-1}MP=M}\,.
\end{equation}

Let $\aut(M)$ denote the automorphism group of the linear code generated by $M$. Observe that every element $(A,P)\in H_0$ must have $P\in\aut(M)$. This allows us to control the maximal normalized characters on $K$ through the minimal degree of $\aut(M)$. Then applying Theorem \ref{Thm:general}, we show that 

\begin{theorem}\label{Thm:McEliece} Assume $q^{k^2}\leq n^{a n}$ for some constant $0<a<1/4$. Let $m$ be the minimal degree of the automorphism group $\aut(M)$. Then for sufficiently large $n$, the subgroup $K$ defined in \eqref{Eq:SubgroupK_McEliece} has $\mathcal{D}_K \leq O(|K|^2e^{-\delta m})$ , where $\delta>0$ is a constant.
\end{theorem}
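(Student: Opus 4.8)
The plan is to apply Theorem~\ref{Thm:general} to the group $G=(\GL_k(\FF_q)\times S_n)\wr \ints_2$ with the hidden subgroup $K$ of \eqref{Eq:SubgroupK_McEliece}, using a carefully chosen set $\textsc{Small}=S$ of low-dimensional irreps. The natural choice, indicated in the ``summary of technical ideas'', is to take $S$ to consist of all irreps of $G$ that are induced from irreps $\tau\times\lambda$ of $(\GL_k(\FF_q)\times S_n)^2$ in which the $S_n$-parts $\lambda$ lie in $\Lambda_c$ (for a suitable constant $c$), together with whatever low-dimensional twist by $\ints_2$ is needed to make these genuine irreps of the wreath product. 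I will need the standard description of $\widehat{G}$ for a wreath product $G_0\wr\ints_2$: an irrep is specified either by an unordered pair $\{\tau\times\lambda,\ \tau'\times\lambda'\}$ of distinct irreps of $G_0=\GL_k(\FF_q)\times S_n$ (induced up, dimension $2 d_\tau d_\lambda d_{\tau'} d_{\lambda'}$), or by a single irrep $\tau\times\lambda$ of $G_0$ together with a sign $\pm$ (dimension $d_\tau d_\lambda$). So $d_S$ is governed by $\max_{\lambda\in\Lambda_c} d_\lambda \le n^{cn}$ from Lemma~\ref{Lemma:SmallDimS_n}, times $|\GL_k(\FF_q)|$-type factors, which by the hypothesis $q^{k^2}\le n^{an}$ are at most $n^{O(an)}$; hence $d_S \le n^{(c+O(a))n}$, and I will choose the threshold $D=n^{dn}$ with $2(c+O(a))<d<1/2$ so that $D>d_S^2$.

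The argument then has three pieces, one for each term in the bound of Theorem~\ref{Thm:general}. First, the maximal normalized character $\overline{\chi}_{\overline S}(K)$: every nonidentity element of $K$ has the form $((g_1,g_2),b)$, and when $b=0$ at least one of $g_1,g_2$ is a nonidentity element of $H_0$, whose $S_n$-component lies in $\aut(M)$ and hence moves at least $m$ points; when $b=1$ the element has order dividing twice something and its character on any wreath-product irrep reduces (via the standard character formula for induced/twisted representations of $G_0\wr\ints_2$) to a character of $G_0$ evaluated at $g_1 g_2$ or to zero. In all cases the normalized character is bounded by the normalized character of some $S_n$-irrep $\lambda\notin\Lambda_c$ at a permutation of support $\ge m$, times a normalized character of a $\GL_k$-irrep which is at most $1$ in absolute value. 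By Roichman's bound, inequality~\eqref{Eq:RoichmanBound_1}, this gives $\overline{\chi}_{\overline S}(K)\le e^{-\alpha m}$ — up to constants this is the source of the $e^{-\delta m}$ in the conclusion. The key sub-lemma I expect to need here is precisely that an irrep of $G$ lies outside $S$ iff its $S_n$-ingredient lies outside $\Lambda_c$, so that outside $S$ we always get the Roichman decay; the $\GL_k$ factor is harmless because $|\chi_\tau(g)|\le d_\tau$ always.

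Second, the term $\Delta d_S^2/D$: here $\Delta=\max_{\rho\in L}|S\cap I(\rho\otimes\rho^*)|$, and I will simply bound $\Delta\le |S|$. By Lemma~\ref{Lemma:SmallDimS_n}, $|\Lambda_c|\le 2cn\,p(cn)=e^{O(\sqrt n)}$, and the number of $\GL_k(\FF_q)$-irreps is at most $|\GL_k(\FF_q)|\le q^{k^2}\le n^{an}$, and the wreath construction at most squares-and-doubles this count; so $|S|\le n^{an}e^{O(\sqrt n)}\cdot(\text{small})$. Then $|S|d_S^2/D \le n^{an}\cdot n^{2(c+O(a))n}/n^{dn}=n^{(2c-d+O(a))n}e^{O(\sqrt n)}$, which is $n^{-\Omega(n)}$ provided $d>2c+O(a)$ — doable by shrinking $c$ and using $a<1/4$. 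Third, the tail term $|\overline L|D^2/|G|$: here $|\overline L|\le|\widehat G|\le (q^{k^2})^2\cdot p(n)^2\cdot O(1)\le n^{O(n)}$ (in fact $\le n^{2an}e^{O(\sqrt n)}$), $D^2=n^{2dn}$, and $|G|=2(|\GL_k(\FF_q)|\,n!)^2\ge (n!)^2\ge n^{2n}e^{-2n}$, so the ratio is at most $n^{(2d-2+O(a))n}e^{O(n)}=n^{-\Omega(n)}$ since $d<1/2<1$. Plugging the three bounds into Theorem~\ref{Thm:general} gives $\mathcal D_K\le 4|K|^2(e^{-\alpha m}+n^{-\Omega(n)})=O(|K|^2 e^{-\delta m})$ for a constant $\delta>0$, absorbing the $n^{-\Omega(n)}$ term into $e^{-\delta m}$ since $m\le n$ (or treating it separately as in Theorem~\ref{Theorem:S_n}).

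The main obstacle, and the part requiring genuine care rather than bookkeeping, is the first piece: correctly carrying out the character computation on the wreath product $G_0\wr\ints_2$ at elements of $K$ with $\ints_2$-component $b=1$, and verifying that for every irrep outside $S$ the resulting normalized character is controlled by $e^{-\alpha m}$. One must check that the ``swap'' elements contribute either zero (for the induced-type irreps $\{\tau\times\lambda,\tau'\times\lambda'\}$ when the two constituents differ) or a character of $G_0$ at a product of the two $G_0$-coordinates (for the sign-twisted irreps), and that in the latter case the relevant $S_n$-element still has support $\ge m$ because it is built from the permutation-parts of elements of $H_0\le \aut(M)\times(\cdots)$ — this uses that $s^{-1}H_0 s$ is conjugate to $H_0$ and hence has the same minimal degree $m$. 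Once this structural fact about characters of $K$ on $\widehat G$ is nailed down, everything else is the same dimension-counting as in the proof of Theorem~\ref{Theorem:S_n}, with the extra $\GL_k(\FF_q)$ factors kept under control by the standing hypothesis $q^{k^2}\le n^{an}$.
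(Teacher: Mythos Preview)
Your overall plan---choose $S$ to consist of wreath-product irreps built from $\Lambda_c$, set $D=n^{dn}$, and push the three terms of Theorem~\ref{Thm:general} through---is exactly the paper's approach, and the bookkeeping for the second and third terms is essentially right (up to constants in the exponents that are easy to fix once $c,d$ are chosen with $a<1/4$ in mind).

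There is, however, a real gap in your treatment of the first term $\overline{\chi}_{\overline S}(K)$, precisely at the point you flag as ``the main obstacle''. Two related errors:

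\textbf{(i) Dimension of the sign-twisted irreps.} For a wreath product $G_0\wr\ints_2$, the irreps $\{\rho\}$ and $\{\rho\}'$ attached to a single $\rho\in\widehat{G_0}$ have dimension $d_\rho^2$, not $d_\rho$. (Their character at the identity is $\chi_\rho(1)\chi_\rho(1)=d_\rho^2$.)

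\textbf{(ii) The swap elements can give a trivial product.} For $((x,y),1)\in K$ one has $x\in H_0 s$ and $y\in s^{-1}H_0$, so $xy\in H_0$---but $xy$ need \emph{not} be nonidentity. Indeed $((s,s^{-1}),1)$ is a nonidentity element of $K$ with $xy=1$. At this element the character of $\{\rho\}^{\pm}$ equals $\pm\chi_\rho(1)=\pm d_\rho$, so the normalized character is $d_\rho/d_\rho^2=1/d_\rho$. Your claim that ``the relevant $S_n$-element still has support $\ge m$'' is therefore false, and Roichman's bound gives you nothing here.

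This is exactly why the paper proves and invokes Lemma~\ref{Lemma:LargeDimS_n}: for $\lambda\notin\Lambda_c$ one has $d_\lambda\ge e^{\beta n}$, hence $1/d_\rho\le 1/d_\lambda\le e^{-\beta n}$. The paper's Proposition in Section~6.1 records the correct statement
\[
\overline{\chi}_{\{\rho\}}(K)=\overline{\chi}_{\{\rho\}'}(K)=\max\bigl\{\overline{\chi}_\rho(H_0)^2,\;1/d_\rho\bigr\},
\]
and the final bound becomes $\overline{\chi}_{\overline S}(K)\le\max\{e^{-2\alpha m},\,e^{-\beta n}\}\le e^{-\delta m}$ with $\delta=\min\{2\alpha,\beta\}$ (using $m\le n$). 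Without Lemma~\ref{Lemma:LargeDimS_n} your argument does not close; with it, the rest of your outline goes through as you describe.
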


The proof of Theorem~\ref{Thm:McEliece} follows the technical ideas  discussed in the Introduction.  The details appear in Section \ref{Sec:McElieceThm}. 
 As  $q^{k^2}\leq n^{an}$, we have $\log\left|(\GL_k(\mathbb{F}_q)\times S_n)\wr\ints_2\right|=O(\log n!+\log q^{k^2})=O(n\log n)$.  Hence, the subgroup $K$ is indistinguishable if $|K|^2 e^{-\delta m} \leq (n\log n)^{-\omega(1)}$. 

In general, the size of the subgroup $K$ depends on the size of the automorphism group $\aut(M)$ and the column rank of the matrix $M$. To see this, we have $|K|=2|H_0|^2$, and $|H_0|=|\aut(M)|\times |\mathrm{Fix}(M)|$, where  $\mathrm{Fix}(M)$ denotes the set of matrices in $\GL_k(\FF_q)$ fixing $M$, i.e., 
\[
\mathrm{Fix}(M)=\set{S\in \GL_k(\FF_q)\mid SM=M}\,.
\]
To bound  the size of $\mathrm{Fix}(M)$, we record an easy fact which can be obtained by the orbit-stabilizer formula.
\begin{fact}
Let $r$ be the column rank of $M$. Then $|\mathrm{Fix}(M)|=(q^{k}-q^r)(q^k-q^{r+1})\ldots(q^k-q^{k-1})\leq q^{k(k-r)}\,.$
\end{fact}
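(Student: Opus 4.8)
The plan is to establish both assertions of this Fact through the orbit--stabilizer formula applied to the left--multiplication action of $\GL_k(\FF_q)$ on the set of $k\times n$ matrices over $\FF_q$. The first step is to reinterpret $\mathrm{Fix}(M)$ geometrically. Since the $j$-th column of $SM$ is $S$ applied to the $j$-th column of $M$, the equation $SM=M$ holds if and only if $S$ fixes every column of $M$; because the columns of $M$ span the column space $W:=\mathrm{colspace}(M)\subseteq\FF_q^k$, which has dimension $r$, and $S$ acts $\FF_q$-linearly, this is in turn equivalent to $S|_W=\mathrm{id}_W$. Thus $\mathrm{Fix}(M)$ is precisely the pointwise stabilizer of the $r$-dimensional subspace $W$ inside $\GL_k(\FF_q)$, and in particular it depends only on $q$, $k$, and $r$.

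Next I would count this stabilizer. One convenient route is to let $\GL_k(\FF_q)$ act on the set $\mathcal{E}$ of injective $\FF_q$-linear maps $W\hookrightarrow\FF_q^k$ by post--composition: this action is transitive, the stabilizer of the inclusion map is exactly $\mathrm{Fix}(M)$, and $|\mathcal{E}|$ equals the number of ordered linearly independent $r$-tuples of vectors in $\FF_q^k$, namely $\prod_{j=0}^{r-1}(q^k-q^j)$. Hence
\[
|\mathrm{Fix}(M)|=\frac{|\GL_k(\FF_q)|}{|\mathcal{E}|}=\frac{\prod_{j=0}^{k-1}(q^k-q^j)}{\prod_{j=0}^{r-1}(q^k-q^j)}=\prod_{j=r}^{k-1}(q^k-q^j)=(q^k-q^r)(q^k-q^{r+1})\cdots(q^k-q^{k-1}).
\]
Equivalently and just as quickly, one extends a basis $w_1,\dots,w_r$ of $W$ to a basis $w_1,\dots,w_r,v_{r+1},\dots,v_k$ of $\FF_q^k$; any $S$ with $S|_W=\mathrm{id}$ is determined by the images $Sv_{r+1},\dots,Sv_k$, and invertibility amounts to requiring that $Sv_{r+i}$ avoid the $(r+i-1)$-dimensional span of $w_1,\dots,w_r,Sv_{r+1},\dots,Sv_{r+i-1}$, leaving $q^k-q^{r+i-1}$ choices for $i=1,\dots,k-r$, which multiply to the same product.

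For the upper bound I would simply observe that this product has exactly $k-r$ factors, each satisfying $q^k-q^j<q^k$, giving $|\mathrm{Fix}(M)|<q^{k(k-r)}$. If one wants the exponent exactly, writing $q^k-q^j=q^j(q^{k-j}-1)$ gives $|\mathrm{Fix}(M)|=q^{\binom{k}{2}-\binom{r}{2}}\prod_{i=1}^{k-r}(q^i-1)$, and since $\prod_{i=1}^{k-r}(q^i-1)<q^{1+2+\cdots+(k-r)}=q^{\binom{k-r+1}{2}}$, the total exponent is at most $\binom{k}{2}-\binom{r}{2}+\binom{k-r+1}{2}=k(k-r)$. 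I do not expect any genuine obstacle here; the only point that deserves to be stated with care is the opening reduction, namely that $SM=M$ is equivalent to $S$ acting as the identity on $\mathrm{colspace}(M)$, since it is exactly this that makes the count depend only on the rank $r$ and not on the particular matrix $M$.
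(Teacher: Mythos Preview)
Your proof is correct and follows essentially the same approach as the paper: both apply the orbit--stabilizer theorem to obtain $|\mathrm{Fix}(M)|=|\GL_k(\FF_q)|/\prod_{j=0}^{r-1}(q^k-q^j)$, the paper by acting on $k\times n$ matrices and computing the orbit of $M$ directly, you by first identifying $\mathrm{Fix}(M)$ with the pointwise stabilizer of the column space and then acting on injective maps $W\hookrightarrow\FF_q^k$. Your additional direct count via extending a basis of $W$ is also valid and gives the same product.
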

\begin{proof}
WLOG, assume the first $r$ columns of $M$ are linearly independent, and each remaining column is a linear combination of the first $r$ columns. Consider the action of $\GL_k(\FF_q)$ on the set of $k\times n$ matrices over $\FF_q$. Under this action, the orbit of the matrix $M$, denoted $\mathrm{Orb}(M)$, consists of all $k\times m$ matrices over $\FF_q$ such that the first $r$ columns are linearly independent, and each $j^{\rm th}$ column, for $j>r$, consists of the same linear combination of the first $r$ columns as that of the  $j^{\rm th}$ column of the matrix $M$. Hence, the size of $\mathrm{Orb}(M)$ equals the number of $k\times r$ matrices over $\FF_q$ of column rank $r$. Thus, $\mathrm{Orb}(M)=(q^k-1)(q^k-q)\ldots (q^k-q^{r-1})$. On the other hand, $\mathrm{Fix}(M)$ is the stabilizer of $M$. By the orbit-stabilizer formula, we have
\[
|\mathrm{Fix}(M)|=\frac{|\GL_k(\FF_q)|}{|\mathrm{Orb}(M)|}
=\frac{(q^k-1)(q^k-q)\ldots (q^k-q^{k-1})}{(q^k-1)(q^k-q)\ldots (q^k-q^{r-1})}
=(q^{k}-q^r)(q^k-q^{r+1})\ldots(q^k-q^{k-1})\,.
\]
\end{proof}

\begin{corollary}\label{Cor:McEliece} 
Assume $q^{k^2}\leq n^{0.2n}$ and the automorphism group $\aut(M)$ has minimal degree  $\Omega(n)$. Let $r$ be the column rank of $M$. Then the subgroup $K$ defined in \eqref{Eq:SubgroupK_McEliece} has $\mathcal{D}_K \leq |\aut(M)|^4 q^{4k(k-r)} e^{-\Omega(n)}$. In particular, the subgroup $K$ is indistinguishable if, further, $|\aut(M)| \leq e^{o(n)}$ and $r\geq k- o(\sqrt{n})$.
\end{corollary}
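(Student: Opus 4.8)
The plan is to read the corollary off from Theorem~\ref{Thm:McEliece} together with the combinatorial bounds on $|K|$ already recorded above; there is no new representation theory to do. First I would check that the hypotheses of Theorem~\ref{Thm:McEliece} hold: the assumption $q^{k^2}\le n^{0.2n}$ is exactly $q^{k^2}\le n^{an}$ with $a=0.2<1/4$, and $\aut(M)$ has minimal degree $m=\Omega(n)$ by hypothesis. Theorem~\ref{Thm:McEliece} then supplies, for all sufficiently large $n$, a constant $\delta>0$ with $\mathcal{D}_K\le O\!\left(|K|^2 e^{-\delta m}\right) = O\!\left(|K|^2 e^{-\Omega(n)}\right)$.

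Next I would substitute the size of $K$. Using $|K| = 2|H_0|^2$, the identity $|H_0| = |\aut(M)|\cdot|\mathrm{Fix}(M)|$, and the bound $|\mathrm{Fix}(M)|\le q^{k(k-r)}$ from the Fact above (with $r$ the column rank of $M$), we get $|H_0|\le |\aut(M)|\,q^{k(k-r)}$, hence $|K|^2 = 4|H_0|^4\le 4\,|\aut(M)|^4 q^{4k(k-r)}$. Plugging this into the previous estimate and absorbing the leading constant into the exponent gives $\mathcal{D}_K\le |\aut(M)|^4 q^{4k(k-r)} e^{-\Omega(n)}$, which is the first assertion of the corollary.

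For the ``in particular'' clause I would substitute the two extra hypotheses. From $|\aut(M)|\le e^{o(n)}$ we get $|\aut(M)|^4 \le e^{o(n)}$. For the remaining factor, writing $k(k-r)\log q = \tfrac{k-r}{k}\cdot k^2\log q$ and combining $k^2\log q\le 0.2\,n\log n$ with $k-r = o(\sqrt n)$ yields $q^{4k(k-r)} = e^{o(n)}$. Hence $\mathcal{D}_K\le e^{o(n)}\cdot e^{-\Omega(n)} = e^{-\Omega(n)}$; since $\log\bigl|(\GL_k(\FF_q)\times S_n)\wr\ints_2\bigr| = O(n\log n)$, this is $\log^{-\omega(1)}|G|$, so $K$ is indistinguishable by strong Fourier sampling. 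The entire argument is bookkeeping once Theorem~\ref{Thm:McEliece} is in hand; the only step requiring a little care is the last asymptotic estimate $q^{4k(k-r)}=e^{o(n)}$, which is precisely where the constant $0.2$ in $q^{k^2}\le n^{0.2n}$ and the ``well-scrambled'' bound $r\ge k-o(\sqrt n)$ are used, and which is cleanest in the regime $k=\Theta(n)$ relevant to Goppa codes.
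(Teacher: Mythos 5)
Your bookkeeping reproduces exactly the paper's intended argument: the corollary is meant to be read off from Theorem~\ref{Thm:McEliece} after substituting $|K|^2 = 4|H_0|^4$, $|H_0| = |\aut(M)|\cdot|\mathrm{Fix}(M)|$, and the Fact $|\mathrm{Fix}(M)|\le q^{k(k-r)}$, so your derivation of $\mathcal{D}_K\le |\aut(M)|^4\,q^{4k(k-r)}e^{-\Omega(n)}$ is correct and is the paper's proof in substance.

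The one place that needs more care is the asymptotic step $q^{4k(k-r)}=e^{o(n)}$ in the ``in particular'' clause, and in particular your closing remark that it ``is cleanest in the regime $k=\Theta(n)$ relevant to Goppa codes.'' That regime is in fact \emph{excluded} by the hypothesis $q^{k^2}\le n^{0.2n}$: for any $q\ge 2$ this forces $k^2\log q\le 0.2n\log n$, hence $k=O(\sqrt{n\log n})$, which is $o(n)$. Worse, your identity $k(k-r)\log q=\tfrac{k-r}{k}\cdot k^2\log q\le \tfrac{k-r}{k}\cdot 0.2n\log n$ gives $o(n)$ only when $k-r=o(k/\log n)$, and since $k=O(\sqrt{n\log n})$, the hypothesis $k-r=o(\sqrt n)$ is not by itself enough to guarantee that (one would want something like $k-r=o(\sqrt{n/\log n})$, or simply $r=k$, which is the full-rank case that actually matters for the rational Goppa codes the paper has in mind). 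This looseness is present in the paper's own statement of the corollary and is not something your argument introduces, but the parenthetical about $k=\Theta(n)$ should be dropped and the gap flagged rather than glossed over.
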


In the a case that the matrix $M$ generates a rational Goppa code, then $M$ has full rank and the automorphism group  $\aut(M)$ is isomorphic to a subgroup of the projective linear group $\PGL_2(\FF_q)$, provided $2\leq k\leq n-2$, by Stichtenoth's Theorem \citep{Ref_Stichtenoth90on} (see Appendix \ref{ApxSec:Goppa} for more detailed background on rational Goppa codes). This important property results in very good values we could desire for the automorphism group $\aut{(M)}$: we have $|\aut(M)|\leq |\PGL_2(\FF_q)|\leq q^3$, and moreover,
\begin{lemma}\label{Lemma:MinSupportAut}
If $M$ generates a rational Goppa code, the minimal degree of $\aut(M)$ is at least $n-3$.
\end{lemma}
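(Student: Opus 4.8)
The plan is to deduce the bound from Stichtenoth's Theorem together with the elementary fact that a non-identity M\"obius transformation fixes at most two points of the projective line. Recall from Appendix~\ref{ApxSec:Goppa} that a rational Goppa code of length $n$ is a code $C_L(D,G)$ on $\mathbb{P}^1$ over $\FF_q$, where $L=\{P_1,\dots,P_n\}$ is a set of $n$ distinct rational points, $D=P_1+\cdots+P_n$, and $\deg G=k-1$ with $\mathrm{supp}(G)\cap L=\emptyset$. First I would invoke Stichtenoth's Theorem \citep{Ref_Stichtenoth90on} in the form that, when $2\le k\le n-2$, identifies $\aut(M)$ with a subgroup of $\aut_{\FF_q}(\mathbb{P}^1)=\PGL_2(\FF_q)$ in a way compatible with the coordinate action: every permutation automorphism $\sigma\in\aut(M)$ is induced by a unique M\"obius transformation $\gamma_\sigma$ with $\gamma_\sigma(L)=L$, and the permutation of coordinates satisfies $P_{\sigma(i)}=\gamma_\sigma(P_i)$. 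Thus $\aut(M)$ acts faithfully on the $n$-element set $L$ (identified with $[n]$) by restriction of M\"obius transformations.

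The second step is an elementary fixed-point count. Let $\sigma\in\aut(M)\setminus\{1\}$ and put $\gamma=\gamma_\sigma$. Since $n\ge k+2\ge 4>2$ while $\gamma|_L=\sigma\ne\mathrm{id}$, the transformation $\gamma$ is not the identity of $\PGL_2(\FF_q)$. A non-identity M\"obius transformation fixes at most two points of $\mathbb{P}^1$: it is determined by its image on any three points, so fixing three points forces $\gamma=\mathrm{id}$ (alternatively, writing $\gamma(z)=\frac{az+b}{cz+d}$, the affine fixed points solve the degree-$\le 2$ polynomial $cz^2+(d-a)z-b=0$, and $\infty$ is fixed only when $c=0$, contributing at most one more). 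Hence $\gamma$ fixes at most two points of $\mathbb{P}^1(\FF_q)\supseteq L$, so $\sigma=\gamma|_L$ fixes at most two elements of $L$ and $\mathrm{supp}(\sigma)\ge n-2$. Taking the minimum over all non-identity $\sigma$ shows that the minimal degree of $\aut(M)$ is at least $n-2$, which is in particular at least $n-3$.

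I expect the only real obstacle to be pinning down the first step cleanly: one must make sure that the isomorphism furnished by Stichtenoth's Theorem is exactly the one realizing $\aut(M)$ as a group of M\"obius transformations acting on $L$ by restriction (and not merely an abstract group isomorphism with an a priori unrelated permutation action), and one should double-check that the hypotheses are satisfied---the condition $2\le k\le n-2$ is precisely what guarantees this, and it also forces $n\ge 4$, so that the M\"obius representative of any nontrivial automorphism is itself nontrivial. Once this dictionary is in place the argument is purely combinatorial. The slack between the $n-2$ obtained above and the $n-3$ stated in the lemma harmlessly absorbs any discrepancy between the precise normalization of Stichtenoth's Theorem used in the appendix---for instance whether the point at infinity or a point of $\mathrm{supp}(G)$ is required to be fixed---and the clean ``setwise stabilizer of $L$'' picture used here.
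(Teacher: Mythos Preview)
Your approach is essentially the same as the paper's: invoke Stichtenoth's theorem to realize $\aut(M)$ inside $\PGL_2(\FF_q)$ acting on the support points, then use that a non-identity M\"obius transformation fixes at most two points of $\PP^1$, hence any nontrivial automorphism moves at least $n-2\ge n-3$ coordinates. Your discussion is in fact more careful than the paper's sketch, since you explicitly flag the need for the isomorphism to be compatible with the permutation action (not merely abstract) and note the slack between $n-2$ and the stated $n-3$.
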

\begin{proof} (sketch)
Since $\aut(M)$ is isomorphic to a subgroup of $\PGL_k(\mathbb{F}_q)$, 
the proof is based on the observation that  any transformation in $\PGL_k(\mathbb{F}_q)$ that fixes at least three distinct projective lines must be the identity.
\end{proof}

By Corollary \ref{Cor:McEliece}, the McEliece crytosystem coupled with rational Goppa codes resists known quantum attacks based on strong Fourier sampling. Unfortunately, this cryptosystem is insecure due to the classical attack by Sidelnokov and Shestakov \cite{Ref_Sidelnikov92insecurity} that recovers the secret scrambler $A$ and the product $MP$, but does not reveal the secret permutation $P$. Of course, our next goal will be to find other classes of linear codes with which the McEliece cryptosystem would be secure against both classical and quantum attacks.

\subsection{Strong Fourier sampling over  $\GL_2(\mathbb{F}_q)$}
In this simple application, we consider the finite general linear group $G=\GL_2(\mathbb{F}_q)$, whose structure as well as irreps are well established \citep[\S 5.2]{Ref_Fulton91representation}. 
From the character table of $\GL_{2}(\mathbb{F}_q)$, which can be found in Appendix \ref{ApxSec:GL_2}, we draw the following easy facts:

\begin{fact}
Let $\sigma$ be an irrep of $\GL_{2}(\mathbb{F}_q)$. Then
\textit{(i)} For all $g\in \GL_{2}(\mathbb{F}_q)$, $|\chi_{\sigma}(g)|=d_{\sigma}$ if $g$ is a scalar matrix, and $|\chi_{\sigma}(g)|\leq 2$ otherwise. 
\textit{(ii)} If $d_{\sigma}>1$, then $q-1\leq d_{\sigma}\leq q+1$.
\end{fact}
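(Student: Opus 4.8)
The plan is to read both claims directly off the character table of $\GL_2(\FF_q)$, which is recalled in Appendix~\ref{ApxSec:GL_2}. Recall that $\GL_2(\FF_q)$ has four families of irreps: the $q-1$ one-dimensional reps $U_\alpha$ (indexed by characters $\alpha$ of $\FF_q^\times$, given by $g\mapsto \alpha(\det g)$); the $q-1$ ``Steinberg-twisted'' reps $V_\alpha$ of dimension $q$; the $\binom{q-1}{2}$ principal-series reps $W_{\alpha,\beta}$ of dimension $q+1$; and the $\binom{q}{2}$ cuspidal (discrete-series) reps $X_\varphi$ of dimension $q-1$. For part~\textit{(ii)}, one simply observes that the only dimensions occurring are $1, q-1, q, q+1$, so any $\sigma$ with $d_\sigma>1$ has $q-1\le d_\sigma\le q+1$; this needs no argument beyond inspecting the list.

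For part~\textit{(i)}, the first half is immediate: if $g=\lambda I$ is scalar, then $\sigma(g)$ is a scalar matrix by Schur's lemma (since $\lambda I$ is central), hence $\sigma(g)=\omega_\sigma(\lambda)\,\mathrm{Id}_{V_\sigma}$ for the central character $\omega_\sigma$, and $|\chi_\sigma(g)|=|\omega_\sigma(\lambda)|\cdot d_\sigma = d_\sigma$ because $\omega_\sigma(\lambda)$ is a root of unity. The second half — that $|\chi_\sigma(g)|\le 2$ for every non-scalar $g$ — is what one checks against the table, conjugacy class by conjugacy class. The non-scalar classes are: the non-semisimple (unipotent-times-scalar) class with representative $\left(\begin{smallmatrix}\lambda & 1\\ 0 & \lambda\end{smallmatrix}\right)$; the split semisimple class $\left(\begin{smallmatrix}\lambda & 0\\ 0 & \mu\end{smallmatrix}\right)$ with $\lambda\ne\mu$; and the non-split (elliptic) semisimple class with eigenvalues $\xi,\xi^q\in\FF_{q^2}\setminus\FF_q$. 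For each of the four rep-families, the character value on each of these three classes is, up to a root-of-unity factor, one of $0$, $\pm 1$, $\pm 2$, or a sum $\alpha(\lambda)+\alpha(\mu)$ of two roots of unity; in every case the modulus is at most $2$. I would present this as a short table or a three-by-four case check, noting the only places the bound $2$ is attained (e.g. the principal series $W_{\alpha,\beta}$ on a split semisimple element where $\alpha(\lambda)\beta(\mu)+\alpha(\mu)\beta(\lambda)$ can have modulus $2$).

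There is no real obstacle here — this is a bookkeeping lemma — but the one point requiring care is making sure the case analysis is exhaustive and that I am using the \emph{correct normalization} of the character table (the table in Appendix~\ref{ApxSec:GL_2} must be the one being cited). In particular I should double-check the non-semisimple class, where a naive reading of the Steinberg character gives value $0$ and the principal series gives value $1$ (not $q$ or $q+1$), so that the bound genuinely holds there; the elliptic class is the easiest, since all character values there are among $0,\pm1,\pm2$. Once the table is in hand, the proof is a single paragraph: ``Inspect Appendix~\ref{ApxSec:GL_2}; the claimed bounds on $|\chi_\sigma(g)|$ and $d_\sigma$ hold for each of the four families of irreps.''
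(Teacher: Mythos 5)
Your proposal is correct and takes exactly the approach the paper intends: the Fact is stated as a direct read-off of the character table in Appendix~\ref{ApxSec:GL_2}, with no separate proof given, and your class-by-class check (plus the Schur's-lemma observation for the scalar case) fills in precisely that bookkeeping. The four dimensions occurring are $1,q-1,q,q+1$ and every non-scalar character entry in the table is $0$, a root of unity, or a sum of two roots of unity, so both claims follow.
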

\remove{
\begin{fact}
There are $q-1$ linear representations of $\GL(2,q)$ (corresponding to the $q-1$ characters of the cyclic group $\mathbb{F}^*_q$), 
\end{fact}
}
Let $H$ be a subgroup of $\GL_2(\mathbb{F}_q)$. If $H$ contains a non-identity scalar matrix, we have $\overline{\chi}_{\sigma}(H)=1$ for all $\sigma$, making it impossible to find a set of irreps whose maximal normalized characters on $H$ are small enough to apply our general theorem (Theorem \ref{Thm:general}). For this reason, we shall assume that $H$ does not contain scalar matrices except for the identity. 
An example of such a subgroup $H$ is any group lying inside the subgroup of triangular unipotent matrices
\(
\set{
T(b) \mid b\in \mathbb{F}_q }
\), 
where 
$T(b):=\begin{pmatrix}
  1    &  b   \\
  0    &  1
\end{pmatrix}$.

From the easy facts above for $\GL(2,q)$, it is natural to choose the set $S$ in Theorem \ref{Thm:general} to be the set of linear (i.e., $1$-dimensional) representations, and choose the dimensional threshold $D$ to be $q-1$. However, since $\GL(2,q)$ has $q-1$ linear representations,  
i.e., $|S|=D$, we can't upper bound $\Delta$ by $|S|$. We prove the following lemma to provide a strong upper bound on $\Delta$, which is, in this case, the maximal number of linear representations appearing in the decomposition of $\rho\otimes\rho^*$, for any nonlinear irrep $\rho$.

\begin{lemma}\label{Lemma:LinearRep_0}
Let $\rho$ be an irrep of $\GL(2,q)$. Then at most two linear representations appear in the decomposition of $\rho\otimes \rho^*$. 
\end{lemma}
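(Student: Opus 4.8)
The plan is to compute the inner product $\langle \chi_{\rho\otimes\rho^*}, \chi_\alpha\rangle$ for each linear character $\alpha$ of $\GL_2(\FF_q)$ and show that it can be nonzero for at most two choices of $\alpha$. Since $\chi_{\rho\otimes\rho^*}(g) = |\chi_\rho(g)|^2$, the multiplicity of $\alpha$ in $\rho\otimes\rho^*$ equals $\langle |\chi_\rho|^2, \alpha\rangle = \frac{1}{|G|}\sum_{g} |\chi_\rho(g)|^2 \overline{\alpha(g)}$. Every linear character $\alpha$ of $\GL_2(\FF_q)$ factors through the determinant, so $\alpha(g) = \psi(\det g)$ for some character $\psi$ of $\FF_q^*$; thus I really need to understand how $|\chi_\rho(g)|^2$ varies across the cosets of $\SL_2(\FF_q)$, i.e. as a function of $\det g$.

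First I would recall from the character table (Appendix~\ref{ApxSec:GL_2}) the four families of irreps of $\GL_2(\FF_q)$ and, for each, the function $g\mapsto |\chi_\rho(g)|^2$. The key structural observation is that $|\chi_\rho(g)|^2$ depends on $g$ only through its conjugacy class, and for the non-central classes the character values have absolute value at most $2$ (by the Fact quoted just above), while on the center $Z = \{zI\}$ one has $|\chi_\rho(zI)|^2 = d_\rho^2$ with $\chi_\rho(zI) = \omega_\rho(z) d_\rho$ for the central character $\omega_\rho$. So I would split the sum defining the multiplicity as a sum over $Z$ plus a sum over $G\setminus Z$.

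The main point, which I expect to be the crux, is that the dominant contribution is the central one: $\frac{1}{|G|}\sum_{z\in\FF_q^*} d_\rho^2 \,\overline{\psi(z^2)}$, since $\det(zI) = z^2$. This is $\frac{d_\rho^2}{|G|}\sum_{z} \overline{\psi(z)^2}$, which vanishes unless $\psi^2$ is trivial, i.e. unless $\psi$ is one of the (at most two) characters of $\FF_q^*$ of order dividing $2$. For all other $\psi$ the central term is exactly $0$, and then I must argue the remaining sum over non-central classes is too small in absolute value to produce a positive integer multiplicity: using $|\chi_\rho(g)|\le 2$ off the center, that tail is bounded by $\frac{4(|G|-|Z|)}{|G|} < 4$, which is not quite good enough on its own, so I would instead use the orthogonality relation $\frac{1}{|G|}\sum_g |\chi_\rho(g)|^2 = 1$ to write $\langle |\chi_\rho|^2,\psi\circ\det\rangle = 1 - \frac{1}{|G|}\sum_g |\chi_\rho(g)|^2(1 - \overline{\psi(\det g)})$ and exploit that $|\chi_\rho(g)|^2$ is large only on the center — more precisely, bound $\sum_{g\notin Z}|\chi_\rho(g)|^2 = |G| - d_\rho^2|Z|$, which for $d_\rho \ge q-1$ is a small fraction of $|G|$, forcing the multiplicity to lie strictly between $0$ and $2$ hence be at most $1$, for all but the two square-trivial characters $\psi$.

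Finally I would handle the linear $\rho$ separately (where $\rho\otimes\rho^*$ is itself a single linear character, trivially satisfying the bound) and assemble the cases: at most the two characters $\psi$ with $\psi^2 = 1$ can appear, so at most two linear representations occur in $\rho\otimes\rho^*$. The one genuine obstacle is making the off-center estimate airtight for the small values of $q$; I would check that $q - 1$ (the minimum of $d_\rho$ for nonlinear $\rho$) gives $d_\rho^2|Z| = (q-1)^3 \ge \frac{1}{2}|G|$ — indeed $|G| = q(q-1)^2(q+1)$ and $(q-1)^3$ versus $q(q-1)^2(q+1)$ is a factor of $(q-1)/(q(q+1))$, which is unfortunately \emph{small}, not large, so this naive route fails and I would instead directly read off from the explicit character table that the \emph{only} classes where $|\chi_\rho|^2$ exceeds $4$ are central, compute the off-center sum exactly family-by-family, and verify the multiplicity is $0$ or $1$ except for $\psi^2=1$. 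This case analysis over the four irrep families is the real work.
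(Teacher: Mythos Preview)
Your final plan---show the central contribution vanishes unless $\psi^2$ is trivial (which happens for at most two $\psi$), then compute the remaining contribution exactly from the character table, family by family---is precisely the paper's approach: it splits the inner product over the four conjugacy-class types $a_x,b_x,c_{x,y},d_{x,y}$ and shows each piece vanishes when $\alpha^2$ is nontrivial, treating the three nonlinear families $V_\beta$, $W_{\beta,\beta'}$, $X_\varphi$ separately.

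Two comments. First, your stated target is off: you write that you want the multiplicity to be ``at most $1$'' or ``$0$ or $1$'' for $\psi^2\neq 1$, but you need it to be \emph{exactly zero}---there are roughly $q-3$ such $\psi$, so a bound of $1$ on each says nothing about the lemma. The family-by-family computation will in fact give zero, but make sure that is your goal.

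Second, you abandoned the slick route too quickly. For $\psi^2\neq 1$ the central sum is exactly $0$, and the off-center piece satisfies
\[
\Bigl|\tfrac{1}{|G|}\sum_{g\notin Z}|\chi_\rho(g)|^2\,\overline{\psi(\det g)}\Bigr|
\;\le\; \tfrac{1}{|G|}\sum_{g\notin Z}|\chi_\rho(g)|^2
\;=\; 1-\tfrac{(q-1)\,d_\rho^{\,2}}{|G|}\,.
\]
For nonlinear $\rho$ one has $d_\rho\ge q-1$, so this is at most $1-\tfrac{(q-1)^3}{q(q-1)^2(q+1)}=\tfrac{q^2+1}{q^2+q}<1$. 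Hence the multiplicity, a nonnegative integer of absolute value less than $1$, is $0$---no case analysis needed. You computed the ratio $(q-1)/\bigl(q(q+1)\bigr)$ and despaired that it was small; but the relevant quantity is its \emph{complement} $1-(q-1)/\bigl(q(q+1)\bigr)$, which only needs to be strictly below $1$, and it is.
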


The proof for this lemma can be found in Appendix \ref{ApxSec:GL_2}. Then applying Theorem \ref{Thm:general} with $S$ being the set of linear representations, and $L$ being the set of non-linear irreps of $\GL_2(\mathbb{F}_q)$, we have: 

\begin{corollary}\label{Theorem:GL_2_q}
Let $H$ be a subgroup of $\GL_2(\FF_q)$ that does not contain any scalar matrix other than the identity.  Then $\mathcal{D}_H\leq {28|H|^2}/{q}$.
\end{corollary}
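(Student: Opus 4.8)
The plan is to apply Theorem~\ref{Thm:general} directly with the choices $S = \{\text{linear irreps of }\GL_2(\FF_q)\}$ and $D = q-1$, and then simply bound each of the three terms in the main theorem's estimate. First I would verify the hypothesis $D > d_S^2$: every $\sigma\in S$ is $1$-dimensional, so $d_S = 1$ and $D = q-1 > 1$ for $q\geq 3$ (small $q$ can be absorbed into the constant, or handled by noting the bound $28|H|^2/q$ is trivially $\geq 2 \geq \mathcal{D}_H$ when $q$ is a small constant). Then $L = L_D$ is exactly the set of nonlinear irreps, since by the easy fact every nonlinear irrep has dimension $\geq q-1 = D$ and every linear irrep has dimension $1 < D$; hence $\overline{L} = S$ and $|\overline{L}| = |S| = q-1$.

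Next I would bound the three terms. For the character term: since $H$ contains no non-identity scalar, every $h\in H\setminus\{1\}$ is a non-scalar matrix, so by part (i) of the easy fact $|\chi_\sigma(h)| \leq 2$ for any irrep $\sigma$ with $d_\sigma > 1$, giving $|\chi_\sigma(h)|/d_\sigma \leq 2/(q-1)$; thus $\overline{\chi}_{\overline{S}}(H) \leq 2/(q-1)$ (here $\overline{S}$ is exactly the set of nonlinear irreps). For the $\Delta$ term: by Lemma~\ref{Lemma:LinearRep_0}, $|S\cap I(\rho\otimes\rho^*)| \leq 2$ for every nonlinear $\rho$, so $\Delta \leq 2$, and since $d_S = 1$ and $D = q-1$ this term contributes $\Delta \cdot d_S^2/D \leq 2/(q-1)$. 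For the last term: $|\overline{L}| = q-1$, $D^2 = (q-1)^2$, and $|\GL_2(\FF_q)| = (q^2-1)(q^2-q) = q(q-1)^2(q+1)$, so $|\overline{L}|D^2/|G| = (q-1)^3/(q(q-1)^2(q+1)) = (q-1)/(q(q+1)) \leq 1/q$. Adding these, the parenthesized quantity in Theorem~\ref{Thm:general} is at most $2/(q-1) + 2/(q-1) + 1/q \leq 7/q$ for $q$ large enough (crudely, $2/(q-1)\leq 3/q$ once $q\geq 4$, so the sum is $\leq 3/q+3/q+1/q = 7/q$), whence $\mathcal{D}_H \leq 4|H|^2 \cdot 7/q = 28|H|^2/q$.

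The only real subtlety — and hence the "main obstacle," though it is mild — is bookkeeping the small-$q$ cases and the exact constants so that the clean bound $28|H|^2/q$ holds for \emph{all} $q$ (not just asymptotically), which is why the statement uses the somewhat generous constant $28$ rather than something tighter. One checks that for small $q$ the inequality is vacuous or follows from the trivial bound $\mathcal{D}_H \leq 4|H|^2$ (indeed $4|H|^2 \leq 28|H|^2/q$ whenever $q\leq 7$), so it suffices to carry out the estimate above for $q\geq 7$, where all the crude inequalities $2/(q-1)\leq 3/q$ etc.\ are comfortably valid. Everything else is routine substitution into Theorem~\ref{Thm:general}, using only the character table facts and Lemma~\ref{Lemma:LinearRep_0}.
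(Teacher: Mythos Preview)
Your proposal is correct and follows essentially the same approach as the paper: both choose $S$ to be the linear irreps, $D=q-1$ (so $L=\overline{S}$), bound the three terms of Theorem~\ref{Thm:general} by $2/(q-1)$, $2/(q-1)$, and $(q-1)/(q(q+1))<1/q$ respectively using the character-table fact and Lemma~\ref{Lemma:LinearRep_0}, and then collapse the sum to $7/q$. Your extra care with the small-$q$ cases is a welcome addition that the paper leaves implicit.
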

\begin{proof}[Proof of Corollary \ref{Theorem:GL_2_q}]
Let $S$ be the set of linear representations of $\GL_2(\mathbb{F}_q)$ and let $D=q-1$. Then in this case, $L_D$ is the set of all non-linear irreps of $\GL_2(\mathbb{F}_q)$.

Since $\overline{\chi}_{\sigma}(H)\leq 2/(q-1)$ for all nonlinear irrep $\sigma$, we have 
$$\overline{\chi}_{\overline{S}}(H)\leq 2/(q-1)\leq 0.5/|H| \,.$$

To bound the second term in the bound of \ref{Thm:general}, we have $\Delta \leq 2$ by Lemma \ref{Lemma:LinearRep_0} and $d_S=1$, thus
\[
\Delta\frac{d_S^2}{D} \leq 2/(q-1) \leq 3/q\,.
\]
As $|G|=(q-1)^2q(q+1)$ and $|\overline{L_D}|=|S|=q-1$, we have
\[
\frac{|\overline{L_D}| D^2}{|G|} =\frac{(q-1)^3}{(q-1)^2q(q+1)} = \frac{q-1}{q(q+1)} < 1/q\,.
\] 
By Theorem \ref{Thm:general}, \( \mathcal{D}_H\leq 4|H|^2\left(7/q\right)\,.
\)
\end{proof}

In particular, $H$ is indistinguishable by strong Fourier sampling over $\GL_2(\FF_q)$ if $|H|\leq q^{\delta}$ for some $\delta<1/2$, because in that case we have
\(
\mathcal{D}_H\leq 28q^{2\delta -1} \leq \log^{-c}|\GL_2(\FF_q)|
\)
for all constant $c>0$.

\paragraph{Examples of indistinguishable subgroups.} As a specific example, consider a cyclic subgroup $H_b$  generated by a triangular unipotent matrix $T(b)$
\remove{
$\begin{pmatrix}
  1    &  b   \\
  0    &  1
\end{pmatrix}$,} for any $b\neq 0$. 
Since $T(b)^k = T(kb)$
\remove{
$\begin{pmatrix}
  1    &  b   \\
  0    &  1
\end{pmatrix}^k = \begin{pmatrix}
  1    &  kb   \\
  0    &  1
\end{pmatrix}$} for any integer $k\geq 0$, the order of $H_b$ is the least positive integer $k$ such that $kb=0$. In particular, the order of $H_b$ equals the characteristic of the finite field $\mathbb{F}_q$.
Suppose $q=p^n$ for some prime number $p$ and $n>2$. Then $\mathbb{F}_q$ has characteristic $p$, and hence, $|H_b|=p$. By Corollary \ref{Theorem:GL_2_q}, we have $\mathcal{D}_{H_b}\leq O(p^{2-n})$. 

Similarly, consider a subgroup $H_{a,b}$ generated by two distinct non-identity elements $T(a)$ and $T(b)$.
\remove{ 
\(
\begin{pmatrix}
  1    &  a   \\
  0    &  1
\end{pmatrix} ~ \text{and}~ \begin{pmatrix}
  1    &  b   \\
  0    &  1
\end{pmatrix}\,.
\)}
Since elements of $H_{a,b}$ are of the form $T(ka+\ell b)$ 
\remove{
$\begin{pmatrix}
  1    &  ka+\ell b   \\
  0    &  1
\end{pmatrix}$
}
for $k,\ell\in\set{0,1,\ldots,p-1}$, we have $|H_{a,b}|\leq p^2$. 
Thus, the distinguishability of $H_{a,b}$ using strong Fourier sampling over $\GL_2(\FF_{p^n})$ is $O(p^{4-n})$. Clearly, both $H_b$ and $H_{a,b}$ are indistinguishable, for $n$ sufficiently large.  More generally, any subgroup generated by a constant number of triangular unipotent matrices is indistinguishable.  


\section{Bounding distinguishability}\label{Sec:MainTheoremProof}
We  now present the proof for the main theorem (Theorem \ref{Thm:general}) in details. 
Fixing a nontrivial subgroup $H<G$, we want to upper bound $\mathcal{D}_H$.
Let us start with bounding the expectation over the random group element $g\in G$, for a fixed irrep $\rho\in\widehat{G}$:  
\[
E_H(\rho) \eqdef \expect[g]{\|P_{H^g}(\cdot\mid\rho)-U_{B_{\rho}}\|_{1}^2}\,.
\]
Obviously we always have $E_H(\rho) \leq 4$. More interestingly, we have
\begin{eqnarray}
E_H(\rho) 
&=&\expect[g]{\left(\sum_{\mathbf{b}\in B_{\rho}} \left|P_{H^g}(\mathbf{b}\mid\rho)-\frac{1}{d_{\rho}}\right|\right)^2}  \nonumber\\
&\leq &  \expect[g]{d_{\rho}\sum_{\mathbf{b}\in B_{\rho}}\left(P_{H^g}(\mathbf{b}\mid\rho)-\frac{1}{d_{\rho}}\right)^2} 
\quad\text{(by Cauchy-Schwarz)}\nonumber\\
&=&   d_{\rho}\sum_{\mathbf{b}\in B_{\rho}} \var[g]{P_{H^g}(\mathbf{b}\mid\rho)} 
\qquad\qquad\quad\text{(since $\expect[g]{P_{H^g}(\mathbf{b}\mid\rho)}=\frac{1}{d_{\rho}}$)} \nonumber\\
&=& \frac{d_{\rho}}{\tr(\Pi_H^{\rho})^2}\sum_{\mathbf{b}\in B_{\rho}} \var[g]{\|\Pi_{H^g}^{\rho}\mathbf{b}\|^2}\label{Eq:E_rho_1}\,. 
\end{eqnarray}
\noindent
The equation $\expect[g]{P_{H^g}(\mathbf{b}\mid\rho)}={1}/{d_{\rho}}$ (Proposition \ref{Prop:ExpSchur} in Appendix \ref{ApxSec:MainThm}) can be shown using \emph{Schur's lemma}.

From \eqref{Eq:E_rho_1}, we are motivated to bound the variance of $\|\Pi_{H^g}^{\rho}\mathbf{b}\|^2$ when $g$ is chosen uniformly at random. We provide an upper bound that depends on the projection of the vector $\mathbf{b}\otimes\mathbf{b}^*$ onto irreducible subspaces of $\rho\otimes\rho^*$, and on maximal normalized characters of $\sigma$ on $H$ for all irreps $\sigma$ appearing in the decomposition of $\rho\otimes\rho^*$. 
Recall that the representation $\rho\otimes\rho^*$ is typically reducible and can be written as an orthogonal direct sum of irreps $\rho\otimes\rho^*=\bigoplus_{\sigma\in\widehat{G}} a_{\sigma}\sigma$, where $a_{\sigma}\geq 0$ is the multiplicity of $\sigma$. Then $I(\rho\otimes \rho^*)$ consists of $\sigma$  with $a_{\sigma}>0$, and 
we let $\Pi_{\sigma}^{\rho\otimes \rho^*}$ denote the projection operator whose image is $a_{\sigma}\sigma$, that is, the subspace spanned by all copies of $\sigma$.
Our upper bound given in Lemma \ref{Lemma:QFS-var} below generalizes the bound given in Lemma 4.3 of \citep{Ref_Moore08symmetric}, which only applies to subgroups $H$ of order 2. 
\begin{lemma}\label{Lemma:QFS-var} Let $\rho$ be an irrep of $G$. Then for any vector $\mathbf{b}\in V_{\rho}$, 
\[
\var[g]{\|\Pi_{H^g}^{\rho}\mathbf{b}\|^2} 
\leq 
\sum_{\sigma \in I(\rho\otimes \rho^*)}\overline{\chi}_{\sigma}(H) \left\| \Pi_{\sigma}^{\rho\otimes \rho^*}(\mathbf{b}\otimes\mathbf{b}^*)\right\|^2\,.
\]
\end{lemma}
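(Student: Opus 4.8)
The plan is to compute $\var[g]{\|\Pi_{H^g}^{\rho}\mathbf{b}\|^2}$ directly by expanding both the first and second moments over the uniform choice of $g\in G$, and then recognizing the resulting averages as projections inside the tensor-square representation $\rho\otimes\rho^*$. First I would write $\|\Pi_{H^g}^{\rho}\mathbf{b}\|^2 = \bra{\mathbf{b}}(\Pi_{H^g}^{\rho})^{\dagger}\Pi_{H^g}^{\rho}\ket{\mathbf{b}} = \bra{\mathbf{b}}\Pi_{H^g}^{\rho}\ket{\mathbf{b}}$, using that $\Pi_{H^g}^{\rho}$ is an orthogonal projector (it is the projector for the conjugate subgroup $H^g$, which is a genuine subgroup). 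Expanding $\Pi_{H^g}^{\rho} = \frac{1}{|H|}\sum_{h\in H}\rho(g^{-1}hg)$ and using the standard identification $\bra{\mathbf{b}}\rho(x)\ket{\mathbf{b}} = \langle \mathbf{b}\otimes\mathbf{b}^*,\ (\rho\otimes\rho^*)(x)\ (\mathbf{b}\otimes\mathbf{b}^*)\rangle$ (up to the usual conjugation bookkeeping for $\rho^*$), I can express both $\expect[g]{\|\Pi_{H^g}^{\rho}\mathbf{b}\|^2}$ and $\expect[g]{\|\Pi_{H^g}^{\rho}\mathbf{b}\|^4}$ as $\langle \mathbf{b}\otimes\mathbf{b}^*,\ \mathcal{A}\ (\mathbf{b}\otimes\mathbf{b}^*)\rangle$-type quantities, where $\mathcal{A}$ is built out of averages of $(\rho\otimes\rho^*)(g^{-1}hg)$ over $g\in G$.

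The key step is then to apply Schur-type averaging: for a fixed $h$, $\expect[g]{(\rho\otimes\rho^*)(g^{-1}hg)}$ is $G$-invariant, hence acts as $\frac{\chi_{\sigma}(h)}{d_{\sigma}}$ on each isotypic component $a_\sigma\sigma$ of $\rho\otimes\rho^*$ (this is precisely the conjugation-average form of Schur's lemma, the same tool used for $\expect[g]{P_{H^g}(\mathbf{b}\mid\rho)}=1/d_\rho$). Writing $\mathbf{b}\otimes\mathbf{b}^* = \sum_{\sigma\in I(\rho\otimes\rho^*)} \mathbf{v}_\sigma$ with $\mathbf{v}_\sigma = \Pi_{\sigma}^{\rho\otimes\rho^*}(\mathbf{b}\otimes\mathbf{b}^*)$, the first moment becomes $\frac{1}{|H|}\sum_{h\in H}\sum_\sigma \frac{\chi_\sigma(h)}{d_\sigma}\|\mathbf{v}_\sigma\|^2$; the $h=1$ term contributes $\sum_\sigma\|\mathbf{v}_\sigma\|^2 = \|\mathbf{b}\otimes\mathbf{b}^*\|^2 = 1$ to the leading part, and one isolates the trivial component $\sigma$ (where $\chi_\sigma(h)/d_\sigma = 1$ for all $h$) from the rest. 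For the variance I need the second moment minus the square of the first; the cross terms across distinct isotypic components vanish by orthogonality of the $\Pi_\sigma^{\rho\otimes\rho^*}$, and within each component the surviving contribution is controlled by $\max_{h\neq 1}|\chi_\sigma(h)|/d_\sigma = \overline{\chi}_\sigma(H)$ times $\|\mathbf{v}_\sigma\|^2$. Careful handling shows the trivial-component contributions to the second moment and to $(\text{first moment})^2$ cancel, leaving exactly the claimed sum over $\sigma\in I(\rho\otimes\rho^*)$.

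I expect the main obstacle to be the bookkeeping in the second-moment computation — specifically, showing that the contributions of the trivial isotypic component cancel cleanly between $\expect[g]{\|\Pi_{H^g}^\rho\mathbf{b}\|^4}$ and $\expect[g]{\|\Pi_{H^g}^\rho\mathbf{b}\|^2}^2$, and that all remaining cross terms between inequivalent $\sigma$'s drop out so that only diagonal terms weighted by $\overline{\chi}_\sigma(H)$ survive. This requires being careful that the average of a product $(\rho\otimes\rho^*)(g^{-1}h_1 g)\otimes(\rho\otimes\rho^*)(g^{-1}h_2 g)$ over $g$ is handled correctly — one route is to note that for the variance it suffices to bound $\var[g]{\langle \mathbf{b}\otimes\mathbf{b}^*, (\rho\otimes\rho^*)(g^{-1}hg)(\mathbf{b}\otimes\mathbf{b}^*)\rangle}$ summed appropriately, reducing to a single-element conjugation average on $\rho\otimes\rho^*$ rather than a four-fold tensor, which is the simplification that makes the bound tractable; this is presumably where the ``decoupling'' idea mentioned in the introduction enters. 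Everything else is a routine application of Schur's lemma and the triangle/Cauchy–Schwarz inequalities, mirroring Lemma 4.3 of \citep{Ref_Moore08symmetric} but now allowing $|H|>2$.
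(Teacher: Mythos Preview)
Your identification $\bra{\mathbf{b}}\rho(x)\ket{\mathbf{b}} = \langle \mathbf{b}\otimes\mathbf{b}^*,\ (\rho\otimes\rho^*)(x)\,(\mathbf{b}\otimes\mathbf{b}^*)\rangle$ is wrong: the right-hand side equals $|\bra{\mathbf{b}}\rho(x)\ket{\mathbf{b}}|^2$, not $\bra{\mathbf{b}}\rho(x)\ket{\mathbf{b}}$. Consequently the first moment $\expect[g]{\|\Pi_{H^g}^\rho\mathbf{b}\|^2}$ is \emph{not} an inner product in $\rho\otimes\rho^*$; it comes directly from Schur on $\rho$. More importantly, the fourth moment $\expect[g]{\|\Pi_{H^g}^\rho\mathbf{b}\|^4}$ expands as $\frac{1}{|H|^2}\sum_{h_1,h_2}\expect[g]{\langle\mathbf{b},\rho(g^{-1}h_1g)\mathbf{b}\rangle\,\langle\mathbf{b},\rho(g^{-1}h_2g)\mathbf{b}\rangle}$, and the cross terms with $h_1\neq h_2$ do \emph{not} live in $\rho\otimes\rho^*$ at all: the product $\langle\mathbf{b},\rho(g^{-1}h_1g)\mathbf{b}\rangle\langle\mathbf{b},\rho(g^{-1}h_2g)\mathbf{b}\rangle$ is a matrix coefficient of $\rho\otimes\rho$, and the $g$-average is a twirl of the operator $\rho(h_1)\otimes\rho(h_2)$, not conjugation by a single group element. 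So the isotypic-projector machinery you describe does not apply to these terms, and there is no reason the ``cross terms between inequivalent $\sigma$'s'' should drop out.

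The paper's actual argument avoids this entirely, and the avoidance \emph{is} the decoupling lemma. Writing $\|\Pi_{H^g}^\rho\mathbf{b}\|^2=\frac{1}{|H|}(\|\mathbf{b}\|^2+S_g)$ with $S_g=\sum_{h\neq 1}\langle\mathbf{b},\rho(g^{-1}hg)\mathbf{b}\rangle$, one has $\var[g]{\|\Pi_{H^g}^\rho\mathbf{b}\|^2}=\frac{1}{|H|^2}\var[g]{S_g}\le\frac{1}{|H|^2}\expect[g]{S_g^2}$, and then Cauchy--Schwarz gives $S_g^2\le(|H|-1)\sum_{h\neq 1}|\langle\mathbf{b},\rho(g^{-1}hg)\mathbf{b}\rangle|^2$. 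This is the key step: it replaces the bilinear sum over pairs $(h_1,h_2)$ by a diagonal sum of \emph{squared moduli}, each of which is genuinely $\langle\mathbf{b}\otimes\mathbf{b}^*,(\rho\otimes\rho^*)(g^{-1}hg)(\mathbf{b}\otimes\mathbf{b}^*)\rangle$ for a \emph{single} $h$. Now Schur on $\rho\otimes\rho^*$ gives $\expect[g]{|\langle\mathbf{b},\rho(g^{-1}hg)\mathbf{b}\rangle|^2}=\sum_\sigma\frac{\chi_\sigma(h)}{d_\sigma}\|\Pi_\sigma^{\rho\otimes\rho^*}(\mathbf{b}\otimes\mathbf{b}^*)\|^2$, and bounding $|\chi_\sigma(h)|/d_\sigma\le\overline{\chi}_\sigma(H)$ finishes. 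No ``trivial-component cancellation'' is needed, since $-\expect[g]{S_g}^2$ was simply discarded. Your closing remark gestures toward this single-$h$ reduction, but the Cauchy--Schwarz step that produces it is the missing idea in your plan.
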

\begin{proof}[Proof of Lemma~\ref{Lemma:QFS-var}] Fix a vector $\mathbf{b}\in V_{\rho}$. To simplify notations, we shall write $\Pi_{g}$ as shorthand for $\Pi_{H^g}^{\rho}$, and write $g\mathbf{b}$ for $\rho(g)\mathbf{b}$. 
For any $g\in G$, we have
\[
\begin{split}
\|\Pi_{g}\mathbf{b}\|^2 
&=\tup{\Pi_{g}\mathbf{b}, \Pi_{g}\mathbf{b}}
=\tup{\mathbf{b},\Pi_{g}\mathbf{b}}\\
&= \frac{1}{|H|}\left(\tup{\mathbf{b},\mathbf{b}}+\sum_{h\in H\setminus\set{1}}\tup{\mathbf{b},g^{-1}hg\mathbf{b}}\right)\,.\\
\end{split}
\]
Let $S_g=\sum_{h\in H\setminus\set{1}}\tup{\mathbf{b},g^{-1}hg\mathbf{b}}$. Then
\[
\|\Pi_{g}\mathbf{b}\|^2 =\frac{1}{|H|}\left(\|\mathbf{b}\|^2+S_g\right)\quad \text{and}\quad
S_g = |H|\|\Pi_{g}\mathbf{b}\|^2  - \|\mathbf{b}\|^2\,.
\]
It follows that $S_g$ is real, and that
\[
\begin{split}
\|\Pi_{g}\mathbf{b}\|^4 
&= \frac{1}{|H|^2}\left(\|\mathbf{b}\|^4+2\|\mathbf{b}\|^2S_g+ S_g^2\right)\,.\\
\end{split}
\]
We have
\begin{align}
\expect[g]{\|\Pi_{g}\mathbf{b}\|^4} 
&= \frac{1}{|H|^2}\left(\|\mathbf{b}\|^4+2\|\mathbf{b}\|^2\expect[g]{S_g}+ \expect[g]{S_g^2}\right)\label{Eq:E1}\\ 
\expect[g]{\|\Pi_{g}\mathbf{b}\|^2}^2 
&=  \frac{1}{|H|^2}\left(\|\mathbf{b}\|^2+\expect[g]{S_g}\right)^2 \nonumber\\
&=  \frac{1}{|H|^2}\left(\|\mathbf{b}\|^4+2\|\mathbf{b}\|^2\expect[g]{S_g}+ \expect[g]{S_g}^2\right)\label{Eq:E2}
\end{align}
Subtracting \eqref{Eq:E1} by \eqref{Eq:E2} yields
\[
\begin{split}
\var[g]{\|\Pi_{g}\mathbf{b}\|^2} 
&=\expect[g]{\|\Pi_{g}\mathbf{b}\|^4}-\expect{\|\Pi_{g}\mathbf{b}\|^2}^2\\
&= \frac{1}{|H|^2}\left(\expect[g]{S_g^2}-\expect[g]{S_g}^2\right)
\leq \frac{1}{|H|^2}\expect[g]{S_g^2}\,.\\
\end{split}
\]
To bound the variance, we upper bound $S_g^2$ for all $g\in G$. 
Since $S_g$ is real, applying Cauchy-Schwarz inequality, we have
\[
S_g^2
=\left|\sum_{h\in H\setminus\set{1}}\tup{\mathbf{b},g^{-1}hg\mathbf{b}}\right|^2 
\leq (|H|-1)\left(\sum_{h\in H\setminus\set{1}}\left|\tup{\mathbf{b},g^{-1}hg\mathbf{b}}\right|^2\right)\,.
\]

Proving similarly to Lemma 4.2 in \citep{Ref_Moore08symmetric}, one can express the second moment of the inner product $\tup{\mathbf{b},g^{-1}hg\mathbf{b}}$ in terms of the projection of $\mathbf{b}\otimes\mathbf{b}^*$ into the irreducible constituents of the tensor product representation $\rho\otimes \rho^*$. Specifically, for any $h\in G$, we have
\[
\expect[g]{|\tup{\mathbf{b},g^{-1}hg\mathbf{b}}|^2} 
= \sum_{\sigma \in I(\rho\otimes \rho^*)}\frac{\chi_{\sigma}(h)}{d_{\sigma}}\left\| \Pi_{\sigma}^{\rho\otimes \rho^*}(\mathbf{b}\otimes\mathbf{b}^*)\right\|^2\,.
\]

It follows that 
\[
\var[g]{\|\Pi_{H^g}^{\rho}\mathbf{b}\|^2} 
\leq 
\frac{1}{|H|}\left(\sum_{h\in H\setminus\set{1}}\expect[g]{\left|\tup{\mathbf{b},g^{-1}hg\mathbf{b}}\right|^2}\right)
\leq 
\sum_{\sigma \in I(\rho\otimes \rho^*)}\overline{\chi}_{\sigma}(H) \left\| \Pi_{\sigma}^{\rho\otimes \rho^*}(\mathbf{b}\otimes\mathbf{b}^*)\right\|^2\,.
\]
\end{proof}


Back to our goal of bounding $E_H(\rho)$ using the bound in Lemma \ref{Lemma:QFS-var}, the strategy will be to separate irreps appearing in the decomposition of $\rho\otimes \rho^*$ into two groups, those with small dimension and those with large dimension, and treat them differently. If $d_{\sigma}$ is large, we shall rely on bounding $\overline{\chi}_{\sigma}(H)$. If $d_{\sigma}$ is small, we shall control the projection given by $\Pi_{\sigma}^{\rho\otimes\rho^*}$ using 
the following lemma which was proved implicitly in \citep{Ref_Moore08symmetric} (its proof is also given in Appendix): 

\begin{lemma}\label{Lemma:LargeSmall} 
For any irrep $\sigma$, we have
\(
\sum_{\mathbf{b}\in B_{\rho}}\left\|\Pi_{\sigma}^{\rho\otimes\rho^*}(\mathbf{b}\otimes \mathbf{b}^*)\right\|^2 \leq d_{\sigma}^2\,.
\)
\end{lemma}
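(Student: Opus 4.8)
The statement to prove is Lemma~\ref{Lemma:LargeSmall}: for any irrep $\sigma$,
\[
\sum_{\mathbf{b}\in B_{\rho}}\left\|\Pi_{\sigma}^{\rho\otimes\rho^*}(\mathbf{b}\otimes \mathbf{b}^*)\right\|^2 \leq d_{\sigma}^2\,.
\]

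\textbf{Plan of the proof.} The plan is to rewrite the left-hand side as the trace of a positive operator and then bound that trace. First I would note that for each fixed $\mathbf{b}\in B_\rho$,
\[
\left\|\Pi_{\sigma}^{\rho\otimes\rho^*}(\mathbf{b}\otimes \mathbf{b}^*)\right\|^2
= \bigl\langle \mathbf{b}\otimes\mathbf{b}^*,\ \Pi_{\sigma}^{\rho\otimes\rho^*}(\mathbf{b}\otimes\mathbf{b}^*)\bigr\rangle
= \Tr\!\Bigl(\Pi_{\sigma}^{\rho\otimes\rho^*}\,(\mathbf{b}\otimes\mathbf{b}^*)(\mathbf{b}\otimes\mathbf{b}^*)^{\dagger}\Bigr),
\]
using that $\Pi_{\sigma}^{\rho\otimes\rho^*}$ is a self-adjoint idempotent. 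Summing over the orthonormal basis $B_\rho$ and using linearity of the trace, the left-hand side becomes $\Tr\bigl(\Pi_{\sigma}^{\rho\otimes\rho^*}\,\Theta\bigr)$, where $\Theta := \sum_{\mathbf{b}\in B_\rho}(\mathbf{b}\otimes\mathbf{b}^*)(\mathbf{b}\otimes\mathbf{b}^*)^{\dagger}$ is a positive semidefinite operator on $V_\rho\otimes V_\rho^*$.

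\textbf{Key steps.} The heart of the argument is to control $\Theta$. Writing $\mathbf{b}\otimes\mathbf{b}^*$ in the induced basis $\{\mathbf{b}_i\otimes\mathbf{b}_j^*\}$, one sees that $(\mathbf{b}_k\otimes\mathbf{b}_k^*)(\mathbf{b}_k\otimes\mathbf{b}_k^*)^{\dagger}$ is the rank-one projector onto $\mathbf{b}_k\otimes\mathbf{b}_k^*$, and these are mutually orthogonal as $k$ ranges over $1,\dots,d_\rho$; hence $\Theta$ is itself the orthogonal projector onto the $d_\rho$-dimensional ``diagonal'' subspace $\mathrm{span}\{\mathbf{b}_k\otimes\mathbf{b}_k^* : 1\le k\le d_\rho\}$. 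In particular $\Theta$ is a projector, so $0 \preceq \Theta \preceq I$. Now since $\Pi_{\sigma}^{\rho\otimes\rho^*}$ is also a projector and $\Theta\preceq I$,
\[
\Tr\bigl(\Pi_{\sigma}^{\rho\otimes\rho^*}\,\Theta\bigr)
= \Tr\bigl(\Pi_{\sigma}^{\rho\otimes\rho^*}\,\Theta\,\Pi_{\sigma}^{\rho\otimes\rho^*}\bigr)
\le \Tr\bigl(\Pi_{\sigma}^{\rho\otimes\rho^*}\bigr),
\]
using cyclicity of the trace together with $\Pi_{\sigma}\Theta\Pi_{\sigma}\preceq \Pi_{\sigma}I\Pi_{\sigma}=\Pi_{\sigma}$. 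Finally, $\Tr\bigl(\Pi_{\sigma}^{\rho\otimes\rho^*}\bigr) = a_\sigma\,d_\sigma$, the multiplicity of $\sigma$ in $\rho\otimes\rho^*$ times its dimension; and a standard character computation gives $a_\sigma = \langle \chi_{\rho}\chi_{\rho^*},\chi_\sigma\rangle = \langle \chi_\rho\overline{\chi_\rho},\chi_\sigma\rangle \le \|\chi_\rho\overline{\chi_\rho}\|_2^{1/2}\cdot(\dots)$ — more simply, $a_\sigma \le d_\sigma$ because $\rho\otimes\rho^*$ embeds (via $\chi_\rho\overline{\chi_\rho}$) and the multiplicity of any irrep in $\rho\otimes\rho^*$ is at most $d_\sigma$ (e.g.\ since $\sigma$ occurs in $\rho\otimes\rho^*$ with multiplicity $\langle \chi_\sigma, |\chi_\rho|^2\rangle$, which is at most $d_\sigma$ because $\sigma^*\otimes\rho$ contains $\rho$ with multiplicity at most $d_{\sigma^*}=d_\sigma$). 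This yields $\Tr\bigl(\Pi_\sigma^{\rho\otimes\rho^*}\bigr)=a_\sigma d_\sigma \le d_\sigma^2$, completing the bound.

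\textbf{Main obstacle.} The only genuinely delicate point is the last step, bounding the multiplicity $a_\sigma$ of $\sigma$ in $\rho\otimes\rho^*$ by $d_\sigma$. This is where Schur orthogonality enters: $a_\sigma = \langle \chi_{\rho\otimes\rho^*},\chi_\sigma\rangle = \langle \chi_\rho, \chi_\sigma\chi_\rho\rangle$, i.e.\ the multiplicity of $\rho$ in $\sigma\otimes\rho$, which is at most $\dim(\sigma\otimes\rho)/d_\rho = d_\sigma$. Everything else is bookkeeping with orthonormal bases and elementary properties of orthogonal projectors (self-adjoint, idempotent, $0\preceq\Pi\preceq I$, trace equals rank). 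I would present the argument in exactly the order above: reduce to a trace, identify $\Theta$ as the diagonal projector, apply the projector inequality, and finish with the multiplicity bound.
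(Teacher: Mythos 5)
Your proof is correct and follows essentially the same approach as the paper: both arguments rest on (1) the mutual orthogonality of the unit vectors $\mathbf{b}\otimes\mathbf{b}^*$, which bounds the sum of squared projections by $\dim\bigl(\mathrm{image}\,\Pi_\sigma^{\rho\otimes\rho^*}\bigr)=a_\sigma d_\sigma$, and (2) the Frobenius-reciprocity bound $a_\sigma \le d_\sigma$ obtained by reinterpreting $a_\sigma$ as a multiplicity inside $\sigma\otimes\rho$. Your reformulation via $\Tr\bigl(\Pi_\sigma^{\rho\otimes\rho^*}\Theta\bigr)$ with $\Theta$ the diagonal projector is just a trace-operator phrasing of the paper's more direct Bessel-type inequality, so the two proofs are the same in substance.
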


The method discussed above for bounding $E_H(\rho)$ is culminated into Lemma \ref{Lemma:general_method} below.

\begin{lemma}\label{Lemma:general_method}
Let $\rho\in\widehat{G}$ be arbitrary and $S\subset \widehat{G}$ be any subset of irreps that does not contain $\rho$.  
Then
\[
E_H(\rho) \leq 4|H|^2\left(\overline{\chi}_{\overline{S}}(H) + \left|S\cap I(\rho\otimes \rho^*)\right|\frac{d_S^2}{d_{\rho}}\right)\,.
\]
\end{lemma}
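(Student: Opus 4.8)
\textbf{Proof plan for Lemma~\ref{Lemma:general_method}.}
The plan is to combine the variance bound of Lemma~\ref{Lemma:QFS-var} with the pointwise bound of Lemma~\ref{Lemma:LargeSmall}, splitting $I(\rho\otimes\rho^*)$ into the ``small'' part $S\cap I(\rho\otimes\rho^*)$ and its complement. Starting from \eqref{Eq:E_rho_1}, we have
\[
E_H(\rho)\leq \frac{d_{\rho}}{\tr(\Pi_H^{\rho})^2}\sum_{\mathbf{b}\in B_{\rho}}\var[g]{\|\Pi_{H^g}^{\rho}\mathbf{b}\|^2}
\leq \frac{d_{\rho}}{\tr(\Pi_H^{\rho})^2}\sum_{\sigma\in I(\rho\otimes\rho^*)}\overline{\chi}_{\sigma}(H)\sum_{\mathbf{b}\in B_{\rho}}\bigl\|\Pi_{\sigma}^{\rho\otimes\rho^*}(\mathbf{b}\otimes\mathbf{b}^*)\bigr\|^2,
\]
after summing the bound of Lemma~\ref{Lemma:QFS-var} over $\mathbf{b}\in B_{\rho}$ and swapping the order of summation. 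Now I would treat the two groups of $\sigma$ differently. For $\sigma\notin S$ (in particular for every $\sigma\in I(\rho\otimes\rho^*)\setminus S$), bound $\overline{\chi}_{\sigma}(H)\leq\overline{\chi}_{\overline{S}}(H)$ and then use $\sum_{\sigma}\sum_{\mathbf{b}}\|\Pi_{\sigma}^{\rho\otimes\rho^*}(\mathbf{b}\otimes\mathbf{b}^*)\|^2=\sum_{\mathbf{b}}\|\mathbf{b}\otimes\mathbf{b}^*\|^2=\sum_{\mathbf{b}}\|\mathbf{b}\|^4=d_{\rho}$, since the $\Pi_\sigma^{\rho\otimes\rho^*}$ are orthogonal projectors summing to the identity on $V_\rho\otimes V_\rho^*$. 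For $\sigma\in S\cap I(\rho\otimes\rho^*)$, bound $\overline{\chi}_{\sigma}(H)\leq 1$ trivially and invoke Lemma~\ref{Lemma:LargeSmall} to get $\sum_{\mathbf{b}}\|\Pi_{\sigma}^{\rho\otimes\rho^*}(\mathbf{b}\otimes\mathbf{b}^*)\|^2\leq d_{\sigma}^2\leq d_S^2$; summing over the at most $|S\cap I(\rho\otimes\rho^*)|$ such $\sigma$ gives the second term.

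Putting these together yields
\[
E_H(\rho)\leq \frac{d_{\rho}}{\tr(\Pi_H^{\rho})^2}\left(\overline{\chi}_{\overline{S}}(H)\,d_{\rho}+\bigl|S\cap I(\rho\otimes\rho^*)\bigr|\,d_S^2\right).
\]
The final step is to control the prefactor $d_{\rho}/\tr(\Pi_H^{\rho})^2$. Since $\Pi_H^{\rho}$ is an orthogonal projector, $\tr(\Pi_H^{\rho})\geq 1$ unless $\rho$ restricted to $H$ contains no trivial constituent, but here $\tr(\Pi_H^\rho)=\dim(V_\rho^H)$ which could in principle be zero; however if it is zero then $P_H(\rho)=0$ and $\rho$ is never sampled, so we may assume $\tr(\Pi_H^\rho)\geq 1$, giving $d_\rho/\tr(\Pi_H^\rho)^2\leq d_\rho$. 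That would produce a factor $d_\rho^2$, which is too weak — instead I expect one needs the crude but sufficient bound $\tr(\Pi_H^{\rho})\geq d_{\rho}/|H|$, which holds because the projection onto $H$-invariants has rank at least $d_\rho/|H|$ (each $H$-orbit on a basis-like decomposition contributes, or more cleanly $\tr \Pi_H^\rho = \frac1{|H|}\sum_{h}\chi_\rho(h)$ and one averages). Then $d_\rho/\tr(\Pi_H^\rho)^2\leq |H|^2/d_\rho$, and distributing this over the two terms gives exactly $4|H|^2(\overline{\chi}_{\overline{S}}(H)+|S\cap I(\rho\otimes\rho^*)|d_S^2/d_\rho)$ up to the constant $4$, which is slack absorbed from the $E_H(\rho)\leq 4$ regime or from a factor-of-$2$ looseness in the averaging argument.

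The main obstacle I anticipate is precisely justifying the prefactor bound $\tr(\Pi_H^{\rho})\geq d_{\rho}/|H|$ cleanly and tracking where the constant $4$ comes from; the representation-theoretic inequality itself is standard (it is the statement that a finite group $H$ acting on a $d_\rho$-dimensional space has an invariant subspace of dimension at least $d_\rho/|H|$, equivalently $\frac1{|H|}\sum_{h\in H}\chi_\rho(h)\geq \frac{d_\rho}{|H|}$ — which is in fact false in general, so the honest route is that the claimed lemma only needs the weaker consequence after multiplying through, or the paper silently uses $\tr(\Pi_H^\rho)\ge 1$ together with $d_\rho\le |H|\cdot(\text{something})$ from the sampling distribution). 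I would resolve this by noting that in the definition of $\mathcal D_H$ the irrep $\rho$ is drawn with probability $P_H(\rho)=d_\rho|H|\tr(\Pi_H^\rho)/|G|$, so the $1/\tr(\Pi_H^\rho)^2$ factor will ultimately be partly cancelled when we take the expectation over $\rho$ in the proof of Theorem~\ref{Thm:general}; thus within this lemma the cleanest statement to carry is the one with $d_\rho/\tr(\Pi_H^\rho)$ left intact, and the stated form follows from $\tr(\Pi_H^\rho)\le d_\rho$ in the denominator-friendly direction combined with $\tr(\Pi_H^\rho)\ge 1$. Modulo pinning down that constant, the argument is a direct assembly of the two preceding lemmas.
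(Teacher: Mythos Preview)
Your splitting argument (the first three steps) is exactly what the paper does, and you arrive correctly at
\[
E_H(\rho)\leq \left(\frac{d_{\rho}}{\tr(\Pi_H^{\rho})}\right)^{2}\left(\overline{\chi}_{\overline{S}}(H)+\bigl|S\cap I(\rho\otimes\rho^*)\bigr|\frac{d_S^2}{d_{\rho}}\right).
\]
The genuine gap is in your handling of the prefactor. You are right that $\tr(\Pi_H^{\rho})\geq d_{\rho}/|H|$ is false in general, and none of your proposed workarounds (using $\tr(\Pi_H^{\rho})\geq 1$, or deferring to the expectation over $\rho$) recover the stated bound.

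The missing idea is to combine the trivial bound $E_H(\rho)\leq 4$ with the hypothesis $\rho\notin S$, which you never use. Since the lemma is vacuous when the right-hand side exceeds $4$, one may assume $4|H|^2\overline{\chi}_{\overline{S}}(H)\leq 4$, i.e.\ $\overline{\chi}_{\overline{S}}(H)\leq 1/|H|^2\leq 1/(2|H|)$. Because $\rho\notin S$, this controls the normalized characters of $\rho$ itself: $\overline{\chi}_{\rho}(H)\leq \overline{\chi}_{\overline{S}}(H)\leq 1/(2|H|)$. Now the character formula gives
\[
\frac{\tr(\Pi_H^{\rho})}{d_{\rho}}=\frac{1}{|H|}\Bigl(1+\sum_{h\in H\setminus\{1\}}\frac{\chi_{\rho}(h)}{d_{\rho}}\Bigr)\geq \frac{1}{|H|}-\overline{\chi}_{\rho}(H)\geq \frac{1}{2|H|},
\]
so $\bigl(d_{\rho}/\tr(\Pi_H^{\rho})\bigr)^2\leq 4|H|^2$, which is precisely the constant you were unable to pin down. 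In short: the hypothesis $\rho\notin S$ is not decorative --- it is what makes your hoped-for inequality $\tr(\Pi_H^{\rho})\gtrsim d_{\rho}/|H|$ actually hold.
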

\begin{proof}[Proof of Lemma \ref{Lemma:general_method}]

Combining Inequality \eqref{Eq:E_rho_1} and Lemmas \ref{Lemma:QFS-var}  give
\[
\begin{split}
E_H(\rho) \leq \frac{d_{\rho}}{\tr(\Pi_H^{\rho})^2}\sum_{\sigma \in I(\rho\otimes \rho^*)}\overline{\chi}_{\sigma}(H)\sum_{\mathbf{b}\in B_{\rho}}
\left\| \Pi_{\sigma}^{\rho\otimes \rho^*}(\mathbf{b}\otimes\mathbf{b}^*)\right\|^2\,.
\end{split}
\]
Now we split additive items in the above upper bound into two groups separated by the set $S$. 
For the first group (large dimension),
\[
\begin{split}
\sum_{\sigma \in \overline{S}\cap\widehat{G}^{\rho\otimes \rho^*}}\overline{\chi}_{\sigma}(H)\sum_{\mathbf{b}\in B_{\rho}}
\left\| \Pi_{\sigma}^{\rho\otimes \rho^*}(\mathbf{b}\otimes\mathbf{b}^*)\right\|^2 
&\leq 
\overline{\chi}_{\overline{S}}(H)\sum_{\mathbf{b}\in B_{\rho}}\underbrace{\sum_{\sigma\in I(\rho\otimes \rho^*) }
\left\| \Pi_{\sigma}^{\rho\otimes \rho^*}(\mathbf{b}\otimes\mathbf{b}^*)\right\|^2}_{\leq 1} \\
&\leq \overline{\chi}_{\overline{S}}(H)d_{\rho}\,.
\end{split}
\]
For the second group (small dimension), 
\[
\begin{split}
\sum_{\sigma \in S\cap I(\rho\otimes \rho^*)}\overline{\chi}_{\sigma}(H)\sum_{\mathbf{b}\in B_{\rho}}
\left\| \Pi_{\sigma}^{\rho\otimes \rho^*}(\mathbf{b}\otimes\mathbf{b}^*)\right\|^2
&\leq 
\sum_{\sigma \in S\cap I(\rho\otimes \rho^*)}\overline{\chi}_{\sigma}(H)d_{\sigma}^2  
\qquad\text{(by Lemma \ref{Lemma:LargeSmall})}\\
&\leq 
\sum_{\sigma \in S\cap I(\rho\otimes \rho^*)}d_{\sigma}^2
\qquad\qquad\text{(since $\overline{\chi}_{\sigma}(H)\leq 1$)}\\
&\leq \left|S\cap I(\rho\otimes \rho^*)\right|d_S^2\,.
\end{split}
\]
Summing the last bounds for the two groups yields
\[
E_H(\rho) \leq \left(\frac{d_{\rho}}{\tr(\Pi_H^{\rho})} \right)^2
\left( \overline{\chi}_{\overline{S}}(H)+\left|S\cap I(\rho\otimes \rho^*)\right|\frac{d_S^2}{d_{\rho}}\right)\,.
\]
On the other hand, since $E_H(\rho) \leq 4$, we can assume $H^2\overline{\chi}_{\overline{S}}(H)\leq 1$, and thus $\overline{\chi}_{\overline{S}}(H)\leq \frac{1}{|H|^2}\leq \frac{1}{2|H|}$.
Hence, we have
\[
\frac{\tr(\Pi_H^{\rho})}{d_{\rho}} 
=\frac{1}{|H|}\left(1+\sum_{h\in H\setminus\set{1}}\frac{\chi_{\rho}(h)}{d_{\rho}}\right)
\geq \frac{1}{|H|}-\overline{\chi}_{\rho}(H)
\geq \frac{1}{2|H|}\,,
\]
where the last inequality is due to $\overline{\chi}_{\rho}(H)\leq \overline{\chi}_{\overline{S}}(H)\leq \frac{1}{2|H|}$.
This completes the proof.
\end{proof}

To apply this lemma, we should choose the subset $S$ such that $d^2_S \ll d_{\rho}$, that is, $S$ should consist of small dimensional irreps. 
Then applying Lemma \ref{Lemma:general_method} for all irreps $\rho$ of large dimension, we can prove our general main theorem straightforwardly.

\begin{proof}[\textit{Proof of Theorem \ref{Thm:general}:}]
For any $\rho\in L$, since $d_{\rho}\geq D> d_S^2 $, we must have $\rho\not\in S$. 
By Lemma \ref{Lemma:general_method}, 
\[
E_H(\rho) \leq 4|H|^2\left(\overline{\chi}_{\overline{S}}(H)+\Delta\frac{d_S^2}{D}\right)\quad\text{for all $\rho\in L$}\,.
\]
Combining this with the fact that $E_H(\rho)\leq 4$ for all $\rho\not\in L$, we obtain
\[
\begin{split}
\mathcal{D}_H 
&=\expect[\rho]{E_H(\rho)}
\leq 4|H|^2\left(\overline{\chi}_{\overline{S}}(H)+\Delta\frac{d_S^2}{D}\right)+4\pr[\rho]{\rho\not\in L}\,.\\
\end{split}
\]
To complete the proof, it remains to bound $\pr[\rho]{\rho\not\in L}$. 
Since $\tr(\Pi_H^{\rho})\leq d_{\rho}$, we have
\[
P(\rho) = \frac{d_{\rho}|H|}{|G|}\tr(\Pi_H^{\rho}) \leq \frac{d_{\rho}^2|H|}{|G|}\,.
\]
Since $d_{\rho}< D$ for all $\rho\in\widehat{G}\setminus L$, it follows that
\[
\pr[\rho]{\rho\not\in L} = \sum_{\rho\not\in L}P(\rho)\leq \frac{|\overline{L}|D^2|H|}{|G|} \leq \frac{|\overline{L}|D^2|H|^2}{|G|}\,.
\]
\remove{
Finally, apply the following Markov's inequality to get the stated result:
\[
\pr[\rho,g]{\|P_g(\rho,\cdot)-U_{B_{\rho}}\|_{1}\geq \epsilon} \leq 
\expect[\rho,g]{\|P_{g}(\rho,\cdot)-U_{B_{\rho}}\|_{1}^2}/\epsilon^2 \,.
\]
}
\end{proof}

\section{Strong Fourier sampling over  $(\GL_k(\FF_q)\times S_n)\wr \ints_2$}\label{Sec:McElieceThm}
This section devotes to the proof of Theorem \ref{Thm:McEliece} which establishes the limitation of strong Fourier sampling  in breaking the McEliece cryptosystem. The goal is to bound the distinguishability of the subgroup $K$ defined in \eqref{Eq:SubgroupK_McEliece} of the wreath product  $(\GL_k(\FF_q)\times S_n)\wr \ints_2$.


\subsection{Normalized characters for $G\wr\ints_2$}

Firstly, we  consider quantum Fourier sampling over the wreath product $G\wr\ints_2$, for a general group $G$, with a hidden subgroup of the form
\[
K=((H_0, s^{-1}H_0 s ),0) \cup ((H_0 s, s^{-1}H_0  ),1) < G\wr\ints_2
\]
for some subgroup $H_0< G$ and some element $s\in G$. Again, the first thing we need to understand is the maximal normalized characters on $K$. Recall that all irreducible characters of $G\wr\ints_2$ are constructed in the following ways:
\begin{enumerate}
\item Each unordered pair of two non-isomorphic irreps $\sigma, \rho\in\widehat{G}$ gives rise to an irrep of $G\wr\ints_2$, denoted ${\{\rho, \sigma\}}$,  with character given by:
\[
\chi_{\set{\rho,\sigma}}((x,y),b) = 
\begin{cases}
\chi_{\rho}(x)\chi_{\sigma}(y)+\chi_{\rho}(y)\chi_{\sigma}(x) &\text{if } b=0\\
0 &\text{if } b=1\,.\\
\end{cases}
\]
The dimension of representation ${\{\rho, \sigma\}}$ is equal to $\chi_{\set{\rho,\sigma}}((1,1),0)=2d_{\rho}d_{\sigma}$.
\item Each irrep $\rho\in\widehat{G}$  gives rise to two irreps  of $G\wr\ints_2$, denoted ${\{\rho\}}$ and ${\{\rho\}'}$, with characters given by:
\[
\chi_{\set{\rho}}((x,y),b) =
\begin{cases}
\chi_{\rho}(x)\chi_{\rho}(y) &\text{if } b=0\\
\chi_{\rho}(xy) &\text{if } b=1\\
\end{cases}
\]
\[
\chi_{\set{\rho}'}((x,y),b) =
\begin{cases}
\chi_{\rho}(x)\chi_{\rho}(y) &\text{if } b=0\\
-\chi_{\rho}(xy) &\text{if } b=1\,.\\
\end{cases}
\]
Both representations ${\{\rho\}}$ and ${\{\rho\}'}$ have the same dimension equal $d_{\rho}^2$.
\end{enumerate}

Clearly, the number of irreps of $G\wr\ints$ is equal to $|\widehat{G}|^2/2+3|\widehat{G}|/2$, which is no more than $|\widehat{G}|^2$ as long as $G$ has at least three irreps.
Now it is easy to determine the maximal normalized characters on subgroup $K$.
\begin{proposition}
For non-isormorphic irreps $\rho, \sigma\in\widehat{G}$,
\[
\overline{\chi}_{\set{\rho,\sigma}}(K) \leq \overline{\chi}_{\rho}(H_0)\overline{\chi}_{\sigma}(H_0)\,.
\]
For irrep $\rho\in\widehat{G}$,
\[
\overline{\chi}_{\set{\rho}}(K) = \overline{\chi}_{\set{\rho}'}(K) =\max\set{\overline{\chi}_{\rho}(H_0)^2, 1/d_{\rho}} \,.
\]
\end{proposition}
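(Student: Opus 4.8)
The plan is to compute the normalized character $|\chi_\mu((x,y),b)|/d_\mu$ of each irrep type $\mu$ of $G\wr\ints_2$ on an arbitrary nonidentity element of $K$, and maximize. Since every nonidentity element of $K$ lies either in $((H_0,s^{-1}H_0s),0)$ or in $((H_0s,s^{-1}H_0),1)$, I would split into the $b=0$ and $b=1$ cases and handle each irrep family separately using the character formulas quoted just above the statement. The key observation I would use repeatedly is that conjugation by $s$ does not change absolute values of characters: $|\chi_\rho(s^{-1}hs)| = |\chi_\rho(h)|$, so the ``twisted'' second coordinate $s^{-1}H_0s$ contributes exactly the same normalized-character bounds as $H_0$ itself.

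For the irreps $\set{\rho,\sigma}$ with $\rho\not\cong\sigma$: here $\chi_{\set{\rho,\sigma}}$ vanishes on the $b=1$ part, so I only need the $b=0$ elements $((h_1,s^{-1}h_2s),0)$ with $h_1,h_2\in H_0$. Then $\chi_{\set{\rho,\sigma}} = \chi_\rho(h_1)\chi_\sigma(s^{-1}h_2s) + \chi_\rho(s^{-1}h_2s)\chi_\sigma(h_1)$, and dividing by $d_{\set{\rho,\sigma}}=2d_\rho d_\sigma$ and applying the triangle inequality plus the conjugation-invariance of $|\chi|$ gives the bound $\tfrac12\big(|\chi_\rho(h_1)|/d_\rho \cdot |\chi_\sigma(h_2)|/d_\sigma + |\chi_\rho(h_2)|/d_\rho \cdot |\chi_\sigma(h_1)|/d_\sigma\big)\le \overline\chi_\rho(H_0)\overline\chi_\sigma(H_0)$. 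One subtlety: the identity $(1,1)$ of $G^2$ is allowed here (i.e.\ $h_1=1$ with $h_2\neq1$, or vice versa), but then one of the two normalized factors is $1$ and the other is $\le\overline\chi_\rho(H_0)$ or $\le\overline\chi_\sigma(H_0)$, still bounded by the product only if the remaining factor is $\le$ that product — so actually I must be slightly careful: when $h_1=1$ the bound becomes $\tfrac12(|\chi_\sigma(h_2)|/d_\sigma + |\chi_\rho(h_2)|/d_\rho)$, which is $\le\max(\overline\chi_\rho(H_0),\overline\chi_\sigma(H_0))$ but not obviously $\le$ the product. Since the proposition only claims an upper bound and we may assume (as in Lemma~\ref{Lemma:general_method}) that these quantities are small, I would note that $\overline\chi_\rho(H_0)\overline\chi_\sigma(H_0)$ is meant as the bound attained at the ``generic'' element; if strict correctness over all of $K$ is needed one writes $\le\max\big(\overline\chi_\rho(H_0)\overline\chi_\sigma(H_0),\,\tfrac12(\overline\chi_\rho(H_0)+\overline\chi_\sigma(H_0))\big)$, but I expect the paper intends $H_0$ to contain no element projecting to identity in one coordinate alone, or simply tolerates the stated form. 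This is the main delicate point.

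For the irreps $\set{\rho}$ and $\set{\rho}'$, both of dimension $d_\rho^2$: on $b=0$ elements $((h_1,s^{-1}h_2s),0)$ the character is $\chi_\rho(h_1)\chi_\rho(s^{-1}h_2s)$, so the normalized value is $(|\chi_\rho(h_1)|/d_\rho)(|\chi_\rho(h_2)|/d_\rho)\le\overline\chi_\rho(H_0)^2$, with equality achievable. On $b=1$ elements $((h_1s, s^{-1}h_2),1)$ the character (for $\set{\rho}$) is $\pm\chi_\rho\big((h_1s)(s^{-1}h_2)\big)=\pm\chi_\rho(h_1h_2)$; since $h_1h_2$ ranges over all of $H_0$ as $h_1,h_2$ range over $H_0$ (including $h_1h_2=1$, giving normalized value $d_\rho/d_\rho^2 = 1/d_\rho$, and $h_1h_2\ne1$ giving $|\chi_\rho(h_1h_2)|/d_\rho^2\le\overline\chi_\rho(H_0)/d_\rho\le\max(\overline\chi_\rho(H_0)^2,1/d_\rho)$), the supremum over the $b=1$ part is exactly $\max(\overline\chi_\rho(H_0)/d_\rho,\,1/d_\rho)$, which is $\le\max(\overline\chi_\rho(H_0)^2,1/d_\rho)$ since $\overline\chi_\rho(H_0)/d_\rho\le\overline\chi_\rho(H_0)^2$ would require $d_\rho\ge1/\overline\chi_\rho(H_0)$ — hmm, that is not automatic, so more carefully: $\overline\chi_\rho(H_0)/d_\rho \le \max(\overline\chi_\rho(H_0)^2, 1/d_\rho)$ because if $\overline\chi_\rho(H_0)\ge 1/d_\rho$ then $\overline\chi_\rho(H_0)/d_\rho\le\overline\chi_\rho(H_0)^2$, and otherwise $\overline\chi_\rho(H_0)/d_\rho< 1/d_\rho^2\le 1/d_\rho$. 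Taking the max over $b=0$ and $b=1$ parts then gives exactly $\max(\overline\chi_\rho(H_0)^2,1/d_\rho)$, and since both bounds are actually attained the statement is an equality. The same computation works verbatim for $\set{\rho}'$ because the sign does not affect absolute values.

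\textbf{Main obstacle.} The only real subtlety is the boundary behavior in the $\set{\rho,\sigma}$ case when a group element projects to the identity in one $G$-coordinate; I would resolve this either by observing that the stated bound is the one that governs the subsequent application (where $|K|^2\overline\chi$ terms dominate and the max-form gives the same asymptotics), or by replacing the clean product bound with the harmless $\max$ of the product and the average. Everything else is a direct substitution into the quoted character formulas together with the conjugation-invariance of character absolute values and the fact that $\set{h_1h_2 : h_1,h_2\in H_0}=H_0$.
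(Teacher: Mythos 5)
Your approach — substituting elements of $K$ into the quoted character formulas of $G\wr\ints_2$, splitting on $b$, and using conjugation-invariance $|\chi_\rho(s^{-1}hs)|=|\chi_\rho(h)|$ — is the right one, and your unease about the $\set{\rho,\sigma}$ case is well-founded. But you do not carry that same caution over to the $\set{\rho}$ case, and there it matters just as much.

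Concretely, for the $b=0$ elements you write the normalized character of $\set{\rho}$ as $\bigl(|\chi_\rho(h_1)|/d_\rho\bigr)\bigl(|\chi_\rho(h_2)|/d_\rho\bigr)\le\overline\chi_\rho(H_0)^2$. That inequality implicitly assumes $h_1\neq1$ \emph{and} $h_2\neq1$. But $K$ contains the nonidentity element $((h_1,1),0)$ whenever $h_1\in H_0\setminus\set{1}$, and on it the normalized character is
\[
\frac{|\chi_\rho(h_1)|\,|\chi_\rho(1)|}{d_\rho^2}=\frac{|\chi_\rho(h_1)|}{d_\rho}\,,
\]
which attains $\overline\chi_\rho(H_0)$ as $h_1$ ranges over $H_0\setminus\set{1}$. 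Since $\overline\chi_\rho(H_0)>\max\set{\overline\chi_\rho(H_0)^2,1/d_\rho}$ whenever $1/d_\rho<\overline\chi_\rho(H_0)<1$, your claimed equality (and the one stated in the proposition) fails in that regime. The $b=0$ supremum is $\overline\chi_\rho(H_0)$, not $\overline\chi_\rho(H_0)^2$, and after combining with your (correct) $b=1$ analysis the right-hand side should read $\max\set{\overline\chi_\rho(H_0),\,1/d_\rho}$. This is exactly the phenomenon you noticed for $\set{\rho,\sigma}$ — a nonidentity element of $K$ that projects to the identity in one $G$-coordinate — so the two parts of the proposition should have been treated symmetrically. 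Stated precisely, the one-coordinate-identity elements force the uniform bound $\overline\chi_{\set{\rho,\sigma}}(K)\le\tfrac12\bigl(\overline\chi_\rho(H_0)+\overline\chi_\sigma(H_0)\bigr)$ and $\overline\chi_{\set{\rho}}(K)=\overline\chi_{\set{\rho}'}(K)=\max\set{\overline\chi_\rho(H_0),\,1/d_\rho}$, neither of which can be improved to the squared or product form in general. In the application to Theorem~\ref{Thm:McEliece} this replaces the exponent $2\alpha$ by $\alpha$ for the diagonal-type irreps, which is harmless, but for the paired irreps $\set{\tau\times\lambda,\eta\times\mu}$ with exactly one of $\lambda,\mu$ inside $\Lambda_c$ the average bound is no longer small, so the choice of $S$ in that proof would need a second look rather than a one-line fix.
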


So to bound the maximal normalized characters over $K$, we can turn to bounding the normalized characters on  the subgroup $H_0$ and the dimension of an irrep of $G$.

\subsection{Normalized characters for $(\GL_k(\mathbb{F}_q)\times S_n)\wr\ints_2$}
Recall that for the case of attacking McEliece cryptosystem, we have $G=\GL_k(\mathbb{F}_q)\times S_n$ and every element $(A,P)\in H_0$ has $P\in\aut(M)$.

For $\tau\in \widehat{\GL_k(\mathbb{F}_q)}$ and $\lambda\in\widehat{S_n}$, let $\tau\times \lambda$ denote the tensor product as a representation of $\GL_k(\mathbb{F}_q)\times S_n$. 
Those tensor product representations  $\tau\times \lambda$ are all irreps of $\GL_k(\mathbb{F}_q)\times S_n$. Since $\overline{\chi}_{\tau\times \lambda}(S_{\pi}, \pi)=\overline{\chi}_{\tau}(S_{\pi})\overline{\chi}_{\lambda}(\pi)$ and $\overline{\chi}_{\tau}(S_{\pi})\leq 1$ for all $\pi\in S_n$, we have
\[
\overline{\chi}_{\tau\times \lambda}(H_0) \leq \overline{\chi}_{\lambda}(\aut(M))\,.
\]

As in the treatment for the symmetric group, we can bound the maximal normalized character $\overline{\chi}_{\lambda}(\aut(M))$ based on the minimum support of non-identity elements in $\aut(M)$, for any $\lambda\in\widehat{S_n}\setminus\Lambda_c$.

To complete bounding the maximal normalized characters on the subgroup $K$, it remains to bound the dimension of a representation ${\tau\times\lambda}$ of the group $\GL_k(\mathbb{F}_q)\times S_n$ with $\lambda\in \widehat{S_n}\setminus \Lambda_c$. Since the dimension of ${\tau\times\lambda}$ is  $$d_{\tau\times\lambda}=d_{\tau}d_{\lambda}\geq d_{\lambda}\,,$$ we prove the following lower bound for $d_{\lambda}$. 

\begin{lemma}\label{Lemma:LargeDimS_n}
Let $0<c\leq 1/6$ be a constant. Then there is a constant $\beta>0$ depending only on $c$ such that for sufficiently large $n$ and for $\lambda\in \widehat{S_n}\setminus \Lambda_c$, 
\[
d_{\lambda} \geq e^{\beta n}\,.
\]
\end{lemma}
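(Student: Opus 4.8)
The goal is a lower bound $d_\lambda \ge e^{\beta n}$ for every Young diagram $\lambda$ of $n$ that lies outside $\Lambda_c$, i.e. for every $\lambda$ whose first row has fewer than $(1-c)n$ cells and whose first column has fewer than $(1-c)n$ cells. The natural tool is the hook length formula $d_\lambda = n!/\prod_{\text{cells}} h(\text{cell})$, or equivalently one of its standard consequences bounding dimensions of representations of $S_n$ from below in terms of the "width" and "height" of $\lambda$. The plan is to reduce to the extreme case and then estimate.

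First I would recall (or cite, e.g. from Fulton--Harris or the references already in use) the elementary bound that if $\lambda_1 \le n - \ell$ — i.e. the diagram has at least $\ell$ cells outside the first row — then $d_\lambda \ge \binom{\ell}{\lfloor \ell/2\rfloor}$ or some similarly growing quantity; more convenient here is the bound $d_\lambda \ge 2^{\min(r,s)}$-type estimate, where $r = n-\lambda_1$ is the number of cells below the first row and $s = n - \lambda'_1$ is the number of cells to the right of the first column. Actually the cleanest route: use the fact that the dimension of $\lambda$ is at least the number of standard Young tableaux of a large rectangular (or staircase) subdiagram it contains. Since $\lambda \notin \Lambda_c$ forces both $\lambda_1 < (1-c)n$ and $\lambda'_1 < (1-c)n$, the diagram $\lambda$ is "fat in both directions": it contains a $2 \times cn$ block of cells (the cells in rows $\ge 2$, first $cn$ columns are all present since $\lambda_2 \le \lambda_1$ but we need $\lambda_2 \ge cn$, which follows because the $\ge cn$ cells outside the first column must be distributed and... ) — here I would be careful. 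The honest version: because $\lambda_1 < (1-c)n$, at least $cn$ cells lie strictly below row $1$; because $\lambda'_1 < (1-c)n$, at least $cn$ cells lie strictly right of column $1$; with $c \le 1/6$ one can argue that $\lambda$ contains a rectangle of dimensions roughly $cn/2 \times cn/2$, hence $d_\lambda \ge d_{\mu}$ for $\mu$ a square of side $\Theta(cn)$ with a one-row or one-column padding, and $d$ of a $t\times t$ square is $\ge 2^{t^2 - o(t^2)} \ge e^{\beta n}$ once $t = \Theta(\sqrt n)$ — but a square of side $\Theta(n)$ is impossible since $n$ is the total size, so actually the relevant inclusion gives a $2\times cn$ or $cn \times 2$ strip, whose SYT count is $\binom{2cn}{cn}/(2cn+1)$-ish, already $e^{\Omega(n)}$.

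Concretely, the key step I would carry out: show $\lambda$ (or $\lambda'$, WLOG) contains the partition shape $(cn, cn)$ or at least $(a, b)$ with $a,b \ge cn/2$ as a subdiagram sitting in the top-left, which is immediate from $\lambda_1, \lambda_2, \ldots$ being weakly decreasing together with the constraints on $\lambda_1$ and $\lambda_1'$ (using $c \le 1/6$ to guarantee enough "room" — e.g. the second row has length $\lambda_2 \ge$ (something like) $cn$, because otherwise too few cells would remain given $\lambda_1$ and $\lambda_1'$ are both bounded). Then $d_\lambda \ge d_{(a,b)} = \binom{a+b}{a} \cdot \frac{a-b+1}{a+1}$, and with $a,b = \Theta(n)$ this is $e^{\beta n}$ for an explicit $\beta$ depending on $c$ via Stirling.

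\textbf{Main obstacle.} The delicate point is the combinatorial claim that $\lambda \notin \Lambda_c$ forces a genuinely two-dimensional sub-rectangle of side $\Theta(n)$ in each direction — one must rule out degenerate shapes like an L-shape (a long first row plus a long first column but nothing else), which is exactly the worst case and is precisely what the constraint $c \le 1/6$ is there to handle. I expect the bulk of the real work, and the only place a constant like $1/6$ can enter, to be this counting argument: from $\lambda_1 < (1-c)n$, $\lambda_1' < (1-c)n$, and $\sum \lambda_i = n$, deduce that after deleting the first row and first column at least $\Omega(n)$ cells remain and they are spread over $\ge \Omega(n)$ rows each of length $\ge 2$, i.e. $\lambda_2 \ge \Omega(n)$ and $\lambda'_2 \ge \Omega(n)$; this yields a $\lceil cn/2\rceil \times \lceil cn/2 \rceil$ square inside $\lambda$. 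Once that geometric fact is nailed down, the dimension lower bound is a routine Stirling estimate on a binomial coefficient, and I would not belabor it.
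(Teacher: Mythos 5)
Your proposal hinges on the combinatorial claim that $\lambda\notin\Lambda_c$ forces $\lambda$ to contain a large rectangular subdiagram---a $2\times cn$ strip or better---and you flag the hook (L-shape) as the danger and then assert that $c\le 1/6$ ``is there to handle'' it. That assertion is false, and with it the whole strategy collapses. Take the hook $\lambda=(n/2,1^{n/2})$. Then $\lambda_1=n/2<(1-c)n$ and $\lambda'_1=n/2+1<(1-c)n$ for every $c\le 1/6$, so this $\lambda$ genuinely lies in $\widehat{S_n}\setminus\Lambda_c$. Yet $\lambda_2=1$, so $\lambda$ contains no $2\times 2$ block, let alone a $2\times cn$ strip or a $\Theta(n)\times\Theta(n)$ square. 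Your ``honest version'' claims that at least $\Omega(n)$ cells remain after deleting the first row and first column, but the number of such cells is $n-\lambda_1-\lambda'_1+1$, which is exactly $0$ for any hook. The two sets of cells you count (those below row $1$, and those right of column $1$) can be disjoint, and for hooks they are. So the structural input you need is simply not available, and no choice of the constant $c$ in this range rescues it---$\Lambda_c$ is defined only by the lengths of the first row and first column and says nothing about the second.

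The lemma is still true for hooks, of course: $d_{(n-k,1^k)}=\binom{n-1}{k}$, which is exponentially large when $k$ and $n-k$ are both $\Omega(n)$. But that tells you a correct proof must take a route that does not rely on a fat interior. The paper's proof does exactly this: after disposing of the easy regime $\lambda_1\le cn$ (where every hook length is $\le 2cn$, giving $d_\lambda\ge(1/(2ce))^n$), it works directly with the hook-length formula in the regime $cn<\lambda_1<(1-c)n$. It writes $\prod_{i\le\lambda_1}\mathrm{hook}(c_i)=\lambda_1!\prod_i\bigl(1+\tfrac{\lambda'_i-1}{\lambda_1-i+1}\bigr)$, bounds the second product using $1+x\le e^x$ and Chebyshev's sum inequality, bounds $\mathrm{Hook}(\tilde\lambda)\le(n-\lambda_1)!$ trivially, and ends up with $d_\lambda\ge\binom{n}{\lambda_1}e^{-\lambda_1/r-r/c^2}$, which is $e^{\Omega(n)}$ once $r$ is tuned. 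This argument degrades gracefully on hooks (where $\lambda'_i=1$ for $i\ge 2$ and the exponential correction factor is tiny), which is precisely the case your approach cannot reach. If you want to keep a branching-rule / subdiagram argument, you would need an explicit case split---near-hook shapes handled via $\binom{n-1}{k}$, and non-near-hook shapes handled via an embedded rectangle---but that is substantial extra work not present in your proposal, and you would still have to quantify ``near-hook'' carefully.
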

\begin{proof}[Proof of Lemma~\ref{Lemma:LargeDimS_n}]
Consider an integer partition of $n$, $\lambda=(\lambda_1,\ldots,\lambda_t)$, with both $\lambda_1$ and $t$ less than $(1-c)n$. Let $\lambda'=(\lambda'_1,\ldots,\lambda'_{\lambda_1})$ be the conjugate of $\lambda$, where $t=\lambda'_1\geq \lambda'_2 \geq \ldots \geq \lambda'_{\lambda_1}$ and $\sum_{i}\lambda'_i=n$. WLOG, assume $\lambda'_1\leq \lambda_1$. We label all the cells of the Young diagram of shape $\lambda$ as $c_1,\ldots,c_n$, in which $c_i$ is the $i^{\rm th}$ cell from the left of the first row, for $1\leq i\leq \lambda_1$.

The dimension of $\lambda$ is determined by the \emph{hook length formula}:
\[
d_{\lambda} = \frac{n!}{\mathrm{Hook}(\lambda)}\,,\quad\quad \mathrm{Hook}(\lambda) = \prod_{i=1}^n \mathrm{hook}(c_i)\,,
\]
where $\mathrm{hook}(c_i)$ is the number of cells appearing in either the same column or the same row as the cell $c_i$, excluding those that are above or the the left of $c_i$. In particular,
\[
\mathrm{hook}(c_i) = \lambda_1-i+\lambda'_i \quad\quad\text{for $1\leq i\leq \lambda_1$.}
\] 

If $\lambda_1\leq cn$, we have $\mathrm{hook}(c_i)\leq t+\lambda_1\leq 2cn$ for all $i$, thus
\[
d_{\lambda} \geq \frac{n!}{(2cn)^n} \geq \frac{n^n}{e^n(2cn)^n} \geq \left(\frac{3}{e}\right)^{n}\geq e^{\beta n}\,.
\]

Now we consider the case  $cn< \lambda_1 < (1-c)n$.
Let $\tilde{\lambda}=(\lambda_2,\ldots,\lambda_t)$, this is an integer partition of $n-\lambda_1$ whose Young diagram is obtained by removing the first row of $\lambda$. Applying the hook length formula for $\tilde{\lambda}$ and the fact that $d_{\tilde{\lambda}}\geq 1$ gives us:
\[
Hook(\tilde{\lambda}) =\frac{(n-\lambda_1)!}{d_{\tilde{\lambda}}} \leq (n-\lambda_1)!\,.
\]
Then we have
\[
\mathrm{Hook}(\lambda) = \mathrm{Hook}(\tilde{\lambda}) \prod_{i=1}^{\lambda_1}\mathrm{hook}(c_i)\leq (n-\lambda_1)!\prod_{i=1}^{\lambda_1}\mathrm{hook}(c_i)\,. 
\]
On the other hand, we have
\[
\begin{split}
\prod_{i=1}^{\lambda_1}\mathrm{hook}(c_i)
&= \prod_{i=1}^{\lambda_1} (\lambda_1-i+\lambda'_i)\\
&= \lambda_1!\prod_{i=1}^{\lambda_1} \left(1+\frac{\lambda'_i-1}{\lambda_1-i+1}\right)\\
&\leq \lambda_1!\exp\left(\sum_{i=1}^{\lambda_1}\frac{\lambda'_i-1}{\lambda_1-i+1}\right) \quad\text{(since $1+x\leq e^x$ for all $x$).}
\end{split}
\]
To upper bound the exponent in the last equation, we use Chebyshev's sum inequality, which states that for any increasing sequence $a_1\geq a_2\geq \ldots \geq a_k$ and any decreasing sequence $b_1\leq b_2\leq \ldots \leq b_k$ or real numbers, we have $k\sum_{i=1}^k a_ib_i \leq \left(\sum_{i=1}^k a_i\right)\left(\sum_{i=1}^k b_i\right)$. Since the sequence $\set{\lambda'_i-1}$ is increasing and the sequence $\set{1/(\lambda_1-i+1)}$ is decreasing, we get
\[
\begin{split}
\sum_{i=1}^{\lambda_1}\frac{\lambda'_i-1}{\lambda_1-i+1} 
&\leq \frac{\sum_{i=1}^{\lambda_1}(\lambda'_i-1)}{\lambda_1}\left(\sum_{i=1}^{\lambda_1}\frac{1}{\lambda_1-i+1}\right)\\
&= \frac{n-\lambda_1}{\lambda_1}\left(\sum_{i=1}^{\lambda_1}\frac{1}{i}\right)
\leq \frac{1}{c}\left(\sum_{i=1}^{\lambda_1}\frac{1}{i}\right) \quad\quad\text{(since $\lambda_1>cn$)}\,.
\end{split}
\]
Let $r$ be a constant such that $1 <r/c <cn$. Bounding $1/i\leq 1$ for all $i\leq r/c$ and bounding $1/i\leq c/r$ for all $i>r/c$ yields
\[
\sum_{i=1}^{\lambda_1}\frac{1}{i} \leq \frac{r}{c}+ \frac{c\lambda_1}{r}\,.  
\]
Putting the pieces together, we get
\begin{align*}
d_{\lambda} 
&\geq  \frac{n!}{(n-\lambda_1)!\lambda_1! e^{\lambda_1/r+r/c^2}}  
={n\choose \lambda_1}e^{-\lambda_1/r-r/c^2} & \\ 
&\geq \left(\frac{n}{\lambda_1}\right)^{\lambda_1} e^{-\lambda_1/r-r/c^2} & \\ 
&\geq \left(\frac{e^{-1/r}}{1-c}\right)^{\lambda_1} e^{-r/c^2}&\text{(since $ \lambda_1 < (1-c)n$)}\,. &  
\end{align*}
Let $0<\delta < \ln\frac{1}{1-c}$ be a constant and choose $r$ large enough so that $e^{-1/r}\geq (1-c)e^{\delta}$. Then
\[
d_{\lambda}\geq e^{\delta\lambda_1-r/c^2}  \geq e^{\delta cn-r/c^2} \geq e^{\beta n}\,.
\]
\end{proof}

\begin{remark}
The lower bound in Lemma \ref{Lemma:LargeDimS_n} is essentially tight. To see this, consider the hook of width $(1-c)n$ and of depth $cn$. This hook has dimension roughly equal ${n\choose cn}$, which is no more than $(e/c)^{cn}$.
\end{remark}

\subsection{Proof of Theorem \ref{Thm:McEliece}}

We are ready to prove Theorem \ref{Thm:McEliece}. 

\begin{proof}[Proof of Theorem \ref{Thm:McEliece}]
To apply Theorem \ref{Thm:general}, let $0<c<\min\set{1/6, 1/4-a}$ be a constant and $S$ be the set of irreps of $(\GL_k(\mathbb{F}_q)\times S_n)\wr\ints_2$ of the forms $\set{\tau\times\lambda, \eta\times \mu}$, $\set{\tau\times \lambda}$, $\set{\tau\times \lambda}'$ with $\tau,\eta\in \widehat{\GL_k(\FF_q)}$ and $\lambda,\mu\in\Lambda_c$, where  $\Lambda_c$ is mentioned in Section \ref{SubSec:Symmetric}. Firstly, we need upper bounds for $\overline{\chi}_{\overline{S}}(K)$, $|S|$, and $d_S$.

Since $\aut(M)$ has minimal degree  $m$, by Inequality \eqref{Eq:RoichmanBound_1} in Section \ref{SubSec:Symmetric},  we have for all $\lambda\in \widehat{S_n}\setminus \Lambda_c$,
\[
\overline{\chi}_{\lambda}(\aut(M)) \leq e^{-\alpha m} \,.
\]
Combining with Lemma \ref{Lemma:LargeDimS_n} yields
\[
\overline{\chi}_{\overline{S}}(K) \leq \max\set{e^{-2\alpha m}, e^{-\beta n}} \leq e^{-\delta m}\,,
\] 
for some constant $\delta>0$ (we can set $\delta=\min\set{2\alpha, \beta}$). 

Since $\left|\widehat{\GL_k(\FF_q)}\right| \leq \left|\GL_k(\FF_q)\right| \leq q^{k^2}$ and by Lemma \ref{Lemma:SmallDimS_n}, we have
\[
|S| 
\leq \left|\widehat{\GL_k(\FF_q)}\right|^2 \left|\Lambda_c\right|^2 
\leq q^{2k^2} e^{O(\sqrt{n})}\,.
\]
To bound $d_S$, we start with bounding the dimension of each representation in $S$. A representation $\set{\tau\times\lambda, \eta\times \mu}$ in $S$ has dimension 
\[
2d_{\tau\times\lambda}d_{\eta\times \mu} 
=2d_{\tau}d_{\lambda}d_{\eta}d_{\mu} 
\leq 2d_{\tau}d_{\eta} n^{2cn} 
\leq 2q^{k^2}n^{2cn}\,,
\]
where the first inequality follows  Lemma \ref{Lemma:SmallDimS_n}.
The last inequality holds because $d_{\tau}^2 \leq \sum_{\rho\in\widehat{\GL_k(\FF_q)}} d_{\rho}^2=|\GL_k(\FF_q)|$ for any $\tau\in \widehat{\GL_k(\FF_q)}$.
Similarly, a representation $\set{\tau\times \lambda}$ or $\set{\tau\times \lambda}'$ in $S$ has dimension $d_{\tau\times \lambda}^2 \leq q^{k^2}n^{2cn}$. Hence, the maximal dimension of a representation in the set $S$ is
\[
d_S \leq 2q^{k^2}n^{2cn}\,.
\]
Let $4a+4c<d<1$ be a constant and let $\gamma_1$ be any constant such that $0<\gamma_1 < d-4c-4a$. Now we set the dimension threshold $D=n^{dn}$. From the upper bounds on $|S|$ and $d_S$, we get
\begin{align*}
|S|\frac{d_S^2}{D} &\leq 4q^{4k^2}e^{O(\sqrt{n})} n^{(4c-d)n}  \\
&\leq 4e^{O(\sqrt{n})}n^{(4a+4c-d)n} & \mbox{(since $q^{k^2}\leq n^{an}$)}\\
&\leq n^{-\gamma_1 n}								& \mbox{for sufficiently large $n$.}\\
\end{align*}

 Let $L$ be the set of all irreps of $(\GL_k(\mathbb{F}_q)\times S_n)\wr\ints_2$ of dimension at least $D$. Bounding $|L|$ by the number of irreps of $(\GL_k(\mathbb{F}_q)\times S_n)\wr\ints_2$, which is no more than square of the number of irreps of $\GL_k(\mathbb{F}_q)\times S_n$, we have
\[
|L| \leq \left|\widehat{\GL_k(\mathbb{F}_q)}\right|^2  \left|\widehat{S_n}\right|^2 \leq \left|{\GL_k(\mathbb{F}_q)}\right|^2 p(n)^2 \,.
\]
Hence, for sufficiently large $n$,
\[
\begin{split}
\frac{|L|D^2}{\left|(\GL_k(\mathbb{F}_q)\times S_n)\wr\ints_2\right|} 
&\leq \frac{\left|{\GL_k(\mathbb{F}_q)}\right|^2 p(n)^2n^{2dn}}{2\left|(\GL_k(\mathbb{F}_q)\right|^2 |S_n|^2 }
= \frac{ p(n)^2n^{2dn}}{2 (n!)^2 }\\
&\leq \frac{e^{O(\sqrt{n})} n^{2dn}}{2n^{2n}e^{-2n}} \\
&\leq e^{O(n)} n^{2(d-1)n} \leq n^{-\gamma_2 n} \qquad\mbox{so long as $\gamma_2<2(1-d)$.}
\end{split}
\]

By Theorem \ref{Thm:general},  we have
\[
\mathcal{D}_K\leq 4|K|^2(e^{-\delta m}+n^{-\gamma_1 n}+n^{-\gamma_2 n})\leq 4|K|^2(e^{-\delta m}+n^{-\gamma n})\,,
\] 
for some constant $\gamma>0$. This completes the proof.


\end{proof}


\paragraph{Acknowledgments.}  This work was supported by the NSF under grants CCF-0829931, 0835735, and 0829917 and by the DTO under contract W911NF-04-R-0009.  We are grateful to Jon Yard for helpful discussions.


\bibliography{HSP-large-subgroups}

\newpage
\appendix \textbf{\Large{Appendices} }

\section{Supplemental proofs for the main theorem}\label{ApxSec:MainThm}

\begin{proposition}\label{Prop:ExpSchur} Let $H<G$ and $g$ be chosen from $G$ uniformly at random. Then for $\rho\in\widehat{G}$ and $\mathbf{b}\in B_{\rho}$, 
\[
\expect[g]{P_{H^g}(\mathbf{b}\mid\rho)}={1}/{d_{\rho}}\,.
\]
\end{proposition}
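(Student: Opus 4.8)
The plan is to compute the expectation directly using the formula $P_{H^g}(\mathbf{b}\mid\rho)=\|\Pi_{H^g}^{\rho}\mathbf{b}\|^2/\tr(\Pi_H^{\rho})$ derived in Section~\ref{SubSec:QFS}, and then apply Schur's lemma to the resulting $G$-average. Writing $\Pi^\rho_g$ as shorthand for $\Pi_{H^g}^{\rho}$, I first expand
\[
\|\Pi^\rho_g\mathbf{b}\|^2=\langle\mathbf{b},\Pi^\rho_g\mathbf{b}\rangle=\frac{1}{|H|}\sum_{h\in H}\langle\mathbf{b},\rho(g)^{-1}\rho(h)\rho(g)\mathbf{b}\rangle,
\]
using that $\Pi^\rho_g$ is a self-adjoint idempotent and that $H^g=g^{-1}Hg$. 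Taking the expectation over $g$ uniform in $G$ and exchanging sum and expectation, the key quantity to evaluate is $\expect[g]{\rho(g)^{-1}\rho(h)\rho(g)}$ for each fixed $h\in H$.

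The core step is Schur averaging: the operator $\expect[g]{\rho(g)^{-1}\rho(h)\rho(g)}=\frac{1}{|G|}\sum_{g\in G}\rho(g)^{-1}\rho(h)\rho(g)$ commutes with $\rho(x)$ for every $x\in G$ (reindex the sum by $g\mapsto gx$), so since $\rho$ is irreducible, Schur's lemma forces it to be a scalar multiple of the identity, say $c_h\, I_{d_\rho}$. Taking traces gives $c_h\, d_\rho=\tr(\rho(h))=\chi_\rho(h)$, hence $\expect[g]{\rho(g)^{-1}\rho(h)\rho(g)}=\frac{\chi_\rho(h)}{d_\rho}I$. Therefore
\[
\expect[g]{\|\Pi^\rho_g\mathbf{b}\|^2}=\frac{1}{|H|}\sum_{h\in H}\frac{\chi_\rho(h)}{d_\rho}\langle\mathbf{b},\mathbf{b}\rangle=\frac{1}{d_\rho}\cdot\frac{1}{|H|}\sum_{h\in H}\chi_\rho(h),
\]
using $\|\mathbf{b}\|=1$. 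On the other hand, $\tr(\Pi_H^{\rho})=\frac{1}{|H|}\sum_{h\in H}\chi_\rho(h)$ by definition of $\Pi_H^{\rho}$. Dividing, the character sums cancel and we obtain $\expect[g]{P_{H^g}(\mathbf{b}\mid\rho)}=1/d_\rho$, as claimed. (One should note $\tr(\Pi^\rho_H)\neq 0$ since $\Pi^\rho_H$ is a nonzero orthogonal projection—its trace is a positive integer—so the division is legitimate.)

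I do not anticipate a genuine obstacle here; the only point requiring a little care is the justification that the $G$-average of $\rho(g)^{-1}\rho(h)\rho(g)$ lands in the commutant and hence is scalar—this is exactly the standard Schur-averaging argument, and it is the same computation invoked implicitly in the proof of Lemma~\ref{Lemma:QFS-var}. Everything else is bookkeeping: linearity of expectation, the explicit form of $P_{H^g}(\mathbf{b}\mid\rho)$, and the definition of $\Pi_H^\rho$.
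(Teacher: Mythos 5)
Your proof is correct and uses the same Schur-averaging argument as the paper; the only cosmetic difference is that the paper applies Schur's lemma once to the averaged operator $\expect[g]{\Pi^{\rho}_{H^g}}$, while you apply it term by term to each $\expect[g]{\rho(g)^{-1}\rho(h)\rho(g)}$ and then sum over $h\in H$, which is equivalent by linearity. One small overstatement: $\Pi^{\rho}_{H}$ need not be a \emph{nonzero} projection in general (its image is the $H$-fixed subspace of $V_\rho$, which can be trivial), but the conditional probability $P_{H^g}(\cdot\mid\rho)$ is only defined when $\rho$ has positive probability of being observed, i.e.\ when $\tr(\Pi^{\rho}_{H})>0$, so the implicit assumption is the same one the paper makes.
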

\begin{proof}
Schur's lemma asserts that if $\rho$ is irreducible, the only matrices which commute with $\rho(g)$ for all $g$ are the scalars. Hence, $$\expect[g]{\Pi^{\rho}_{H^g}}=\frac{1}{|G|}\sum_{g\in G}\rho^{\dagger}(g)\Pi^{\rho}_{H}\rho(g)=\frac{\tr(\Pi^{\rho}_{H})}{d_{\rho}}\mathbf{1}_{d_{\rho}}\,,$$ which implies that
\[
\expect[g]{\|\Pi^{\rho}_{H^g}\mathbf{b}\|^2} 
= \expect[g]{\tup{\mathbf{b}, \Pi^{\rho}_{H^g}\mathbf{b}}}
=\tup{\mathbf{b}, \expect[g]{\Pi^{\rho}_{H^g}}\mathbf{b}}
= \frac{\tr(\Pi^{\rho}_{H})}{d_{\rho}}\,.
\]
\end{proof}


\subsection{Proof of Lemma~\ref{Lemma:LargeSmall}}

\begin{proof}[Proof of Lemma \ref{Lemma:LargeSmall}]
Let $L_{\sigma}$ be the subspace of $\rho\otimes \rho^*$ consisting of all copies of $\sigma$. Since 
$B_{\rho}$ is orthonormal, the vectors $\set{\mathbf{b}\otimes \mathbf{b}^*\mid \mathbf{b}\in B_{\rho}}$ are mutually orthogonal in $\rho\otimes \rho^*$.
Thus, 
\[
\sum_{\mathbf{b}\in B_{\rho}}\left\|\Pi_{\sigma}^{\rho\otimes\rho^*}(\mathbf{b}\otimes \mathbf{b}^*)\right\|^2\leq \dim L_{\sigma}\,.
\]
Note that $\dim L_{\sigma}$ is equal to $d_{\sigma}$ times \text{ the multiplicity of $\sigma$ in $\rho\otimes \rho^*$}. 
On the other hand, we have
\[
\begin{split}
\text{multiplicity of $\sigma$ in $\rho\otimes \rho^*$} 
&=\tup{\chi_{\sigma}, \chi_{\rho}\chi_{\rho^*}}=\tup{\chi_{\sigma}\chi_{\rho}, \chi_{\rho^*}}\\
&=\text{multiplicity of $\rho^*$ in $\sigma\otimes \rho$}\\
&\leq \frac{\dim (\sigma\otimes \rho)}{\dim\rho^*} = d_{\sigma}\,,
\end{split}
\]
Hence, 
\[
\sum_{\mathbf{b}\in B_{\rho}}\left\|\Pi_{\sigma}^{\rho\otimes\rho^*}(\mathbf{b}\otimes \mathbf{b}^*)\right\|^2\leq d^2_{\sigma}\,.
\]

\end{proof}




\remove{
\subsection{Roichman's theorem}

Below is the statement of Roichman's theorem that immediately implies the bound of \eqref{Eq:RoichmanBound_1} on normalized characters for the symmetric group.
\begin{theorem}[Roichman's Theorem \citep{Ref_Roichman96upper}] There exist constant $b>0$ and $0< q <1$ so that for $n>4$, for every $\pi\in S_n$, and for every irrep $\lambda$ of $S_n$,
\[
\left|\frac{\chi_{\lambda}(\pi)}{d_{\lambda}}\right| \leq \left(\max\left(q, \frac{\lambda_1}{n},\frac{\lambda'_1}{n}\right)\right)^{b\cdot \mathrm{supp}(\pi)}
\]
where $\mathrm{supp}(\pi)=\#\set{k\in[n]\mid \pi(k)\neq k}$ is the support of $\pi$.
\end{theorem}
}




\section{Rational Goppa codes}\label{ApxSec:Goppa}

This part summarizes definitions and key properties of rational Goppa codes that would be useful in our analysis. Following Stichtenoth \citep{Ref_Stichtenoth90on}, we shall describe Goppa codes in terms of algebraic function fields instead of algebraic curves. A complete treatment for this subject can be found in \citep{Ref_Stichtenoth08algebraic}.

A \emph{rational function field} over $\FF_q$ is a field extension $\FF_q(x)/\FF_q$ for some $x$ transcendental over $\FF_q$. Each element $z\in \FF_q(x)$ can be viewed as a function whose evaluation at a base field element $a\in \FF_q$ is determined as follows: write $z=f(x)/g(x)$ for some polynomials $f(x), g(x)\in \FF_q[x]$, then
\[
z(a) =
\begin{cases}
\frac{f(a)}{g(a)} \in \FF_q &\text{if}~ g(a)\neq 0 \\
\infty &\text{if}~ g(a)= 0\,. \\
\end{cases}
\]

 A \emph{Rieman-Roch space}\footnote{In terms of algebraic function fields, a Rieman-Roch space is defined in the association with a \emph{divisor} of the function field $F/K$, where a divisor is a finite sum $\sum_i n_i P_i$ with $n_i\in\ints$ and $P_i$'s being \emph{places} of the function field. In the rational function field $K(x)/K$, we can show that every divisor can be written as $rP_{\infty}+(z)$ for some integer $r$ and some nonzero $z\in K(x)$, where $P_{\infty}$ is the infinite place (defined in \citep[pg. 9]{Ref_Stichtenoth08algebraic}), and $(z)$ is the \emph{principal divisor} of $z$. The space $\mathcal{L}(r, g, h)$ is indeed the Rieman-Roch space associated with the divisor $rP_{\infty}+(z)$ with $z=h(x)/g(x)$.} 
 in the rational function field $\FF_q(x)/\FF_q$ is a subset of $\FF_q$ of the form
\[
\mathcal{L}(r, g, h) = \set{\frac{f(x)g(x)}{h(x)}\Bigl| f(x)\in \FF_q[x],~ \deg f(x)\leq r}
\]
for some nonzero polynomials $g(x), h(x)\in K[x]$ and some integer $r$. Note that $\mathcal{L}(r, g, h)$ is a vector space of dimension $r+1$ over $\FF_q$.

\begin{definition}(A special case of Definition 2.2.1 in \citep{Ref_Stichtenoth08algebraic})\label{Def:rational-Goppa-code}
Let $g(x), h(x)\in\FF_q[x]$ be nonzero coprime polynomials, and let $r<n$ a nonnegative integer. 
Let $\gamma_1,\ldots,\gamma_n$ be $n$ distinct elements in the field\footnote{In the case $r=\deg h(x)-\deg g(x)$, one can choose one of the points $P_i$'s to be $\infty$. However, we rule out this case  to keep the discussion simple.} $\FF_q$ such that $g(\gamma_i)\neq 0$ and $h(\gamma_i)\neq 0$ for all $i$. Then a \emph{rational Goppa code} associated with $g, h$ and $\gamma_i$'s is defined by 
\[
\mathcal{C}(\gamma_1,\ldots,\gamma_n, r, g, h) \eqdef \set{(z(\gamma_1),\ldots, z(\gamma_n))\mid z\in\mathcal{L}(r,g,h)}\subset \FF_q^n\,.
\]
\end{definition}

\begin{remark}
A classical binary Goppa code can be obtained by setting $q=2^m$, $r=n-\deg g(x)-1$, and $h(x)=\sum_{j=1}^n\prod_{i\neq j}(x-\gamma_i)$  and then intersecting the code $\mathcal{C}(\gamma_1,\ldots,\gamma_n, r, g, h)$ with the vector space $\FF_2^n$ (see \citep{Ref_Bernstein08listdecoding}). Generalized Reed-Solomon codes are a special case of rational Goppa codes in which the polynomials $g(x)$ and $h(x)$ are both constants.
\end{remark}

\begin{theorem}\label{Thm:rational-Goppa-code}(A special case of Corollary 2.2.3 in \citep{Ref_Stichtenoth08algebraic})
The code defined in Definition \ref{Def:rational-Goppa-code} is an $[n,k,d]$-linear code over $\FF_q$ with dimension $k=r+1$ and minimum distance $d\geq n-r$. Consequentially, this code can correct  at least $(n-r-1)/2$ errors.
\end{theorem}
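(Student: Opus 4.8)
The plan is to realize the code $\mathcal{C}(\gamma_1,\ldots,\gamma_n,r,g,h)$ as a coordinate rescaling of a Reed--Solomon code and then invoke the elementary fact that a nonzero polynomial over $\FF_q$ of degree at most $r$ has at most $r$ roots in $\FF_q$. First I would introduce the evaluation map $\mathrm{ev}\colon \mathcal{L}(r,g,h)\to\FF_q^n$, $\mathrm{ev}(z)=(z(\gamma_1),\ldots,z(\gamma_n))$, whose image is by definition $\mathcal{C}$. Every $z\in\mathcal{L}(r,g,h)$ has the form $z=fg/h$ with $\deg f\leq r$, and since $g(\gamma_i)\neq 0$ and $h(\gamma_i)\neq 0$ for every $i$, the value $z(\gamma_i)=f(\gamma_i)\cdot u_i$ is finite, where $u_i:=g(\gamma_i)/h(\gamma_i)\neq 0$. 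Hence $\mathrm{ev}$ is $\FF_q$-linear, so $\mathcal{C}$ is a linear code of length $n$, and in fact $\mathcal{C}$ is the image of the Reed--Solomon code $\mathrm{RS}:=\set{(f(\gamma_1),\ldots,f(\gamma_n))\mid f\in\FF_q[x],\ \deg f\leq r}$ under the invertible diagonal map $\mathrm{diag}(u_1,\ldots,u_n)$. In particular $z(\gamma_i)=0$ if and only if $f(\gamma_i)=0$, and $z\neq 0$ if and only if $f\neq 0$.

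Given this, the dimension claim reduces to injectivity of $\mathrm{ev}$: if $z=fg/h\neq 0$ yet $\mathrm{ev}(z)=0$, then $f$ is a nonzero polynomial of degree $\leq r<n$ vanishing at the $n$ distinct points $\gamma_1,\ldots,\gamma_n$, which is impossible. Thus $\mathrm{ev}$ is an $\FF_q$-isomorphism onto $\mathcal{C}$, and since $\mathcal{L}(r,g,h)$ has dimension $r+1$ over $\FF_q$ (with basis $x^ig/h$, $0\leq i\leq r$), we obtain $\dim_{\FF_q}\mathcal{C}=r+1=k$. For the minimum distance I would take any nonzero codeword $c=\mathrm{ev}(z)$, $z=fg/h$ with $f\neq 0$: its zero coordinates are exactly the indices $i$ with $f(\gamma_i)=0$, of which there are at most $\deg f\leq r$. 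Hence every nonzero codeword has Hamming weight at least $n-r$, so $d\geq n-r$. Since a linear code of minimum distance $d$ corrects every error pattern of weight at most $\floor{(d-1)/2}\geq \floor{(n-r-1)/2}$, the stated error-correction bound follows.

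There is no real obstacle here: the whole argument rests on the degree-versus-number-of-roots bound for univariate polynomials, plus the observation that the factor $g/h$ merely rescales coordinates by the nonzero scalars $u_i$ and so alters neither the support of a codeword nor the kernel of $\mathrm{ev}$. The only hypotheses that get used are $r<n$ (for injectivity) and $g(\gamma_i)\neq 0\neq h(\gamma_i)$ for all $i$ (so that $\mathrm{ev}$ is defined and the $u_i$ are units); coprimality of $g$ and $h$ is not needed for this particular statement, being relevant only to the divisor-theoretic formulation. One could phrase the conclusion even more crisply by noting that $\mathcal{C}$ is monomially equivalent to a generalized Reed--Solomon code, which is MDS, so that $d=n-r$ exactly and $k+d=n+1$; but the one-sided bounds asserted in the theorem already follow directly from the argument above.
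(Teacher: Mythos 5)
Your proof is correct. Note, however, that the paper does not actually prove this theorem: it states it as ``a special case of Corollary 2.2.3'' in Stichtenoth's book and leaves the proof to that reference, which works in the general divisor-theoretic framework of algebraic function fields (Riemann--Roch spaces attached to arbitrary divisors). What you have done instead is give a short, self-contained, elementary argument specialized to the genus-zero (rational function field) case: you identify $\mathcal{C}$ as the image of the degree-$\leq r$ Reed--Solomon code under the invertible diagonal rescaling by $u_i = g(\gamma_i)/h(\gamma_i)$, and then everything --- injectivity of the evaluation map, the dimension $k = r+1$, and $d \geq n-r$ --- falls out of the elementary fact that a nonzero polynomial of degree $\leq r$ has at most $r < n$ roots. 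This is precisely the monomial equivalence with a generalized Reed--Solomon code and is a perfectly standard way to prove the special case. The trade-off is the usual one: your route is concrete and short but does not generalize to higher-genus Goppa codes, whereas the Riemann--Roch argument the paper cites covers the general case at the cost of more machinery. Your side remark that coprimality of $g$ and $h$ is dispensable here is also correct: the only hypotheses used are $r < n$ and $g(\gamma_i) \neq 0 \neq h(\gamma_i)$, so $z(\gamma_i) = f(\gamma_i) u_i$ is well-defined and the $u_i$ are units regardless.
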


The rational Goppa code $\mathcal{C}(\gamma_1,\ldots,\gamma_n, r, g, h)$  has a generator matrix:
\[
M_0= \begin{pmatrix}
\frac{g(\gamma_1)}{h(\gamma_1)} & \ldots & \frac{g(\gamma_n)}{h(\gamma_n)}\\
\gamma_1^{}\frac{g(\gamma_1)}{h(\gamma_1)} & \ldots & \gamma_n^{}\frac{g(\gamma_n)}{h(\gamma_n)}\\
\vdots   & \ddots &\vdots\\
\gamma_1^{r}\frac{g(\gamma_1)}{h(\gamma_1)} & \ldots & \gamma_n^{r}\frac{g(\gamma_n)}{h(\gamma_n)}\\
\end{pmatrix}\,.
\]
\begin{proposition}
The matrix $M_0$ has full rank, that is, its column rank equals $r+1$. Hence, every generator matrix of a rational Goppa code has full rank.
\end{proposition}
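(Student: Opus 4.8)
The plan is to exhibit $M_0$ as a product of a generalized Vandermonde matrix with an invertible diagonal matrix, and then invoke the classical nonvanishing of Vandermonde determinants. Write $v_i = g(\gamma_i)/h(\gamma_i)$ for $1\le i\le n$; by the hypothesis in Definition~\ref{Def:rational-Goppa-code} we have $g(\gamma_i)\neq 0$ and $h(\gamma_i)\neq 0$, so each $v_i$ is a nonzero element of $\FF_q$. Let $D=\mathrm{diag}(v_1,\ldots,v_n)$, an invertible $n\times n$ matrix over $\FF_q$, and let $V$ be the $(r+1)\times n$ matrix whose $(i,j)$-entry is $\gamma_j^{\,i-1}$ (so the $j^{\rm th}$ column of $V$ is $(1,\gamma_j,\ldots,\gamma_j^{\,r})^{\tpos}$). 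A direct check of entries shows $M_0 = V D$, and since $D$ is invertible, $M_0$ and $V$ have the same column rank.

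Next I would bound the rank of $V$ from below. Because $r<n$, we may select any $r+1$ of the columns of $V$, say those indexed by $j_1<\cdots<j_{r+1}$; the resulting $(r+1)\times(r+1)$ square matrix is a Vandermonde matrix in the nodes $\gamma_{j_1},\ldots,\gamma_{j_{r+1}}$, whose determinant equals $\prod_{1\le a<b\le r+1}(\gamma_{j_b}-\gamma_{j_a})$. Since $\gamma_1,\ldots,\gamma_n$ are pairwise distinct, this product is nonzero, so $V$ has rank $r+1$, and therefore so does $M_0$. As $M_0$ is a $(r+1)\times n$ matrix with $r+1\le n$ and $k=r+1$ by Theorem~\ref{Thm:rational-Goppa-code}, this is exactly full column rank.

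Finally, to conclude that \emph{every} generator matrix of a rational Goppa code has full rank, I would observe that any two generator matrices $M$ and $M_0$ of the same $[n,k]$-linear code are related by $M = A M_0$ for some $A\in\GL_k(\FF_q)$: both have row space equal to the code, which has dimension $k=r+1$, so each row of $M$ is a linear combination of the rows of $M_0$ and vice versa, and the transition matrix is invertible. Since left multiplication by an element of $\GL_k(\FF_q)$ preserves rank, $\mathrm{rank}(M)=\mathrm{rank}(M_0)=k$.

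There is no real obstacle here: the entire content is the factorization $M_0=VD$ together with the standard Vandermonde determinant formula. The only points deserving a word of care are that the scalars $v_i$ are genuinely nonzero---this is precisely why Definition~\ref{Def:rational-Goppa-code} insists that $g$ and $h$ not vanish at the $\gamma_i$---and the bookkeeping observation that ``full rank'' for the $(r+1)\times n$ matrix $M_0$, with $r+1\le n$, means rank exactly $r+1$.
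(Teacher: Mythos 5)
Your argument is correct and essentially identical to the paper's: both exhibit the relevant matrix as a product of a Vandermonde matrix and an invertible diagonal matrix and invoke the nonvanishing of the Vandermonde determinant (the paper restricts to the square submatrix $N_0$ of the first $r+1$ columns, whereas you factor all of $M_0$ first and then extract a square submatrix, a cosmetic rearrangement). Your added paragraph justifying the ``hence every generator matrix'' clause is a fine elaboration of a step the paper leaves implicit.
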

\begin{proof}
It suffices to to show that the first $r+1$ columns of $M_0$ are linearly independent. Equivalently, we show that the matrix $N_0$ below has nonzero determinant:
\[
N_0=
\begin{pmatrix}
\frac{g(\gamma_1)}{h(\gamma_1)} & \ldots & \frac{g(\gamma_{r+1})}{h(\gamma_{r+1})}\\
\gamma_1^{}\frac{g(\gamma_1)}{h(\gamma_1)} & \ldots & \gamma_{r+1}^{}\frac{g(\gamma_{r+1})}{h(\gamma_{r+1})}\\
\vdots   & \ddots &\vdots\\
\gamma_1^{r}\frac{g(\gamma_1)}{h(\gamma_1)} & \ldots & \gamma_{r+1}^{r}\frac{g(\gamma_{r+1})}{h(\gamma_{r+1})}\\
\end{pmatrix} 
=
\begin{pmatrix}
1 & \ldots & 1\\
\gamma_1^{} & \ldots & \gamma_{r+1}^{}\\
\vdots   & \ddots &\vdots\\
\gamma_1^{r} & \ldots & \gamma_{r+1}^{r}\\
\end{pmatrix}
\begin{pmatrix}
\frac{g(\gamma_1)}{h(\gamma_1)} & ~ & ~\\
~& \ddots & ~\\
~& ~  & \frac{g(\gamma_{r+1})}{h(\gamma_{r+1})}\\
\end{pmatrix}\,.
\]
The first matrix in the above product is a Vandermonde matrix, which has nonzero determinant because $\gamma_i$'s are distinct. The second matrix also has nonzero determinant because $g(\gamma_i)\neq 0$ for all $i$.
Hence, $N_0$ has nonzero determinant.
\end{proof}

An important property of rational Goppa codes is that in general their automorphisms are induced by projective transformations of the projective line. We will make this precise below.

\begin{definition}(See \citep[pg. 53]{Ref_vanLint92introduction}) Let $C$ be a code of length $n$. An \emph{automorphism} of $C$ is a permutation $\pi\in S_n$ which maps every word in $C$ to a word in $C$ by acting on the positions of the codewords. The set of all automorphisms of $C$ forms a group called the \textbf{automorphism group} of $C$.
\end{definition}

In particular, an automorphism of $\mathcal{C}(\gamma_1,\ldots,\gamma_n, r, g, h)$ is a permutation $\pi\in S_n$ such that
\[
\mathcal{C}(\gamma_1,\ldots,\gamma_n, r, g, h) = \mathcal{C}(\gamma_{\pi(1)},\ldots,\gamma_{\pi(n)}, r, g, h)\,.
\] 
\begin{remark}
Suppose $M$ is a generator matrix for an $[n,k]$-linear code $C$ over $\FF_q$. Then a permutation $\pi\in S_n$ is an automorphism of $C$ if and only if there is an invertible matrix $A\in \GL_k(\FF_q)$ such that $AMP_{\pi}=P_{\pi}$, where $P_{\pi}$ denotes the permutation  matrix corresponding to $\pi$. If $M$ has full rank, there is exactly one such matrix $A$ for each automorphism $\pi$ of $C$.
\end{remark}
\begin{theorem}[Stichtenoth \citep{Ref_Stichtenoth90on}] Suppose $1\leq r \leq n-3$. Then the automorphism group of the rational Goppa code $\mathcal{C}(\gamma_1,\ldots,\gamma_n, r, g, h)$ is isomorphic to a subgroup of $\aut(\FF_q(x)/\FF_q)$.
\end{theorem}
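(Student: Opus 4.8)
The plan is to show that every automorphism of the code $C=\mathcal C(\gamma_1,\dots,\gamma_n,r,g,h)$ is induced by an affine substitution $x\mapsto\alpha x+\beta$ with $\alpha\ne 0$, and that the assignment $\pi\mapsto(x\mapsto\alpha x+\beta)$ is an injective homomorphism of $\aut(C)$ into $\aut(\FF_q(x)/\FF_q)$. Set $f_0=g/h$, so that $C=\{(z(\gamma_1),\dots,z(\gamma_n)):z\in\mathcal L\}$ with $\mathcal L=\langle f_0,xf_0,\dots,x^rf_0\rangle$. Since a nonzero element of $\mathcal L$ has the form $p(x)f_0$ with $\deg p\le r<n$, it vanishes at fewer than $n$ of the $\gamma_i$; hence the evaluation map $\mathcal L\to\FF_q^n$ is injective and $\mathcal L\cong C$ as $\FF_q$-vector spaces.

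First I would reduce to the range $2r<n$. If $2r\ge n$, replace $C$ by its dual: the dual of a rational Goppa (equivalently, generalized Reed--Solomon) code on the points $\gamma_1,\dots,\gamma_n$ is again such a code on the same points, of dimension $n-(r+1)$, so $C^\perp=\mathcal C(\gamma_1,\dots,\gamma_n,r',g',h')$ with $r'=n-r-2$. The hypothesis $1\le r\le n-3$ is exactly what makes $1\le r'\le n-3$ hold, and when $2r\ge n$ one has $2r'<n$. Because a coordinate permutation preserves a code if and only if it preserves the dual, $\aut(C)=\aut(C^\perp)$; so we may assume $2r<n$ throughout.

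Now fix $\pi\in\aut(C)$. The coordinate permutation it defines restricts to a linear automorphism of $C$, which via $\mathcal L\cong C$ yields a linear automorphism $\pi^\ast$ of $\mathcal L$ satisfying, with a suitable convention, $(\pi^\ast z)(\gamma_i)=z(\gamma_{\pi(i)})$ for all $i$. Write $\pi^\ast(x^jf_0)=p_j(x)f_0$ with $\deg p_j\le r$; since $\pi^\ast$ is a linear automorphism of $\mathcal L$, the $p_j$ form a basis of the space of polynomials of degree $\le r$, and evaluating at the $\gamma_i$ gives $p_j(\gamma_i)=\gamma_{\pi(i)}^{\,j}$. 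Taking $j=0$: $p_0-1$ vanishes at all $n$ points $\gamma_i$ and $\deg(p_0-1)<n$, so $p_0=1$. Taking $0\le j<r$: $p_1p_j-p_{j+1}$ vanishes at all $\gamma_i$ and has degree $\le 2r<n$, so $p_1p_j=p_{j+1}$; inductively $p_j=p_1^{\,j}$. Since $p_r=p_1^{\,r}$ has degree $\le r$ while $p_1$ is nonconstant (being part of a basis together with $p_0=1$), $p_1$ has degree exactly $1$, say $p_1=\alpha x+\beta$ with $\alpha\ne 0$. Hence $\gamma_{\pi(i)}=\alpha\gamma_i+\beta$ for all $i$, so $\pi$ is the permutation of indices induced by $\mu_\pi\colon x\mapsto\alpha x+\beta\in\aut(\FF_q(x)/\FF_q)$ acting on $\{\gamma_1,\dots,\gamma_n\}$. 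As $n\ge 4$, the Möbius map $\mu_\pi$ is uniquely determined by $\pi$; it is then routine that $\pi\mapsto\mu_\pi$ is a group homomorphism, and it is injective since $\mu_\pi=\mathrm{id}$ forces $\gamma_{\pi(i)}=\gamma_i$ for all $i$, hence $\pi=\mathrm{id}$ (the $\gamma_i$ being distinct). Therefore $\aut(C)$ is isomorphic to its image, a subgroup of $\aut(\FF_q(x)/\FF_q)$.

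The main obstacle is the identity $p_1p_j=p_{j+1}$: it requires $\deg(p_1p_j)<n$ and thus fails when $r$ is large, and overcoming this is precisely what the duality reduction accomplishes — which is also why the symmetric interval $1\le r\le n-3$ (stable under $r\mapsto n-r-2$) is the natural hypothesis. A lesser point requiring care is the bookkeeping of how $\pi$ permutes positions versus how it moves the evaluation points $\gamma_i$, so that $\pi\mapsto\mu_\pi$ comes out as a homomorphism (an anti-homomorphism would be equally harmless for the conclusion, since a group embeds in a group via either).
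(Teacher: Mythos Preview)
The paper does not give its own proof of this statement: it is quoted as Stichtenoth's theorem with a citation and used as a black box. So there is nothing to compare your argument against on the paper's side.

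That said, your argument has a real gap. You write $\pi^\ast(x^jf_0)=p_j(x)f_0$ and then assert that ``evaluating at the $\gamma_i$ gives $p_j(\gamma_i)=\gamma_{\pi(i)}^{\,j}$.'' But from $(\pi^\ast z)(\gamma_i)=z(\gamma_{\pi(i)})$ with $z=x^jf_0$ one gets
\[
p_j(\gamma_i)\,f_0(\gamma_i)=\gamma_{\pi(i)}^{\,j}\,f_0(\gamma_{\pi(i)}),
\qquad\text{i.e.}\qquad
p_j(\gamma_i)=\gamma_{\pi(i)}^{\,j}\cdot\frac{f_0(\gamma_{\pi(i)})}{f_0(\gamma_i)}\,.
\]
The weight ratio $f_0(\gamma_{\pi(i)})/f_0(\gamma_i)$ is \emph{not} $1$ in general (it is $1$ exactly in the Reed--Solomon case $g,h$ constant). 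Consequently $p_0\ne 1$ in general, the identity $p_1p_j=p_{j+1}$ fails, and the conclusion that $p_1$ is affine does not follow. In fact the conclusion ``$\mu_\pi$ is always affine'' is too strong: for genuine rational Goppa codes (nonconstant $f_0$) non-affine M\"obius maps can occur as automorphisms, which is why the theorem lands in $\aut(\FF_q(x)/\FF_q)\cong\PGL_2(\FF_q)$ rather than in the affine group.

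Your duality reduction to $2r<n$ is fine and is the right move. To repair the core step, keep track of $p_0$: the relations you actually obtain are $p_1p_j=p_0\,p_{j+1}$ for $0\le j<r$ (both sides agree at all $\gamma_i$ and have degree $\le 2r<n$), hence $p_1^{\,j}=p_0^{\,j-1}p_j$. From this one argues with degrees and common factors that $p_0$ and $p_1$ are coprime of degrees $0$ or $1$, and that $\gamma_{\pi(i)}=p_1(\gamma_i)/p_0(\gamma_i)$, giving a M\"obius (not merely affine) map. This is essentially Stichtenoth's argument phrased polynomially; alternatively, one works divisor-theoretically as in his paper.
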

\begin{fact}
The automorphism group of the rational function field $\FF_q(x)/\FF_q$ is isomorphic to the projective linear group over $\FF_q$. In notations, $\aut(\FF_q(x)/\FF_q)\simeq \PGL_2(\FF_q)$.
\end{fact}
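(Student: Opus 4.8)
The plan is to produce the classical identification explicitly, by exhibiting a surjective (anti-)homomorphism from $\GL_2(\FF_q)$ onto $\aut(\FF_q(x)/\FF_q)$ whose kernel is the scalar matrices. First I would define, for each $A=\begin{pmatrix} a & b \\ c & d\end{pmatrix}\in\GL_2(\FF_q)$, the map $\phi_A$ to be the unique $\FF_q$-algebra endomorphism of $\FF_q(x)$ with $\phi_A(x)=\tfrac{ax+b}{cx+d}$; this is well defined because $cx+d$ is a nonzero polynomial (otherwise $A$ would be singular) and $x$ is transcendental over $\FF_q$. A one-line substitution shows $\phi_A\circ\phi_B=\phi_{BA}$, so $\phi_A\circ\phi_{A^{-1}}=\phi_I=\mathrm{id}$ and each $\phi_A$ is in fact an automorphism of $\FF_q(x)$ that fixes $\FF_q$ pointwise; composing $A\mapsto\phi_A$ with the inversion anti-automorphism of $\GL_2(\FF_q)$ produces a genuine homomorphism with the same (central) kernel. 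I would then read off that kernel: $\phi_A=\mathrm{id}$ forces $ax+b=x(cx+d)$ in $\FF_q[x]$, hence $c=b=0$ and $a=d$, so the kernel is exactly the central subgroup $\FF_q^{\times}I$ of nonzero scalar matrices.

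The one substantive step --- and the main obstacle --- is surjectivity: I must show every $\sigma\in\aut(\FF_q(x)/\FF_q)$ equals some $\phi_A$. Since $\sigma$ fixes $\FF_q$ and $\FF_q(x)$ is generated over $\FF_q$ by $x$, $\sigma$ is determined by $z:=\sigma(x)$, and because $\sigma$ is onto, $z$ must again generate, i.e.\ $\FF_q(z)=\FF_q(x)$. Writing $z=f(x)/g(x)$ with $f,g\in\FF_q[x]$ coprime, the element $x$ is a root of $f(T)-z\,g(T)$; regarded in $\FF_q[T,z]$ this polynomial is linear in $z$ with coprime coefficients, hence irreducible there, and therefore, by Gauss's lemma, irreducible over $\FF_q(z)$, so that $[\FF_q(x):\FF_q(z)]=\max(\deg f,\deg g)$. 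The constraint $\FF_q(z)=\FF_q(x)$ then forces $\max(\deg f,\deg g)=1$, i.e.\ $z=\tfrac{ax+b}{cx+d}$, and $ad-bc\neq 0$ since otherwise $z$ would be a constant, contradicting that $z$ is transcendental. Thus $\sigma=\phi_A$ with $A=\begin{pmatrix} a & b \\ c & d\end{pmatrix}\in\GL_2(\FF_q)$.

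Combining the two steps, $A\mapsto\phi_A$ induces an isomorphism $\GL_2(\FF_q)/\FF_q^{\times}I\xrightarrow{\sim}\aut(\FF_q(x)/\FF_q)$, and the left-hand group is by definition $\PGL_2(\FF_q)$. The crux that needs care is the degree formula $[\FF_q(x):\FF_q(f/g)]=\max(\deg f,\deg g)$ for coprime $f,g$, since it is exactly what rules out non-M\"obius generators; everything else is routine bookkeeping. The statement itself is classical (see, e.g., Stichtenoth~\citep{Ref_Stichtenoth08algebraic}), and in our context it is what keeps the automorphism group of a rational Goppa code small enough to be used in Corollary~\ref{Cor:McEliece}.
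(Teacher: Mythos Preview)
Your argument is correct and is in fact the standard proof of this classical identification. Note, however, that the paper does not prove this statement at all: it is recorded as a \emph{Fact} immediately after Stichtenoth's theorem and is simply quoted as background (implicitly from~\citep{Ref_Stichtenoth08algebraic}). So there is no ``paper's own proof'' to compare against; you have supplied a self-contained justification where the paper was content to cite.

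On the substance: the anti-homomorphism $A\mapsto\phi_A$, the kernel computation, and the surjectivity step via the degree formula $[\FF_q(x):\FF_q(f/g)]=\max(\deg f,\deg g)$ for coprime $f,g$ are all correct, and you are right that this degree formula (a form of L\"uroth's theorem in the rational case) is the only genuinely nontrivial ingredient. One cosmetic remark: rather than composing with inversion to turn the anti-homomorphism into a homomorphism, many authors simply take $\phi_A(x)=\tfrac{dx-b}{-cx+a}$ (i.e.\ use $A^{-1}$ from the start) or act on the right; either convention yields the same quotient $\GL_2(\FF_q)/\FF_q^{\times}I\simeq\PGL_2(\FF_q)$, so your detour through the anti-homomorphism is harmless.
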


Let $C=\mathcal{C}(\gamma_1,\ldots,\gamma_n, r, g, h)$ be a rational Goppa code. To give an intuition for how  the automorphism group of $C$ is embedded in  $\PGL_2(\FF_q)$, consider a transformation $\sigma \in \PGL_2(\FF_q)$ and view each element $a\in\FF_q$ as the projective line $[a:1]$ (the point at infinity is written as $[1:0]$). Suppose $\sigma$ transforms $[a:1]$ to the projective line $[b:1]$, then we shall write $\sigma a=b$. If $\sigma$ transforms each line $[\gamma_i:1]$ to some line $[\gamma_j:1]$, then $\sigma$ induces another rational Goppa code:
\[
\mathcal{C}(\sigma\gamma_1,\ldots,\sigma\gamma_n, r, g, h)\,.
\]
If, further, $\mathcal{C}(\sigma\gamma_1,\ldots,\sigma\gamma_n, r, g, h)$ equals the original code $C$, then $\sigma$ induces an automorphism of $C$. Stichtenoth's theorem establishes that every automorphism of $C$ is induced by such a transformation in $\PGL_2(\FF_q)$.


\section{Supplemental proofs for $\GL_2(\mathbb{F}_q)$}\label{ApxSec:GL_2}

\subsection{Irreducible representations of $\GL_{2}(\mathbb{F}_q)$}
In this part, we will first present a preliminary background on the structure of $\GL_{2}(\mathbb{F}_q)$ followed by description of its irreps. We refer readers to \citep[\S 5.2]{Ref_Fulton91representation} for the missing technical details in this part.  

Viewing $\GL_{2}(\mathbb{F}_q)$ as the group of all $\mathbb{F}_q$-linear invertible endomorphisms of the quadratic extension $\mathbb{F}_{q^2}$ of $\mathbb{F}_q$, we have a large subgroup of $\GL_{2}(\mathbb{F}_q)$ that is isomorphic to  $\mathbb{F}^*_{q^2}$ via the identification:
\[
\set{f_{\xi} \mid \xi\in\mathbb{F}^*_{q^2}} \simeq \mathbb{F}^*_{q^2}\,, \quad f_{\xi} \leftrightarrow \xi 
\]
where $f_{\xi}:\mathbb{F}_{q^2}\to \mathbb{F}_{q^2}$ is the $\mathbb{F}_q$-linear map given by $f_{\xi}(\nu)=\xi \nu$ for all $\nu\in\mathbb{F}_{q^2}$.

To turn each map $f_{\xi}$ into a matrix form, we fix a basis $\set{1,\gamma}$ of $\mathbb{F}_{q^2}$ as a vector space over $\mathbb{F}_{q}$. For each $\xi\in \mathbb{F}_{q^2}$, writing $\xi=\xi_{x,y}=x+\gamma y$ for some $x,y\in\mathbb{F}_q$, then the map $f_{\xi}$ corresponds to the matrix 
$\begin{pmatrix}
 x     &   \gamma^2 y \\
 y     &   x
\end{pmatrix}$, since $f_{\xi}(1)=x+\gamma y$ and $f_{\xi}(\gamma)=\gamma^2y +\gamma x$. 
Hence, we can rewrite the above identification as
\[
\set{\begin{pmatrix}
 x     &   \gamma^2 y \\
 y     &   x
\end{pmatrix}\mid x,y \in \mathbb{F}_{q}, x\neq 0 ~\text{or}~ y\neq 0} 
\simeq \mathbb{F}^*_{q^2}\,, 
\quad
\begin{pmatrix}
 x     &   \gamma^2 y \\
 y     &   x
\end{pmatrix} \leftrightarrow \xi_{x,y} = x+\gamma y\,. 
\] 

For example, if $q$ is odd, choose a generator $\epsilon$ of $\mathbb{F}^*_q$, then $\epsilon$ must be non-square in $\mathbb{F}_q$, which implies that $\set{1,\sqrt{\epsilon}}$ form a basis of $\mathbb{F}_{q^2}$ as a vector space over $\mathbb{F}_q$. In such a case, we can define $\xi_{x,y}=x+y\sqrt{\epsilon}$.

\paragraph{Conjugacy classes.} Group $\GL_{2}(\mathbb{F}_q)$ has four types of conjugacy classes in Table \ref{Tb:ConjugacyGL_2q}, with representatives described as follows:
\[
a_x= \begin{pmatrix}
 x     &   0 \\
 0     &   x
\end{pmatrix} 
\quad
b_x= \begin{pmatrix}
 x     &   1 \\
 0     &   x
\end{pmatrix} 
\quad
c_{x,y}= \begin{pmatrix}
 x     &   0 \\
 0     &   y
\end{pmatrix} 
\quad
d_{x,y}= \begin{pmatrix}
 x     &   \gamma^2 y \\
 y     &   x
\end{pmatrix} 
\]

\begin{table}[h]
  \centering 
\begin{tabular}{l|l|l|l|l}
class &  $[a_x]$ & $[b_x]$ & $[c_{x,y}]=[c_{y,x}]$ & $[d_{x,y}]=[d_{x,-y}]$ \\
~     &  $x\in\mathbb{F}_q^*$ &  $x\in\mathbb{F}_q^*$ & $x,y\in\mathbb{F}_q^*, x\neq y$  &   $x\in\mathbb{F}_q, y\in\mathbb{F}_q^*$ \\
\hline
class size       &  1 & $q^2-1$  & $q^2+q$ & $q^2-q$ \\
\hline
no. of classes   &  $q-1$ & $q-1$ & $\frac{(q-1)(q-2)}{2}$ & $\frac{q(q-1)}{2}$\\
\end{tabular}
\begin{quote}  
  \caption{Conjugacy classes of $\GL_{2}(\mathbb{F}_q)$, where $[g]$ denotes the class of representative $g$. }
\end{quote}
  \label{Tb:ConjugacyGL_2q}
\end{table}
There are $q^2-1$ conjugacy classes, hence there are exactly $q^2-1$ irreps of $\GL_{2}(\mathbb{F}_q)$.
We shall briefly describe below how to construct all those representations.

\paragraph{Linear representations.}

For each character $\alpha:\mathbb{F}_q^*\to\coms^*$ of the cyclic group $\mathbb{F}_q^*$, we have a one-dimensional representation $U_{\alpha}$ of  $\GL_{2}(\mathbb{F}_q)$ defined by:
\[
U_{\alpha}(g) = \alpha(\det(g)) \quad\quad \forall g\in\GL(2,q)\,.
\]

To compute $U_{\alpha}(d_{x,y})$, we shall use the following fact:
\[
\det \begin{pmatrix}
  x    &  \gamma^2 y   \\
  y    &  x
\end{pmatrix} =\mathrm{Norm}_{\mathbb{F}_{q^2}/\mathbb{F}_q}(\xi_{x,y}) = \xi_{x,y}\cdot \xi_{x,y}^q = \xi_{x,y}^{q+1}\,.
\]
Recall that there are $q-1$ characters of $\mathbb{F}_q^*=\tup{\epsilon}$ corresponding to $q-1$ places where the generator $\epsilon$ can be sent to. The linear representation $U_{\alpha_0}$, where $\alpha_0$ is the character sending $\epsilon$ to 1, is indeed the trivial representation, denoted $U$. 

\paragraph{Irreducible representations by action on $\mathbb{P}^1(\mathbb{F}_q)$.}
$\GL_2(\mathbb{F}_q)$ acts transitively on the projective line $\mathbb{P}^1(\mathbb{F}_q)$ in the natural way:
\[
\begin{pmatrix}
  a    &  b   \\
  c    &  d
\end{pmatrix} \cdot [x:y] = 
\begin{pmatrix}
  a    &  b   \\
  c    &  d
\end{pmatrix}\begin{bmatrix}
  x     \\
  y    
\end{bmatrix} = [ax+by:cx+dy]\,,
\]
in which the stabilizer of the infinite point $[1:0]$ is the Borel subgroup $B$:
\[
B=\set{
\begin{pmatrix}
  a    &  b   \\
  0    &  d
\end{pmatrix} \mid a,d\in \mathbb{F}_q^*,~ b\in \mathbb{F}_q }\,.
\]

The permutation representation of $\GL_{2}(\mathbb{F}_q)$ given by this action on $\mathbb{P}^1(\mathbb{F}_q)$ has dimension $q+1$ and  decomposes into the trivial representation $U$ and a $q$-dimensional representation $V$. The character of $V$ is given as follows:
\[
\chi_V(a_x) = q \quad \chi_V(b_x)=0 \quad \chi_V(c_{x,y})=1 \quad \chi_V(d_{x,y})=-1\,.
\]
By checking $\tup{\chi_V,\chi_V}=1$, we see that $V$ is irreducible. 
Hence, for each of the $q-1$ characters $\alpha$ of $\mathbb{F}^*_q$, we have a $q$-dimensional irrep $V_{\alpha}=V\otimes U_{\alpha}$. Note that $V=V\otimes U$.

\paragraph{Irreducible representations induced from Borel subgroup $B$.} For each pair of characters $\alpha, \beta$ of $\mathbb{F}_q^*$, there is a character of the subgroup $B$:
\[
\phi_{\alpha,\beta} : B\to \coms^* 
\quad\text{by }
\begin{pmatrix}
  a    &  b   \\
  0    &  d
\end{pmatrix} \mapsto \alpha(a)\beta(d)\,.
\]
In other words, $\phi_{\alpha,\beta}$ is a one-dimensional representation of subgroup $B$. Let $W_{\alpha,\beta}$ be the representation of $\GL_{2}(\mathbb{F}_q)$ induced by $\phi_{\alpha,\beta}$. By computing characters, we have 
\begin{itemize}
\item $W_{\alpha,\beta}=W_{\beta,\alpha}$, 
\item $W_{\alpha,\alpha}=U_{\alpha}\oplus V_{\alpha}$, and 
\item $W_{\alpha,\beta}$ is irreducible for $\alpha\neq \beta$. Each of these representations has dimension equal the index of $B$ in $\GL_{2}(\mathbb{F}_q)$, i.e., $[\GL(2,q): B]=q+1$.
\end{itemize}

There are $((q-1)^2-(q-1))/2=(q-1)(q-2)/2$ distinct irreps of this type.

\paragraph{Irreducible representations by characters of $\mathbb{F}_{q^2}^*$.}

Let $\varphi: \mathbb{F}_{q^2}^*\to \coms^*$ be a character of the cyclic group $\mathbb{F}_{q^2}^*$. Since $\mathbb{F}_{q^2}^*$ can be viewed as a subgroup of $\GL_{2}(\mathbb{F}_q)$, we have the induced representation $\Ind \varphi$, which  is not irreducible. However, it gives us a $(q-1)$-dimensional irrep with character given by
\[
\chi_{\varphi} = \chi_{V\otimes W_{\alpha,1}}-\chi_{W_{\alpha,1}}-\chi_{\Ind\varphi}\quad\text{if $\varphi|_{\mathbb{F}^*_q}=\alpha$.}
\]
Note that $\Ind\varphi \simeq \Ind\varphi^q$, thus $X_{\varphi}\simeq X_{\varphi^q}$. So, the characters $\varphi$ of $\mathbb{F}_{q^2}^*$ with $\varphi\neq \varphi^q$ give a rise to the $\frac{1}{2}q(q-1)$ remaining irreps of $\GL_{2}(\mathbb{F}_q)$.

A summary of all irreducible characters of $\GL_{2}(\mathbb{F}_q)$ is given in Table \ref{Table:CharacterGL_2_q} below.

\begin{table}[h]
  \centering 
\begin{tabular}{l|l|l|l|l|l}
$\rho$       & $d_{\rho}$&  $\chi_{\rho}(a_x)$ & $\chi_{\rho}(b_x)$ & $\chi_{\rho}(c_{x,y})$ & $\chi_{\rho}(d_{x,y})$ \\
\hline
$U_{\alpha}$ & 1 & $\alpha(x^2)$ & $\alpha(x^2)$ & $\alpha(xy)$ & $\alpha(\xi_{x,y}^{q+1})$\\
\hline
$V_{\alpha}$ & $q$ & $q\alpha(x^2)$ & 0 & $\alpha(xy)$ & $-\alpha(\xi_{x,y}^{q+1})$\\
\hline
$W_{\alpha,\beta}$ ($\alpha\neq\beta$)& $q+1$& $(q+1)\alpha(x)\beta(x)$ & $\alpha(x)\beta(x)$ & $\alpha(x)\beta(y)+\alpha(y)\beta(x)$ & 0 \\
\hline
$X_{\varphi}$ & $q-1$ & $(q-1)\varphi(x)$ & $-\varphi(x)$ & 0 & $-(\varphi(\xi_{x,y})+\varphi(\xi_{x,y}^q))$\\
\end{tabular}
\begin{quote}  
  \caption{Character table of $\GL_{2}(\mathbb{F}_q)$, where $\alpha, \beta$ are characters of $\mathbb{F}_q^*$, and $\varphi$ is a character of $\mathbb{F}_{q^2}^*$ with $\varphi^q\neq \varphi$, and $d_{\rho}=\chi_{\rho}(a_1)$ is the dimension of $\rho$.}
\end{quote}  
  \label{Table:CharacterGL_2_q}
\end{table}

\subsection{Proof of Lemma~\ref{Lemma:LinearRep_0}}

In the remaining of this section, we devote to prove Lemma \ref{Lemma:LinearRep_0}, which states that there are at most two linear representations appearing in the decomposition of $\rho\otimes \rho^*$, for any irrep $\rho$ of $\GL_{2}(\mathbb{F}_q)$. Obviously, if $\rho$ is linear then $\rho\otimes \rho^*$ is the trivial representation. Therefore, we shall only consider the cases where $\rho$ is non-linear. 

Recall that the multiplicity of $U_{\alpha}$ in $\rho\otimes \rho^*$ is given by 
\[
\tup{\chi_{\rho\otimes \rho^*},\chi_{U_{\alpha}}} = \frac{1}{|G|}\sum_{g\in G} |\chi_{\rho}(g)|^2\chi_{U_{\alpha}}(g)= \frac{1}{|G|}(A(\rho,\alpha)+B(\rho,\alpha)+C(\rho,\alpha)+D(\rho,\alpha))\,,
\]
where $A(\rho,\alpha), B(\rho,\alpha), C(\rho,\alpha), D(\rho,\alpha))$ are the sum of $|\chi_{\rho}(g)|^2\chi_{U_{\alpha}}(g)$ over all element $g$ in the conjugacy classes with representatives of the form $a_x, b_x, c_{x,y}$ and $d_{x,y}$, respectively. That is, from the description of conjugacy classes in Table~\ref{Tb:ConjugacyGL_2q},
\begin{align*}\label{}
A(\rho,\alpha)&=\sum_{x\in\mathbb{F}_q^*} |\chi_{\rho}(a_x)|^2\chi_{U_{\alpha}}(a_x)\\
B(\rho,\alpha)&=(q^2-1)\sum_{x\in\mathbb{F}_q^*} |\chi_{\rho}(b_x)|^2\chi_{U_{\alpha}}(b_x)\\
C(\rho,\alpha)&=\frac{1}{2}(q^2+q)\sum_{x,y\in\mathbb{F}_q^*, x\neq y} |\chi_{\rho}(c_{x,y})|^2\chi_{U_{\alpha}}(c_{x,y})\\
D(\rho,\alpha)&=\frac{1}{2}(q^2-q)\sum_{x,y\in\mathbb{F}_q, y\neq 0} |\chi_{\rho}(d_{x,y})|^2\chi_{U_{\alpha}}(d_{x,y})\,.
\end{align*}

\remove{
Then, from the description of conjugacy classes of $\GL_{2}(\mathbb{F}_q)$ in Table \ref{}, we have
\[
\tup{\chi_{\rho\otimes \rho^*},\chi_{U_{\alpha}}} =\frac{1}{|G|}\left(A(\rho,\alpha)+(q^2-1)B(\rho,\alpha)+\frac{1}{2}(q^2+q)C(\rho,\alpha)+\frac{1}{2}(q^2-q)D(\rho,\alpha)\right) 
\]
}
Our goal below will be to show that $\tup{\chi_{\rho\otimes \rho^*},\chi_{U_{\alpha}}}=0$ for all but two linear representations $U_{\alpha}$ and for all non-linear irrep $\rho$ of $\GL_{2}(\mathbb{F}_q)$. We begin with the following lemma.
  
\begin{lemma}\label{Lemma:Character_0}
Let $F$ be a finite field and $\phi:F^{\times}\to\coms^*$ be a non-trivial character of the cyclic group $F^{\times}$, i.e., $\phi(x)\neq 1$ for some $x$. 
Then $\sum_{x\in F^{\times}} \phi(x)=0$.
\end{lemma}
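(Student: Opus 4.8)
The plan is to use the standard orthogonality-style trick for characters of a finite abelian group. Since $\phi$ is non-trivial, I would first fix an element $a\in F^{\times}$ with $\phi(a)\neq 1$. Write $S=\sum_{x\in F^{\times}}\phi(x)$ for the sum in question.

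Next I would observe that multiplication by $a$ is a bijection of $F^{\times}$, so reindexing the sum by $x\mapsto ax$ gives
\[
\phi(a)\,S=\sum_{x\in F^{\times}}\phi(a)\phi(x)=\sum_{x\in F^{\times}}\phi(ax)=\sum_{y\in F^{\times}}\phi(y)=S\,,
\]
where the second equality uses that $\phi$ is a homomorphism. Hence $(\phi(a)-1)S=0$, and since $\phi(a)\neq 1$ by the choice of $a$, we conclude $S=0$. There is essentially no obstacle here; the only point to be careful about is invoking that $F^{\times}$ is a group under multiplication (which is where finiteness of $F$, or at least that $F$ is a field, is used) so that the substitution $x\mapsto ax$ is indeed a permutation of the index set. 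This lemma will then feed into the character computations $A(\rho,\alpha),\dots,D(\rho,\alpha)$ used to bound the multiplicity of each linear representation $U_\alpha$ in $\rho\otimes\rho^*$.
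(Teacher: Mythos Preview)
Your argument is correct. It is the standard shift trick: pick $a$ with $\phi(a)\neq 1$, use that $x\mapsto ax$ permutes $F^\times$, and conclude $(\phi(a)-1)S=0$.

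The paper argues differently. It uses explicitly that $F^\times$ is cyclic: letting $\tau$ be a generator and $n=|F^\times|$, one has $\phi(\tau)^n=1$ but $\phi(\tau)\neq 1$ (else $\phi$ would be trivial), and then
\[
\sum_{x\in F^\times}\phi(x)=\sum_{k=0}^{n-1}\phi(\tau)^k=\frac{\phi(\tau)^n-1}{\phi(\tau)-1}=0\,.
\]
So the paper's proof is a geometric-series computation that relies on cyclicity, whereas your proof is the translation-invariance argument that works for any finite group $G$ and any nontrivial homomorphism $G\to\coms^*$. Your version is slightly more general and arguably cleaner; the paper's version has the minor advantage of making the role of the hypothesis ``cyclic group $F^\times$'' in the statement visible, though as your proof shows that hypothesis is not actually needed.
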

\begin{proof}
Let $n$ be the order of $F^{\times}$ and let $\tau$ be a generator of $F^{\times}$. Then $\tau^n=1$ which implies $\phi(\tau)^n=1$. Since $\phi$ is non-trivial, we must have $\phi(\tau)\neq 1$. Hence,
\[
\sum_{x\in F^{\times}} \phi(x) = \sum_{k=0}^{n-1} \phi(\tau^k) =\sum_{k=0}^{n-1} \phi(\tau)^k = \frac{\phi(\tau)^n-1}{\phi(\tau)-1} =0\,.
\]
\end{proof}

Note that for any character $\alpha$ of $\mathbb{F}^*_q$, the map $\alpha^2:\mathbb{F}^*_q\to\coms^*$ defined by $\alpha^2(x)= \alpha(x^2)$ is also a character of $\mathbb{F}^*_q$. 
Hence, we have the following direct corollaries of Lemma \ref{Lemma:Character_0}.

\begin{corollary}\label{Cor:Character_0}
Let $\alpha$ be a character of $\mathbb{F}^*_q$ such that $\alpha^2$ is non-trivial. Then $\sum_{x\in\mathbb{F}^*_q}\alpha(x^2)=0$.
\end{corollary}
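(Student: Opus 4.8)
The plan is to deduce this immediately from Lemma~\ref{Lemma:Character_0} by taking $\phi=\alpha^2$. First I would observe that the map $\alpha^2\colon\mathbb{F}_q^*\to\coms^*$ given by $\alpha^2(x)=\alpha(x^2)$ is indeed a character of the cyclic group $\mathbb{F}_q^*$: the squaring map $x\mapsto x^2$ is a group endomorphism of the abelian group $\mathbb{F}_q^*$, and $\alpha$ is a homomorphism, so their composition $\alpha^2$ is a homomorphism into $\coms^*$. This fact is already recorded in the sentence preceding the corollary, so it can simply be invoked.

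Next, the hypothesis is precisely that $\alpha^2$ is a \emph{non-trivial} character, i.e. $\alpha^2(x)=\alpha(x^2)\neq 1$ for some $x\in\mathbb{F}_q^*$. Hence Lemma~\ref{Lemma:Character_0}, applied to the finite field $F=\mathbb{F}_q$ and the non-trivial character $\phi=\alpha^2$, yields
\[
\sum_{x\in\mathbb{F}_q^*}\alpha(x^2)=\sum_{x\in\mathbb{F}_q^*}\alpha^2(x)=0\,,
\]
which is exactly the claimed identity.

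There is essentially no obstacle here; the only thing to be careful about is making explicit that $\alpha^2$ is a genuine character (so that Lemma~\ref{Lemma:Character_0} is applicable) rather than just a function, and that ``$\alpha^2$ non-trivial'' in the hypothesis matches the ``$\phi$ non-trivial'' hypothesis of the lemma. Both are immediate, so the proof is a one-line application of the preceding lemma.
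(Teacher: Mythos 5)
Your proof is correct and matches the paper's approach exactly: the paper also notes that $\alpha^2$ is a character of $\mathbb{F}_q^*$ and then presents the corollary as a direct consequence of Lemma~\ref{Lemma:Character_0}. Nothing further is needed.
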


\begin{corollary}\label{Cor:A_B_0}
Let $\rho$ be an irrep of $\GL_{2}(\mathbb{F}_q)$ and let $\alpha$ be a character of $\mathbb{F}^*_q$ such that $\alpha^2$ is non-trivial. Then we always have $A(\rho,\alpha)=B(\rho,\alpha)=0$.
\end{corollary}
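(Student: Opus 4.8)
The plan is to show that the factor $|\chi_\rho(g)|^2$ is constant as $g$ runs over each of the conjugacy classes $[a_x]$ and $[b_x]$; then $A(\rho,\alpha)$ and $B(\rho,\alpha)$ each collapse to a constant multiple of $\sum_{x\in\FF_q^*}\alpha(x^2)$, which vanishes by Corollary~\ref{Cor:Character_0} since $\alpha^2$ is nontrivial.

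First I would handle the scalar classes. Because $a_x = xI_2$ is central in $\GL_2(\FF_q)$, Schur's lemma forces $\rho(a_x) = \omega_\rho(x)\,I_{d_\rho}$ for some root of unity $\omega_\rho(x)$, so $|\chi_\rho(a_x)|^2 = d_\rho^2$ independently of $x$. Since $\chi_{U_\alpha}(a_x) = \alpha(\det a_x) = \alpha(x^2)$, this gives $A(\rho,\alpha) = d_\rho^2\sum_{x\in\FF_q^*}\alpha(x^2) = 0$ by Corollary~\ref{Cor:Character_0}.

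Next I would treat $b_x$. Writing $b_x = (xI_2)\,u_{x^{-1}}$, where $u_t := \left(\begin{smallmatrix}1 & t\\ 0 & 1\end{smallmatrix}\right)$, and using that $xI_2$ is central, we get $\rho(b_x) = \omega_\rho(x)\,\rho(u_{x^{-1}})$ and hence $|\chi_\rho(b_x)|^2 = |\chi_\rho(u_{x^{-1}})|^2$. All nonidentity unipotents $u_t$ (with $t\neq 0$) form a single conjugacy class---conjugating $u_1$ by $\mathrm{diag}(t,1)$ produces $u_t$---so $\chi_\rho(u_{x^{-1}}) = \chi_\rho(u_1)$ does not depend on $x$. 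Together with $\chi_{U_\alpha}(b_x) = \alpha(x^2)$ this yields $B(\rho,\alpha) = (q^2-1)\,|\chi_\rho(u_1)|^2\sum_{x\in\FF_q^*}\alpha(x^2) = 0$. (Both constancy claims can instead be read off directly from Table~\ref{Table:CharacterGL_2_q}, where for each of the families $U_\beta, V_\beta, W_{\beta,\beta'}, X_\varphi$ the quantities $|\chi_\rho(a_x)|$ and $|\chi_\rho(b_x)|$ are visibly independent of $x$.)

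I do not anticipate a real obstacle. The only substantive point is the constancy of $x\mapsto |\chi_\rho(a_x)|^2$ and $x\mapsto |\chi_\rho(b_x)|^2$, which follows immediately from the centrality of scalar matrices and the conjugacy of all nonidentity transvections; the remainder is just the character-sum identity of Lemma~\ref{Lemma:Character_0} and its corollaries applied to the character $\alpha^2$.
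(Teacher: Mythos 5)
Your proof is correct and follows the same route as the paper: observe that $|\chi_\rho(a_x)|$ and $|\chi_\rho(b_x)|$ are independent of $x$, factor out the constant, and apply $\sum_{x\in\FF_q^*}\alpha(x^2)=0$ from Corollary~\ref{Cor:Character_0}. The paper simply reads the constancy off the character table, whereas you additionally give the conceptual justification (Schur's lemma for the scalars, conjugacy of nonidentity transvections for $b_x$), but the structure of the argument is the same.
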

\begin{proof}
Observe that $|\chi_{\rho}(a_x)|$ and $|\chi_{\rho}(b_x)|$ do not depend on $x$, and $\chi_{U_{\alpha}}(a_x)=\chi_{U_{\alpha}}(b_x)=\alpha(x^2)$. Hence, to show  $A(\rho,\alpha)=B(\rho,\alpha)=0$, it suffices to use the fact that $\sum_{x\in\mathbb{F}^*_q}\alpha(x^2)=0$. 
\end{proof}

\begin{remark}
There are at most two characters $\alpha$ of $\mathbb{F}^*_q$ such that $\alpha^2$ is trivial. They are the trivial one, and the one that maps $\epsilon \to \omega^{\frac{q-1}{2}}$ if $q$ is odd, where $\omega=e^{\frac{2\pi i}{q-1}}$ is a primitive $(q-1)^{\rm th}$ root of unity, and $\epsilon$ is a chosen generator of the cyclic group $\mathbb{F}^*_q$. To see this, suppose  $\alpha(\epsilon)=\omega^k$, for some $k\in\set{0,1,\ldots, q-2}$. If $\alpha(\epsilon)^2=1$, then $\omega^{2k}=1$, which implies $q-1\mid 2k$ because $\omega$ has order $q-1$. Hence $2k\in \set{0, q-1}$.
\end{remark}

With this remark, Lemma \ref{Lemma:LinearRep_0} will immediately follows Lemma \ref{Lemma:GL_2_q_LinearRep_0} below.

\begin{lemma}\label{Lemma:GL_2_q_LinearRep_0} Let $\rho$ be a non-linear irrep of $\GL_{2}(\mathbb{F}_q)$ and let $\alpha$ be a character of $\mathbb{F}^*_q$ such that  $\alpha^2$ is trivial. Then $U_{\alpha}$ does not appear in the decomposition of $\rho\otimes \rho^*$.
\end{lemma}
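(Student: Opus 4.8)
The plan is to complete the evaluation of $\langle\chi_{\rho\otimes\rho^*},\chi_{U_\alpha}\rangle=\tfrac1{|G|}\bigl(A(\rho,\alpha)+B(\rho,\alpha)+C(\rho,\alpha)+D(\rho,\alpha)\bigr)$ set up just above the lemma. I will establish the statement in its correctly-oriented form: \emph{$U_\alpha$ can occur in $\rho\otimes\rho^*$ only when $\alpha^2$ is trivial}, equivalently \emph{if $\alpha^2$ is non-trivial then $U_\alpha$ does not occur in $\rho\otimes\rho^*$}. (As literally printed, with the implication reversed, it cannot hold: for $\alpha=\mathbf 1$ the representation $U_\alpha$ is trivial and occurs in $\rho\otimes\rho^*$ with multiplicity $\langle\rho,\rho\rangle=1$; moreover ``$\alpha^2$ non-trivial'' is exactly the hypothesis already treated by Corollary~\ref{Cor:A_B_0} and the one under which Lemma~\ref{Lemma:LinearRep_0} follows from this lemma together with the preceding Remark.) Corollary~\ref{Cor:A_B_0} already supplies $A(\rho,\alpha)=B(\rho,\alpha)=0$ whenever $\alpha^2$ is non-trivial, so the task reduces to showing $C(\rho,\alpha)=D(\rho,\alpha)=0$ in that case.

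The engine for both is that $|\chi_\rho|$ is unchanged by left-multiplication by a scalar matrix --- Schur's lemma for the centre: $\rho(tI)=\omega_\rho(t)\mathbf 1$ with $|\omega_\rho(t)|=1$, and $c_{tx,ty}=(tI)\,c_{x,y}$, $d_{tx,ty}=(tI)\,d_{x,y}$ for $t\in\mathbb{F}_q^*$, so $|\chi_\rho(c_{tx,ty})|=|\chi_\rho(c_{x,y})|$ and likewise for the $d$'s. For $C$ I would reindex the sum over ordered pairs $(x,y)\in(\mathbb{F}_q^*)^2$, $x\ne y$, by $(x,u)$ with $u=y/x\in\mathbb{F}_q^*\setminus\{1\}$; using $\chi_{U_\alpha}(c_{x,y})=\alpha(xy)=\alpha(u)\,\alpha(x^2)$ (Table~\ref{Table:CharacterGL_2_q}) and the $x$-independence of $|\chi_\rho(c_{x,ux})|^2=|\chi_\rho(c_{1,u})|^2$, the double sum factors as $\bigl(\sum_{u\ne1}|\chi_\rho(c_{1,u})|^2\alpha(u)\bigr)\bigl(\sum_{x\in\mathbb{F}_q^*}\alpha(x^2)\bigr)$, and the second factor is $0$ by Corollary~\ref{Cor:Character_0}. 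For $D$ I would reindex the sum over $(x,y)\in\mathbb{F}_q\times\mathbb{F}_q^*$ by $\xi=\xi_{x,y}=x+\gamma y$, which runs over $\mathbb{F}_{q^2}\setminus\mathbb{F}_q$, and then group these $\xi$ by the free $\mathbb{F}_q^*$-scaling action, whose orbits have size $q-1$; on a scaling orbit $|\chi_\rho(d_{x,y})|^2$ is constant while $\chi_{U_\alpha}(d_{x,y})=\alpha(\xi^{q+1})$ and $(t\xi)^{q+1}=t^2\xi^{q+1}$ for $t\in\mathbb{F}_q^*$, so the sum again factors out the vanishing factor $\sum_{t\in\mathbb{F}_q^*}\alpha(t^2)=0$. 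Hence $C(\rho,\alpha)=D(\rho,\alpha)=0$, the multiplicity of $U_\alpha$ in $\rho\otimes\rho^*$ is $0$, and combined with the Remark this yields Lemma~\ref{Lemma:LinearRep_0}.

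The part that needs care is the combinatorial bookkeeping of these reindexings: that ordered pairs $(x,y)$ with $x\ne y$ biject with $(x,u)$, $u\ne1$; that $(x,y)\mapsto x+\gamma y$ biject $\mathbb{F}_q\times\mathbb{F}_q^*$ with $\mathbb{F}_{q^2}\setminus\mathbb{F}_q$; that $\mathbb{F}_q^*$ acts freely on the latter with $q$ orbits of size $q-1$; and that the symmetrizations hidden in the factor $\tfrac12$ in $C$ and $D$ (from $[c_{x,y}]=[c_{y,x}]$ and $[d_{x,y}]=[d_{x,-y}]$, see Table~\ref{Tb:ConjugacyGL_2q}) are harmless, since $|\chi_\rho|$ and $\chi_{U_\alpha}$ are invariant under $x\leftrightarrow y$ and under $y\mapsto-y$ respectively. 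The appealing feature of this route is that it is entirely uniform in the non-linear irrep $\rho$ --- no appeal to the list $V_\beta,W_{\beta,\gamma},X_\varphi$ is needed, the only representation-theoretic input being that $\rho(tI)$ is scalar. If one instead wanted a family-by-family argument from Table~\ref{Table:CharacterGL_2_q}, the cuspidal characters $X_\varphi$ would be the delicate case, since their values on $[d_{x,y}]$ involve a character $\varphi$ of $\mathbb{F}_{q^2}^*$ and one must use the invariance of $\varphi(\xi^{q+1})$ under $\xi\leftrightarrow\xi^q$ together with surjectivity of the norm $\mathbb{F}_{q^2}^*\to\mathbb{F}_q^*$; the uniform argument sidesteps all of this.
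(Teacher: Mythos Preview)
Your reading of the statement is right: as printed, the hypothesis ``$\alpha^2$ is trivial'' is a typo for ``$\alpha^2$ is non-trivial''. The paper's own proof invokes Corollary~\ref{Cor:A_B_0}, which requires $\alpha^2$ non-trivial, and repeatedly uses $\sum_x\alpha(x^2)=0$; and as you observe, the trivial character always occurs in $\rho\otimes\rho^*$. With the corrected hypothesis your argument is correct.

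Your route, however, is genuinely different from the paper's. The paper proceeds case by case over the three non-linear families $V_\beta$, $W_{\beta,\beta'}$, $X_\varphi$ from Table~\ref{Table:CharacterGL_2_q}, computing $|\chi_\rho(c_{x,y})|^2$ and $|\chi_\rho(d_{x,y})|^2$ explicitly in each family and then invoking $\sum_x\alpha(x^2)=0$; for the cuspidal family $X_\varphi$ this means expanding $|\varphi(\xi)+\varphi(\xi^q)|^2=2+\varphi(\xi^{q-1})+\varphi(\xi^{1-q})$ and checking that each of the resulting characters of $\mathbb{F}_{q^2}^*$ is non-trivial. Your argument is uniform in $\rho$: the only representation-theoretic input is Schur's lemma on the center, giving $|\chi_\rho(tg)|=|\chi_\rho(g)|$ for scalar $t$. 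For $C$ this is essentially the same substitution $u=y/x$ the paper uses in the $W_{\beta,\beta'}$ case, except you get the $u$-dependence of $|\chi_\rho(c_{x,ux})|^2$ for free rather than reading it off the table. For $D$ your orbit decomposition under the free $\mathbb{F}_q^*$-scaling on $\mathbb{F}_{q^2}\setminus\mathbb{F}_q$, combined with $(t\xi)^{q+1}=t^2\xi^{q+1}$ for $t\in\mathbb{F}_q^*$, is cleaner than the paper's separate treatments of $V_\beta$ and $X_\varphi$ and sidesteps all the non-triviality checks for characters of $\mathbb{F}_{q^2}^*$. What the paper's route buys is that it stays entirely within the character table with no abstract input; what yours buys is brevity and an argument that visibly depends only on the center being $\mathbb{F}_q^*$, not on the detailed classification of irreps.
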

\begin{proof} We will prove case by case of $\rho$ that $C(\rho,\alpha)=D(\rho,\alpha)=0$, which, together with Corollary \ref{Cor:A_B_0}, will complete the proof for the lemma.
 
\paragraph{Case $\rho=W_{\beta,\beta'}$.}
For this case, as $|\chi_{W_{\beta,\beta'}}(d_{x,y})|=0$, we only need to show $C(W_{\beta,\beta'},\alpha)=0$.
Considering $x,y\in\mathbb{F}^*_q$ with $x\neq y$ and letting $z=x^{-1}y\neq 1$, we have
\[
\begin{split}
|\chi_{W_{\beta,\beta'}}(c_{x,y})|^2
&= [\beta(x)\beta'(y)+\beta(y)\beta'(x)][\beta(x^{-1})\beta'(y^{-1})+\beta(y^{-1})\beta'(x^{-1})]\\
&=2+\beta(xy^{-1})\beta'(yx^{-1})+\beta(yx^{-1})\beta'(xy^{-1})\\
&=2+\beta(z^{-1})\beta'(z)+\beta(z)\beta'(z^{-1})\\
\end{split}
\]
This means $|\chi_{W_{\beta,\beta'}}(c_{x,y})|^2$ only depends on $z=x^{-1}y$.  
Now let $\gamma(z)=|\chi_{W_{\beta,\beta'}}(c_{x,y})|^2\alpha(z)$, we have
\[
\begin{split}
|\chi_{W_{\beta,\beta'}}(c_{x,y})|^2\chi_{U_{\alpha}}(c_{x,y}) =|\chi_{W_{\beta,\beta'}}(c_{x,y})|^2\alpha(x^2z)=\gamma(z)\alpha(x^2).
\end{split}
\]
Hence, 
\[
\begin{split}
\sum_{x,y\in\mathbb{F}_q^*, x\neq y} |\chi_{\rho}(c_{x,y})|^2\chi_{U_{\alpha}}(c_{x,y})
&=\sum_{x,z\in\mathbb{F}_q^*, z\neq 1}\gamma(z)\alpha(x^2)\\
&=\left(\sum_{x\in\mathbb{F}_q^*}\alpha(x^2)\right)\left(\sum_{z\in\mathbb{F}_q^*, z\neq 1}\gamma(z)\right)=0\\
\end{split}
\]
by Corollary \ref{Cor:Character_0}, 
completing the proof for the case $\rho=W_{\beta,\beta'}$.

\paragraph{Case $\rho=V_{\beta}$.}
Since $|\chi_{V_{\beta}}(c_{x,y})|=1$ and $\chi_{U_{\alpha}}(c_{x,y})=\alpha(xy)=\alpha(x)\alpha(y)$, 
\[
\sum_{x,y\in\mathbb{F}_q^*, x\neq y} |\chi_{V_{\beta}}(c_{x,y})|^2\chi_{U_{\alpha}}(c_{x,y})
=\sum_{x,y\in\mathbb{F}_q^*, x\neq y}\alpha(x)\alpha(y) =\left(\sum_{x\in\mathbb{F}_q^*}\alpha(x)\right)^2 - \sum_{x\in\mathbb{F}_q^*}\alpha(x^2)=0
\]
by Lemma \ref{Lemma:Character_0} and Corollary \ref{Cor:Character_0}. This shows $C(V_{\beta},\alpha)=0$.

Now we are going to show that $D(V_{\beta},\alpha)=0$, or equivalently, $\sum_{x,y\in\mathbb{F}_q, y\neq 0} \alpha(\xi_{x,y}^{q+1})=0$. We have
\[
\sum_{\xi\in \mathbb{F}^*_{q^2}}\alpha(\xi^{q+1}) = \sum_{x,y\in\mathbb{F}_q, y\neq 0} \alpha(\xi_{x,y}^{q+1}) + \sum_{x\in\mathbb{F}^*_q} \alpha(\xi_{x,0}^{q+1})=\sum_{x,y\in\mathbb{F}_q, y\neq 0} \alpha(\xi_{x,y}^{q+1})\,.
\]
where in the last equality, we apply Corollary \ref{Cor:Character_0} and the fact that $\xi_{x,0}^{q+1}=x^{q+1}=x^2$ for all $x\in \mathbb{F}^*_q$.

Consider the map $\phi: \mathbb{F}^*_{q^2}\to\coms^*$ given by $\phi(\xi)=\alpha(\xi^{q+1})$. Clearly, $\phi$ is a character of $\mathbb{F}^*_{q^2}$. Since $\alpha^2$ is non-trivial and $\alpha^2(x)=\alpha(x^2)=\alpha(x^{q+1})=\phi(x)$ for all $x\in\mathbb{F}^*_q$, the map $\phi$ is also non-trivial. By Lemma \ref{Lemma:Character_0}, we have
\(
\sum_{\xi\in \mathbb{F}^*_{q^2}}\alpha(\xi^{q+1})=0\,,
\) 
which implies $D(V_{\beta},\alpha)=0$.

\paragraph{Case $\rho=X_{\varphi}$.}
As it is clear from the character table of $\GL_{2}(\mathbb{F}_q)$ that $C(X_{\varphi},\alpha)=0$, it remains to show $D(X_{\varphi},\alpha)=0$, or equivalently, $D_0\eqdef \sum_{x,y\in\mathbb{F}_q, y\neq 0}|\varphi(\xi_{x,y})+\varphi(\xi_{x,y}^q)|^2\alpha(\xi_{x,y}^{q+1})=0$. We have
\[
D_0 = \underbrace{\sum_{\xi\in \mathbb{F}^*_{q^2}}|\varphi(\xi)+\varphi(\xi^q)|^2\alpha(\xi^{q+1})}_{D_1} 
-\underbrace{\sum_{x\in\mathbb{F}^*_q}|\varphi(\xi_{x,0})+\varphi(\xi_{x,0}^q)|^2\alpha(\xi_{x,0}^{q+1})}_{D_2}\,.
\] 
For $\xi\in \mathbb{F}^*_{q^2}$, we have
\[
|\varphi(\xi)+\varphi(\xi^q)|^2 = (\varphi(\xi)+\varphi(\xi^q))(\varphi(\xi)^{-1}+\varphi(\xi^q)^{-1}) = 2+\varphi(\xi^{q-1})+\varphi(\xi^{1-q})\,.
\]
Hence, since $x^{q-1}=1$ for all $x\in\mathbb{F}_q^*$ and by Corollary \ref{Cor:Character_0},
\[
D_2= \sum_{x\in\mathbb{F}_q^*} (2+\varphi(x^{q-1})+\varphi(x^{1-q}))\alpha(x^{q+1}) = 3\sum_{x\in\mathbb{F}_q^*} \alpha(x^2)=0\,.
\]
The last thing we want to show is that $D_1=0$.
Consider the map $\phi: \mathbb{F}^*_{q^2}\to\coms^*$ given by $\phi(\xi)=\varphi(\xi^{q-1})\alpha(\xi^{q+1})$, which is apparently a character of $\mathbb{F}^*_{q^2}$. We shall see that it is non-trivial. Let $\omega$ be a generator of $\mathbb{F}^*_{q^2}$. Since $\omega^{q^2-1}=1$, we have $\phi(\omega^{q+1})=\alpha(\omega^{(q+1)^2})=\alpha(\omega^{2(q+1)})=\alpha^2(\omega^{q+1})$. On the other hand, $\omega^{q+1}$ is a generator for $\mathbb{F}^*_q$, because $\omega^{k(q+1)}$ with $k=0,1,\ldots, q-2$ are distinct, and $\omega^{(q-1)(q+1)}=1$. Hence, if $\phi(\omega^{q+1})=1$, then $\alpha^2(x)=1$ for all $x\in\mathbb{F}^*_q$. But since $\alpha^2$ is non-trivial, we must have $\phi(\omega^{q+1})\neq 1$, which means $\phi$ is non-trivial. Applying Lemma \ref{Lemma:Character_0}, we get
\(
\sum_{\xi\in \mathbb{F}^*_{q^2} }\varphi(\xi^{q-1})\alpha(\xi^{q+1}) =0\,.
\)
Similarly, we also have
\(
\sum_{\xi\in \mathbb{F}^*_{q^2} }\varphi(\xi^{1-q})\alpha(\xi^{q+1}) =0\,.
\)
Combining with the fact that \(
\sum_{\xi\in \mathbb{F}^*_{q^2}}\alpha(\xi^{q+1})=0\,,
\) which has been proved in the previous case, we have shown $D_1=0$, completing the proof.

\end{proof}


\end{document}